\newcommand{\ket}[1]{|{#1}\rangle}
\newcommand{\ncd}{\newcommand}
\ncd{\QC}{$\mbox{QC}_{\cal{C}}\;$}
\ncd{\QCpr}{${\mbox{QC}_{\cal{C}}}^\prime\;$}
\ncd{\QCns}{$\mbox{QC}_{\cal{C}}$}
\ncd{\QCprns}{${\mbox{QC}_{\cal{C}}}^\prime$}
\ncd{\cskN}{{|\phi_{\{\kappa\} } \rangle}_{{\cal{C}}_N}}
\ncd{\cskNpr}{{|\phi_{\{\kappa^\prime\} } \rangle}_{{\cal{C}}_N}}
\ncd{\cskNtil}{{|\phi_{\{\tilde{\kappa} \} } \rangle}_{{\cal{C}}_N}}
\ncd{\csk}{{|\phi_{\{\kappa\} } \rangle}_{\cal{C}}}
\ncd{\csktil}{{|\phi_{\{\tilde{\kappa} \} } \rangle}_{\cal{C}}}
\ncd{\cskf}{|\phi_{\{\kappa\} } \rangle_{\cal{C}}}
\ncd{\csktilf}{|\phi_{\{\tilde{\kappa} \} } \rangle_{\cal{C}}}
\ncd{\bracsk}{\mbox{}_{\cal{C}}\langle\phi_{\{\kappa\} }|}
\ncd{\bracsktil}{\mbox{}_{\cal{C}}\langle\phi_{\{\tilde{\kappa} \} }|}
\ncd{\nbracsk}{\mbox{}_{\cal{C}}\langle\phi_{\{\kappa\} }}
\ncd{\nbracsktil}{\mbox{}_{\cal{C}}\langle\phi_{\{\tilde{\kappa} \} }}
\ncd{\cs}{|\phi \rangle_{\cal{C}}\;}
\ncd{\csns}{|\phi \rangle_{\cal{C}}}
\ncd{\nbgh}{\text{nbgh}}
\ncd{\Sab}{S^{ab}}
\ncd{\Sba}{S^{ba}}
\ncd{\ds}{\displaystyle}
\ncd{\ovl}{\overline}
\newtheorem{fact}{Fact}
\newtheorem{definition}{Definition}
\newtheorem{lemma}{Lemma}
\newtheorem{theorem}{Theorem}
\newtheorem{corollary}{Corollary}
\newtheorem{remark}{Remark}
\newtheorem{procedure}{Procedure}
\newenvironment{proof}{{\noindent{\bf Proof:}}}{$\hfill\Box$}
\newcommand{\nc}{\newcommand}
\nc{\rnc}{\renewcommand}
\nc{\beq}{\begin{equation}}
\nc{\eeq}{{\end{equation}}}
\nc{\beqa}{\begin{eqnarray}}
\nc{\eeqa}{\end{eqnarray}}
\nc{\lbar}[1]{\overline{#1}}
\nc{\ketbra}[2]{|#1\rangle\!\langle#2|}
\nc{\braket}[2]{\langle#1|#2\rangle}
\nc{\proj}[1]{| #1\rangle\!\langle #1 |}
\nc{\avg}[1]{\langle#1\rangle}
\nc{\Rank}{\operatorname{Rank}}
\nc{\smfrac}[2]{\mbox{$\frac{#1}{#2}$}}
\nc{\Tr}{\operatorname{Tr}}
\nc{\id}{\operatorname{id}}
\nc{\ox}{\otimes}
\nc{\dg}{\dagger}
\nc{\dn}{\downarrow}
\nc{\cA}{{\cal A}}
\nc{\cB}{{\cal B}}
\nc{\cC}{{\cal C}}
\nc{\cD}{{\cal D}}
\nc{\cE}{{\cal E}}
\nc{\cF}{{\cal F}}
\nc{\cG}{{\cal G}}
\nc{\cH}{{\cal H}}
\nc{\cI}{{\cal I}}
\nc{\cJ}{{\cal J}}
\nc{\cK}{{\cal K}}
\nc{\cL}{{\cal L}}
\nc{\cM}{{\cal M}}
\nc{\cN}{{\cal N}}
\nc{\cO}{{\cal O}}
\nc{\cP}{{\cal P}}
\nc{\cR}{{\cal R}}
\nc{\cS}{{\cal S}}
\nc{\cT}{{\cal T}}
\nc{\cX}{{\cal X}}
\nc{\cY}{{\cal Y}}
\nc{\cZ}{{\cal Z}}
\nc{\var}{\operatorname{var}}
\nc{\rar}{\rightarrow}
\nc{\lrar}{\longrightarrow}
\nc{\polylog}{\operatorname{polylog}}
\nc{\RR}{{{\mathbb R}}}
\nc{\CC}{{{\mathbb C}}}
\nc{\FF}{{{\mathbb F}}}
\nc{\NN}{{{\mathbb N}}}
\nc{\ZZ}{{{\mathbb Z}}}
\nc{\PP}{{{\mathbb P}}}
\nc{\QQ}{{{\mathbb Q}}}
\nc{\UU}{{{\mathbb U}}}
\nc{\EE}{{{\mathbb E}}}
\nc{\Icoh}{{I^{\rm coh}}}
\nc{\Qca}{{Q_{\rm ss}}}
\nc{\Qcaa}{{Q^{(1)}_{\rm ss}}}
\nc{\Dcaa}{{D^{(1)}_{{\rm ss}\rightarrow}}}
\nc{\Dca}{{D_{{\rm ss}\rightarrow}}}
\nc{\be}{\begin{equation}}
\nc{\ee}{{\end{equation}}}
\nc{\bea}{\begin{eqnarray}}
\nc{\eea}{\end{eqnarray}}
\nc{\Hom}[2]{\mbox{Hom}(\CC^{#1},\CC^{#2})}
\nc{\rU}{\mbox{U}}
\let\gconcat\sqsubset
\begin{document}

\singlespacing

\title{Graph Concatenation for Quantum Codes}

\author{Salman Beigi}
\affiliation{Institute for Quantum Information, California Institute of
Technology, Pasadena, CA 91125, USA}

\author{Isaac Chuang}
\affiliation{Department of Physics, Massachusetts Institute of
Technology, Cambridge, MA 02139, USA}
\affiliation{Department of Electric Engineering and Computer Science, Massachusetts Institute of
Technology, Cambridge, MA 02139, USA}

\author{Markus Grassl}
\affiliation{Centre for Quantum Technologies, National University of Singapore,
Singapore}

\author{Peter Shor}
\affiliation{Department of Mathematics, Massachusetts Institute of
Technology, Cambridge, MA 02139, USA}

\author{Bei Zeng}
\affiliation{Institute for Quantum Computing, University of Waterloo, Waterloo, ON N2L3G1, Canada}
\affiliation{Department of Combinatorics and Optimization, University of Waterloo, Waterloo, ON N2L3G1, Canada}

\begin{abstract}
Graphs are closely related to quantum error-correcting codes:
every stabilizer code is locally equivalent to a graph code, and
every codeword stabilized code can be described by a graph and a
classical code. For the construction of good quantum codes of
relatively large block length, concatenated quantum codes and
their generalizations play an important role. We develop a
systematic method for constructing concatenated quantum codes
based on ``graph concatenation", where graphs representing the
inner and outer codes are concatenated via a simple graph
operation called ``generalized local complementation." Our method
applies to both binary and non-binary concatenated quantum codes
as well as their generalizations.
\end{abstract}
\date{\today}
\date{February 3, 2010}

\maketitle

%%%%%%%%%%%%%%%%%%%%%%%%%%%%%%%%%%%%%%%%%%%%%%%%%%%%%%%%%%%%%%%%%%%%%%%%%%%%%

\section{Introduction}

The discovery of quantum error-correcting codes (QECCs) and the theory of
fault-tolerant quantum computation (FTQC) have greatly improved the long-term
prospects for quantum communication and computation technology. This general QECC-FTQC framework leads
to a remarkable threshold theorem, which indicates that noise likely poses no
fundamental barrier to the performance of large-scale quantum computations \cite{nielsenchuang}.

Stabilizer codes, a quantum analogue of classical additive codes, are
the most important class of QECCs \cite{Gottesman,GF4}. These codes
have dominated the study of QECC-FTQC for the past ten years because
of their simple construction based on Abelian groups. The recently
introduced codeword stabilized (CWS) quantum codes framework
\cite{CWS1,CWS2,CWS3} provides a unified way of constructing a larger
class of quantum codes, both stabilizer and nonadditive codes. Based
on the CWS framework, many nonadditive codes which outperform
stabilizer codes in terms of coding parameters, have been constructed.

Graphs are closely related to QECCs. It has been shown that every
stabilizer code is local Clifford equivalent to a graph code
\cite{StabGraph, GraphQuadratic}. The basic ingredients of a graph
code are a graph and a finite Abelian group from which the code can
explicitly be obtained \cite{GraphCode}. Every CWS code also has a
canonical form, where it can be fully characterized by a graph
$\mathcal{G}$ and a classical code $\mathcal{C}$ \cite{CWS1,CWS2}. So
a CWS quantum code $\mathcal{Q}$ is usually denoted by
$\mathcal{Q}=(\mathcal{G},\mathcal{C})$. When the classical code
$\mathcal{C}$ is linear, $\mathcal{Q}$ is a graph code; therefore,
graph codes, and hence stabilizer codes, are special cases of CWS
codes. FIG.~\ref{fig:codes} demonstrates the relationship between all
quantum codes, CWS codes and graph (stabilizer) codes.

\begin{figure}[hbt!]
\centering
\includegraphics[width=2in,angle=0]{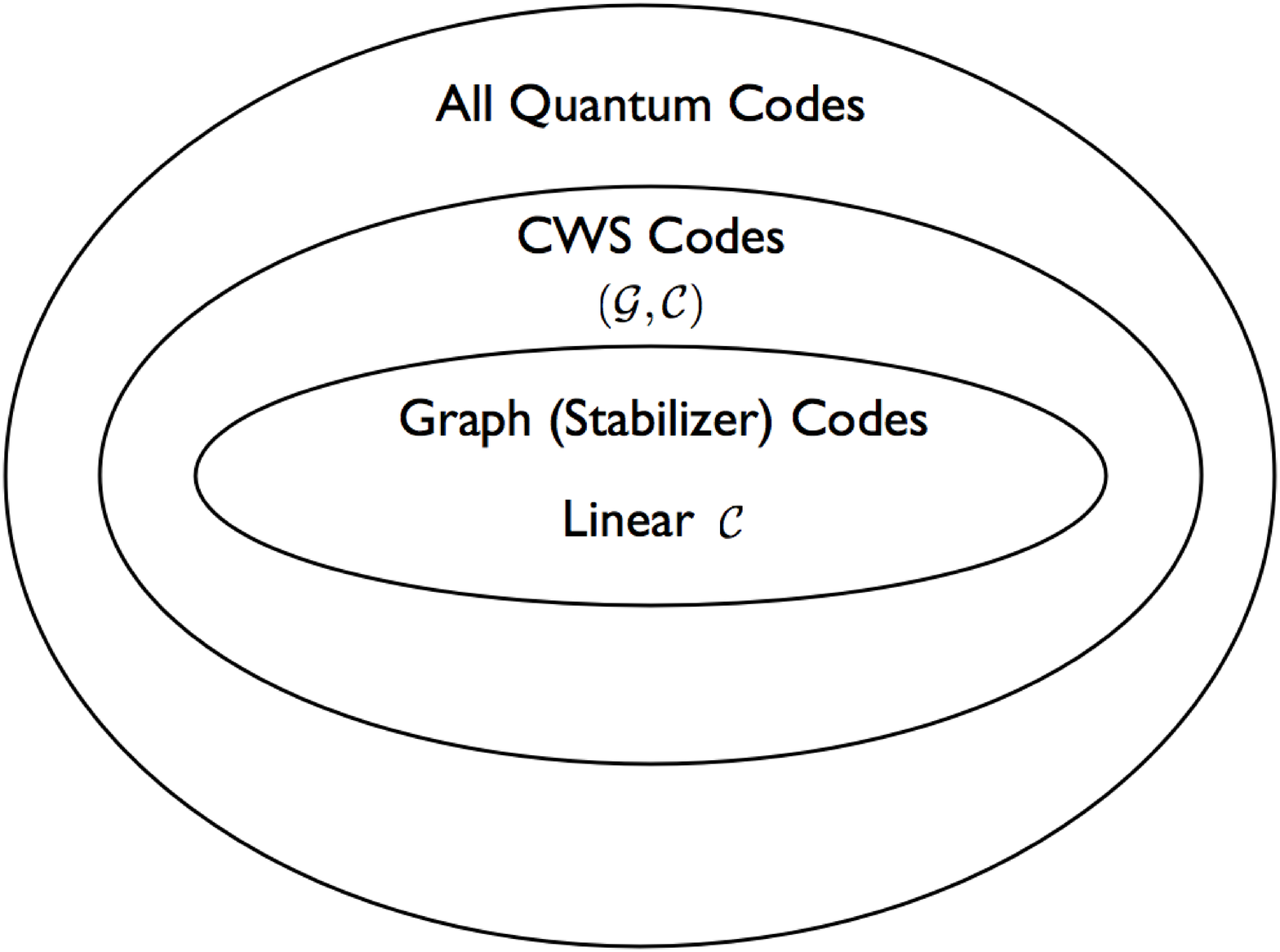}
\caption{Quantum codes}
\label{fig:codes}
\end{figure}

For the construction of good QECCs of relatively large block length
and good asymptotical performance, concatenated quantum codes and
their generalizations play an important role
\cite{nielsenchuang,Gottesman,GSSSZ,GSZ}.  Combined with the CWS
framework, families of good quantum codes, both stabilizer and
nonadditive, have been constructed \cite{GSSSZ,GSZ}. Concatenated
quantum codes also play a central role in FTQC, and the proof of the threshold
theorem \cite{nielsenchuang,Knill,Knill2,Zalka,Aharonov}.  Given the intimate relations between
graphs and quantum codes, a question that arises naturally is whether
there is a graphical description for concatenated quantum codes and
their generalizations.  Moreover, if there were such a description,
for the case where both the inner and outer codes are CWS codes, the
next question is whether the corresponding graph captures the
``quantum nature" of the concatenated code.

Previously, some related results on graph codes have been obtained.
For instance, concatenation of graph codes may be described
graphically by adding some auxiliary vertices. However, it remains
unclear what the final graph after removing those auxiliary vertices
will look like \cite{EntGraph}.  The known examples of generalized
concatenated codes only provide graphical descriptions in the case
where the outer code is of a special form \cite{GSSSZ,GSZ}.  However,
none of these previous works provides a general systematic graphical
description for constructing concatenated quantum codes.  Lack of such
a description seems to indicate that using graphs to describe quantum
codes was a very restricted approach.  This issue will be addressed in
the present work by developing a systematic method for constructing
concatenated quantum codes based on a graph operation called ``graph
concatenation."

To be more precise, we construct the concatenated quantum code
\begin{equation}
\mathcal{Q}_{c}=\mathcal{Q}_{\text{in}}\gconcat\mathcal{Q}_{\text{out}},
\end{equation}
where the inner code
$\mathcal{Q}_{\text{in}}=(\mathcal{G}_{\text{in}},\mathcal{C}_{\text{in}})$
and the outer code
$\mathcal{Q}_{\text{out}}=(\mathcal{G}_{\text{out}},\mathcal{C}_{\text{out}})$
are both CWS codes. We require $\mathcal{C}_{\text{in}}$ to be linear,
but $\mathcal{C}_{\text{out}}$ can be either linear or
nonlinear. Since $\mathcal{C}_{\text{in}}$ is linear,
$\mathcal{Q}_{\text{in}}$ is a graph (stabilizer) code.  We can then
denote the parameters of $\mathcal{Q}_{\text{in}}$ by $[[n,k,d]]_p$.
For simplicity, throughout the paper we assume that $p$ is a prime
number. When $\mathcal{Q}_{\text{in}}$ encodes $k$ qupits, the
corresponding outer code $\mathcal{Q}_{\text{out}}$ of length $n'$
must be a subspace of the Hilbert space $\mathcal{H}_{p^k}^{\otimes
  n'}$, i.e., we can denote the parameters of
$\mathcal{Q}_{\text{out}}$ by $((n',K',d'))_{p^k}$.  When
$\mathcal{C}_{\text{out}}$ is linear, then $\mathcal{Q}_{\text{out}}$
is a graph (stabilizer) code that can also be denoted by
$[[n',k',d']]_{p^k}$, where $K'=p^{kk'}$.

We now state our main result.

\noindent\textbf{Main Result:} {\it The concatenated quantum code $\mathcal{Q}_{c}$ can also be described as a
CWS code, i.e.,
\begin{equation}
\mathcal{Q}_{c}=(\mathcal{G}_{c},\mathcal{C}_{c}),
\end{equation}
and $\mathcal{Q}_{c}$ can be
constructed via the following way:
\begin{eqnarray}
\label{eq:Qc}
\mathcal{Q}_{c}&=&\mathcal{Q}_{\text{in}}\gconcat\mathcal{Q}_{\text{out}}\nonumber\\
&=&(\mathcal{G}_{\text{in}},\mathcal{C}_{\text{in}})\gconcat (\mathcal{G}_{\text{out}},\mathcal{C}_{\text{out}})\nonumber\\
&=&(\mathcal{G}_{\text{in}}\gconcat\mathcal{G}_{\text{out}},\mathcal{C}_{\text{in}}\gconcat\mathcal{C}_{\text{out}}),
\end{eqnarray}
where $\mathcal{C}_c=\mathcal{C}_{\text{in}}\gconcat\mathcal{C}_{\text{out}}$ is
the classical concatenated code with the inner code
$\mathcal{C}_{\text{in}}$ and the outer code $\mathcal{C}_{\text{out}}$, and the
graph concatenation $\mathcal{G}_{\text{in}}\gconcat\mathcal{G}_{\text{out}}$ in
Eq.~(\ref{eq:Qc}) gives the graph $\mathcal{G}_{c}$. And, we show that $\mathcal{G}_{c}$ can be obtained by
concatenating $\mathcal{G}_{\text{in}}$ and $\mathcal{G}_{\text{out}}$ via a
simple graph operation called ``generalized local complementation."}

The main advantage of constructing concatenated quantum codes via
Eq.~(\ref{eq:Qc}) is that the ``quantum part" of this construction is fully
characterized by the graph concatenation
$\mathcal{G}_{\text{in}}\gconcat\mathcal{G}_{\text{out}}$. Providing the rule for
performing this graph concatenation, the problem of constructing
concatenated quantum codes becomes purely classical, i.e.,
constructing the classical concatenated code
$\mathcal{C}_{\text{in}}\gconcat\mathcal{C}_{\text{out}}$. Despite the
restriction that $\mathcal{C}_{\text{in}}$ must be linear, our method of
graph concatenation can be applied to very general situations: both
binary and non-binary concatenated quantum codes, and
their generalizations.

This paper is organized as follows. In Sec.~\ref{sec:Sect_II}, we give a simple
example which informally demonstrates the rule of graph concatenation
via generalized local complementation. In Sec.~\ref{sec:Sect_III}, we
review definitions of graph states, CWS codes, and graph codes.  In
Sec.~\ref{sec:Sect_IV}, for a simple case that the inner code encodes only a single
qupit (i.e., $k=1$) and the outer code is also a graph code, we
provide a description of graph concatenation based on the algebraic
structure of stabilizers. We prove our main result in Sec.~\ref{sec:Sect_V}.
In Sec.~\ref{sec:Sect_VI}, we discuss the application of our main result to the situation of the
generalized concatenated quantum codes. A final discussion and conclusion
is given in Sec.~\ref{sec:Sect_VII}.

\section{A simple example and the rule}\label{sec:Sect_II}

In this section we give a simple example to demonstrate the idea
of our main result given by Eq.~(\ref{eq:Qc}). We first recall
how to describe a CWS code by a graph and a classical code; then
we demonstrate how to represent a concatenated quantum code as an
encoding graph ${\mathcal{G}}_c^{\{enc\}}$ with some auxiliary
vertices. Finally, we show how to obtain the graph
${\mathcal{G}}_c$ of the concatenated code
$\mathcal{Q}_c=(\mathcal{G}_c,\mathcal{C}_c)$, via ``generalized
local complementation" and removal of the auxiliary vertices.

\subsection{The graph and the encoding circuit of a CWS code}

Let us start by taking the outer code to be a simple $n=3$ binary CWS
quantum code
$\mathcal{Q}_{\text{out}}=(\mathcal{G}_{\text{out}},\mathcal{C}_{\text{out}})$, where
$\mathcal{G}_{\text{out}}$ is a triangle given in FIG.~\ref{fig:311}A.
$\mathcal{G}_{\text{out}}$ defines a unique quantum stabilizer state, which we denote
$\ket{\psi}_{\mathcal{G}_{\text{out}}}$.
$\mathcal{C}_{\text{out}}$ is a classical binary code of length $3$, and
can be either linear or nonlinear. A basis of the CWS code
$\mathcal{Q}_{\text{out}}$ can then be chosen as
$\{Z^{\mathbf{c}_{\text{out}}}\ket{\psi}_{\mathcal{G}_{\text{out}}}\}$, for all the codewords
$\mathbf{c}_{\text{out}}\in\mathcal{C}_{\text{out}}$ \cite{CWS1}.

If $\mathcal{C}_{\text{out}}$ is
linear, then $\mathcal{Q}_{\text{out}}$ is a graph (stabilizer) code.
The name ``graph code"  is chosen due to the fact that there is a
graphical way to represent the code $\mathcal{Q}_{\text{out}}$,
which gives both the information of $\mathcal{G}_{\text{out}}$ and
$\mathcal{C}_{\text{out}}$ \cite{GraphCode}.

\begin{figure}[hbt!]
\centering
\includegraphics[width=3.5in,angle=0]{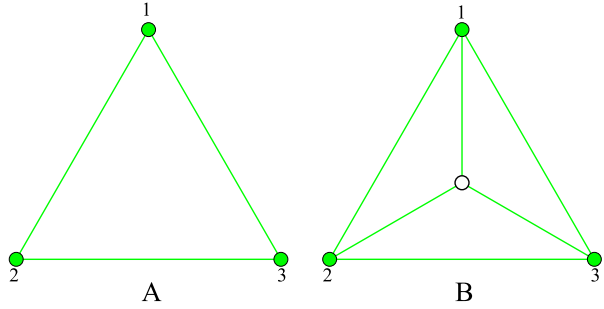}
\caption{A $[[3,1,1]]$ graph code.}
\label{fig:311}
\end{figure}

To show how to represent $\mathcal{C}_{\text{out}}$ graphically, let us
first recall the encoding circuit of
$\mathcal{Q}_{\text{out}}=(\mathcal{G}_{\text{out}},\mathcal{C}_{\text{out}})$. We use
the standard quantum circuit notations, for instance as those given
in \cite{nielsenchuang}. For a CWS code, in general the encoding
can be done by first performing a classical encoder $C_{\text{out}}^{\{enc\}}$ which
encodes the classical code $\mathcal{C}_{\text{out}}$ and then a graph
encoder $G_{\text{out}}^{\{enc\}}$ which encodes the graph state corresponding to the
graph $\mathcal{G}_{\text{out}}$ \cite{CWS1} as shown in the top left
circuit of FIG.~\ref{fig:encode}. Here $q_1,q_2,q_3$ denote
qubits $1,2,3$ in FIG.~\ref{fig:311}A, respectively.

Let us now take $\mathcal{C}_{\text{out}}=\{000,111\}$ which is linear
and gives a $[[3,1,1]]$ stabilizer code. In this case, the
classical encoder $C_{\text{out}}^{\{enc\}}$ which encodes
\begin{equation}
0\rightarrow 000,\ 1\rightarrow 111
\end{equation}
can be implemented by adding an input qubit $q_0$ and performing
controlled-NOT gates with control qubit $q_0$ and target qubits
$q_1,q_2,q_3$, followed by measuring $q_0$ in the Pauli $X$ basis
(which can be done by applying a Hadamard gate on $q_0$ and then
measuring in the Pauli $Z$ basis) as shown in the top right
circuit of FIG.~\ref{fig:encode}. Throughout the paper we always
assume that we get the desired measurement outcome (if not, we
just need to perform some local Pauli operations according to the
actual measurement outcome).

\begin{figure}[htp]
\centerline{
\begin{tabular}[t]{l@{\kern2cm}l}
\includegraphics[width=2.2in]{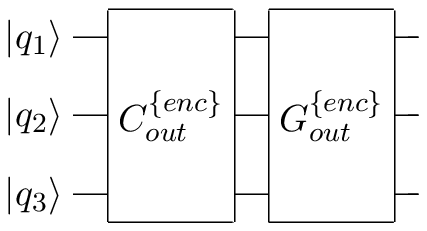}
&
\includegraphics[width=2.5in]{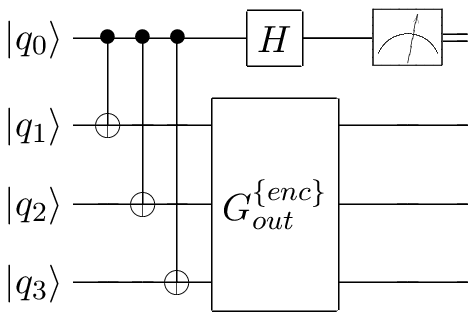}\\[5ex]
\includegraphics[height=1.5in]{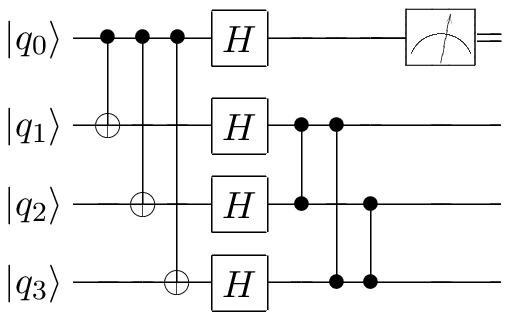}
&
\includegraphics[height=1.5in]{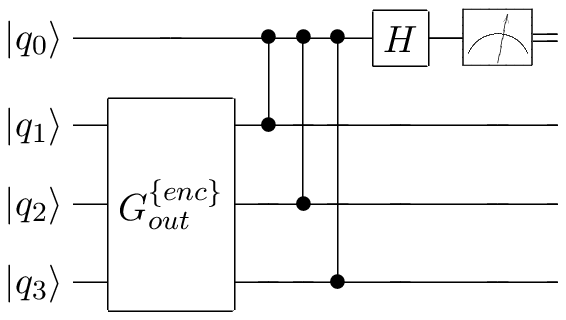}
\end{tabular}
}  
\caption{Encoding circuit for the $[[3,1,1]]$ outer code.}
\label{fig:encode}
\end{figure}

The graph encoder $G_{\text{out}}^{\{enc\}}$ consists of three Hadamard gates on
$q_1,q_2,q_3$ and three controlled-$Z$ gates between them
(controlled-$Z$ gates are applied whenever the corresponding
vertices are adjacent in graph FIG.~\ref{fig:311}A), as shown in
the bottom left circuit of FIG.~\ref{fig:encode}. Now it is clear
that we can ``move" the classical encoder $C_{\text{out}}^{\{enc\}}$ to the right of the
graph encoder $G_{\text{out}}^{\{enc\}}$ by replacing each controlled-NOT by a
controlled-$Z$, as shown in the bottom right circuit of FIG.~\ref{fig:encode}.

In the following we use the convention to modify the encoding circuit
by applying a Hadamard gate on the auxiliary qubit $q_0$ before
applying the classical encoder as shown by the left circuit of
FIG.~\ref{fig:encodecon}.  This modification can be viewed as ``a basis
change" of the input qubit $q_0$, i.e., what the ``classical encoder"
$C_{\text{out}}^{\{enc\}}$ does is then
\begin{equation}
+\rightarrow 000,\ -\rightarrow 111,
\end{equation}
where $\pm$ are the labels of the quantum states
\begin{equation}
\ket{\pm}=\frac{1}{\sqrt{2}}\left(\ket{0}\pm\ket{1}\right).
\end{equation}
This change of basis yields a non-classical encoding circuit, yet we
know that it does not make any difference for the quantum code because
by this new encoding circuit we obtain the same code space as before.
We adopt this convention throughout the paper: for any CWS code
$\mathcal{Q}=(\mathcal{G},\mathcal{C})$, we always assume that the
``classical encoder" $C^{\{enc\}}$ maps ``classical strings" in the
$\{\ket{+},\ket{-}\}$ basis to ``classical codewords" in the
$\{\ket{0},\ket{1}\}$ basis.  We will see later that this convention
naturally leads to a simple rule for graph concatenation.

Moreover, for any $[[n,k,d]]$ CWS code
$\mathcal{Q}=(\mathcal{G},\mathcal{C})$ with linear
$\mathcal{C}$ (i.e., $\mathcal{Q}$ is a graph code),
the encoding of $\mathcal{Q}$ can be applied by first performing the
graph encoder $G^{\{enc\}}$, and then the classical encoder $C^{\{enc\}}$
as follows: use $k$ input qubits; apply Hadamard on each of the
$k$ qubits; replace each controlled-NOT gate performed in the
original classical encoder $C^{\{enc\}}$ with a controlled-$Z$ gate;
finally measure each of the $k$ auxiliary qubits in the Pauli $X$
basis.

This encoding circuit can be represented graphically: add the $k$
input qubits as $k$ new vertices to the graph $\mathcal{G}$; whenever
a controlled-$Z$ is applied in the encoding circuit between an input
vertex $v$ and a vertex $v'$ of $\mathcal{G}$, add an edge between
them \cite{GraphCode}.  The corresponding graph representing the graph
code $\mathcal{Q}=(\mathcal{G},\mathcal{C})$ is denoted by
$\mathcal{G}^{\mathcal{C}}$.

Therefore, for the outer code
$\mathcal{Q}_{\text{out}}=(\mathcal{G}_{\text{out}},\mathcal{C}_{\text{out}})$
with graph $\mathcal{G}_{\text{out}}$ given in FIG.~\ref{fig:311}A,
where $\mathcal{C}_{\text{out}}=\{000,111\}$ is linear, we can insert
the input qubit $q_0$ as a new vertex (denoted by $0$) to the graph
$\mathcal{G}$ (the middle white vertex in FIG.~\ref{fig:311}B). We
then add the edges between $0$ and $1,2,3$ according to the encoding
circuit given by the bottom right circuit in FIG.~\ref{fig:encode}
(see FIG.~\ref{fig:311}B). This graph is then denoted by
$\mathcal{G}_{\text{out}}^{\mathcal{C}_{\text{out}}}$. There are two
types of vertices in
$\mathcal{G}_{\text{out}}^{\mathcal{C}_{\text{out}}}$: the input
vertices (the middle white vertex); and the output vertices (vertices
$1,2,3$).

\subsection{The encoding graph of a concatenated quantum code}
\label{sec:encodinggraph}

Now we consider the inner code
$\mathcal{Q}_{\text{in}}=(\mathcal{G}_{\text{in}},\mathcal{C}_{\text{in}})$. Notice
that due to our restriction for Eq.~(\ref{eq:Qc}),
$\mathcal{C}_{\text{in}}$ must be linear. So $\mathcal{Q}_{\text{in}}$ is a graph code
and has a graph representation $\mathcal{G}_{\text{in}}^{\mathcal{C}_{\text{in}}}$.
For simplicity we take
$\mathcal{Q}_{\text{in}}$ be a $[[2,1,1]]$ stabilizer code, which is
represented by the graph of FIG.~\ref{fig:concatenate}A on the
vertices $1, 4, 5$. (The subgraph of the vertices $4,5$
represents $\mathcal{G}_{\text{in}}$, and $1$ is the input qubit
describing the classical encoder of $\mathcal{C}_{\text{in}}$; hence,
$\mathcal{C}_{\text{in}}=\{00,11\}$.)

\begin{figure}[hbt!]
\centering
\includegraphics[width=3.5in,angle=0]{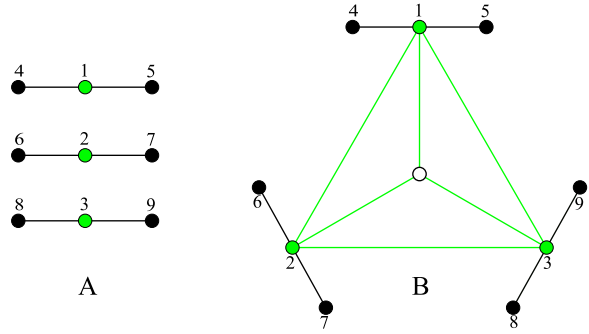}
\caption{Concatenated graph code.}
\label{fig:concatenate}
\end{figure}

To construct the concatenated code
$\mathcal{Q}_{c}=\mathcal{Q}_{\text{in}}\gconcat\mathcal{Q}_{\text{out}}$,
since the outer code has length $n'=3$, we must take three copies of
$\mathcal{G}_{\text{in}}$, to encode qubits $1,2,3$ as shown in
FIG.~\ref{fig:concatenate}A. The graphical representation of the
concatenation procedure is shown in FIG.~\ref{fig:concatenate}B.
Here, in the outer code, the middle white vertex is encoded into
vertices $1,2,3$. Then each of these vertices is encoded using the
inner code: vertex $1$ into vertices $4,5$; vertex $2$ into vertices
$6,7$; and vertex $3$ into vertices $8,9$.

We call this graphical representation of the concatenated code
$\mathcal{Q}_c$ with a linear $\mathcal{C}_{\text{out}}$
the encoding graph of $\mathcal{Q}_c$ and denote it by
${\mathcal{G}}_{\mathcal{Q}_c}^{\mathcal{C}_{\text{out}}\{enc\}}$. We
have three types of vertices in
${\mathcal{G}}_{\mathcal{Q}_c}^{\mathcal{C}_{\text{out}}\{enc\}}$: the
input vertices (the middle white vertex in our example); auxiliary
vertices which are in the subgraph $\mathcal{G}_{\text{out}}$ (vertices $1,2,3$); and
output vertices which are in the copies of $\mathcal{G}_{\text{in}}$
(vertices $4,5,6,7,8,9$). In general, if $\mathcal{C}_{\text{out}}$ is
nonlinear, similarly we can have an encoding graph of $\mathcal{Q}_c$
and denote it by ${\mathcal{G}}_{\mathcal{Q}_c}^{\{enc\}}$, which in our example is the subgraph of
FIG.~\ref{fig:concatenate}B without the middle white vertex. Therefore,
we only have two types of vertices in
${\mathcal{G}}_{\mathcal{Q}_c}^{\{enc\}}$: the auxiliary  vertices (vertices $1,2,3$);
and the output vertices(vertices $4,5,6,7,8,9$).

The encoding circuit of the concatenated code $\mathcal{Q}_{c}$ is
given by the right circuit in FIG.~\ref{fig:encodecon}, where
$G_{\text{in}}^{\{enc\}}$ denotes the graph encoder for the graph of the inner code
$\mathcal{G}_{\text{in}}$. To obtain this encoding circuit, we should
recall our convention of adding a Hadamard gate before performing
the classical encoder.

\begin{figure}[htp]
\centerline{
\vtop{\vskip0pt\hbox{\includegraphics[width=2.55in]{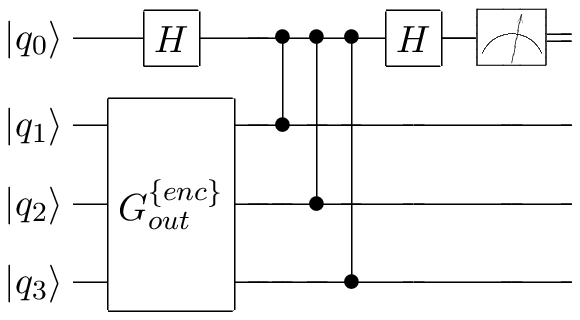}}}
\kern1cm
\vtop{\vskip0pt\hbox{\includegraphics[width=3.25in]{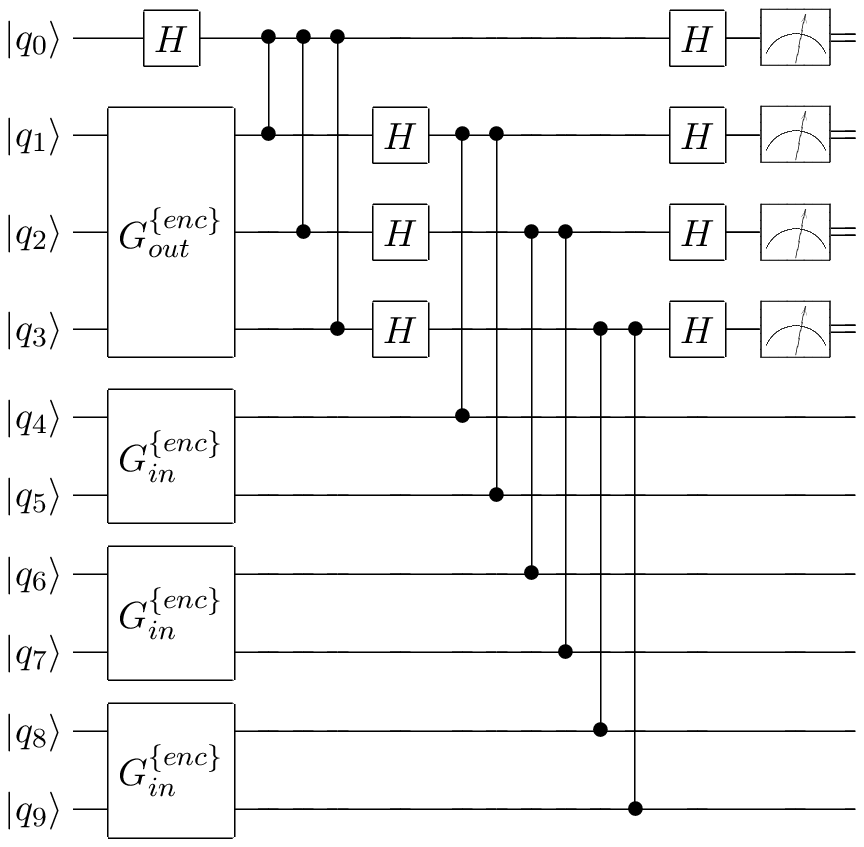}}}
}
\caption{Encoding circuit for the concatenated code with linear outer code.}
\label{fig:encodecon}
\end{figure}

In general, if $\mathcal{C}_{\text{out}}$ is nonlinear, the encoding
circuit of the concatenated code $\mathcal{Q}_{c}$ is given by the
right circuit in FIG.~\ref{fig:EnConGen}, where
${G}_{\text{in}}^{\{enc\}}$ denotes the graph encoder for the graph of
the inner code $\mathcal{G}_{\text{in}}$. Again, note that we add a
Hadamard gate before performing the classical encoder. Also, we need
to keep in mind that the ``classical encoder"
$C_{\text{out}}^{\{enc\}}$ maps ``classical strings" in the
$\{\ket{+},\ket{-}\}$ basis to ``classical codewords" in the
$\{\ket{0},\ket{1}\}$ basis.

\begin{figure}[htp]
\centerline{
\vtop{\vskip0pt\hbox{\includegraphics[width=1.65in]{EnOutGen.eps}}}
\kern1cm
\vtop{\vskip0pt\hbox{\includegraphics[width=3.0in]{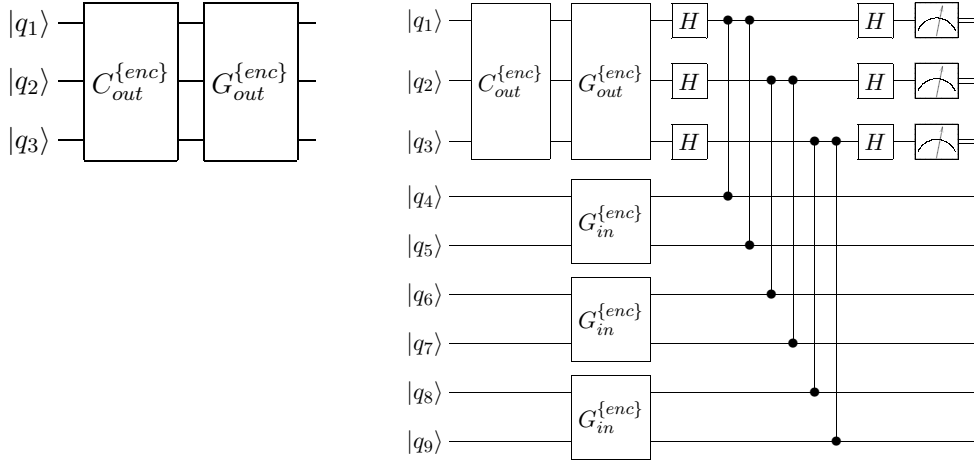}}}
}
\caption{Encoding circuit for the concatenated code with a general outer code.}
\label{fig:EnConGen}
\end{figure}

\subsection{The rule of the generalized local complementation for graph concatenation}

As shown in Sec.~\ref{sec:encodinggraph}, given a concatenated quantum
code
$\mathcal{Q}_{c}=\mathcal{Q}_{\text{in}}\gconcat\mathcal{Q}_{\text{out}}$
with a CWS outer code
$\mathcal{Q}_{\text{out}}=(\mathcal{G}_{\text{out}},\mathcal{C}_{\text{out}})$
and a graph inner code
$\mathcal{Q}_{\text{out}}=(\mathcal{G}_{\text{out}},\mathcal{C}_{\text{out}})$,
it is easy to get the encoding graph
${\mathcal{G}}_{\mathcal{Q}_c}^{\{enc\}}$. We claim (and will show
later in Sec.~\ref{sec:generalized_LC}) that the concatenated quantum
code
$\mathcal{Q}_{c}=\mathcal{Q}_{\text{in}}\gconcat\mathcal{Q}_{\text{out}}$
can also be described as a CWS code
$\mathcal{Q}_{c}=(\mathcal{G}_{c},\mathcal{C}_{c})$.  Therefore, the
real description that we want for the concatenated code
$\mathcal{Q}_{c}$ is a graph $\mathcal{G}_c$ and a classical code
$\mathcal{C}_c$ such that
$\mathcal{Q}_c=(\mathcal{G}_c,\mathcal{C}_c)$.  Also, we want the
classical code be given by the ``classical concatenation" of the
classical code of the inner and outer code, i.e.,
$\mathcal{C}_{c}=\mathcal{C}_{\text{in}}\gconcat\mathcal{C}_{\text{out}}$,
so that the quantum part can be fully taken care of by the graph
concatenation
$\mathcal{G}_{c}=\mathcal{G}_{\text{in}}\gconcat\mathcal{G}_{\text{out}}$.
Furthermore, this graph concatenation should be given by some simple
graph operations on the encoding graph
${\mathcal{G}}_{\mathcal{Q}_c}^{\{enc\}}$, i.e., we want a general
rule which gives
\begin{equation}
{\mathcal{G}}_{\mathcal{Q}_c}^{\{enc\}}\rightarrow\mathcal{G}_c,
\end{equation}
or
\begin{equation}
{\mathcal{G}}_{\mathcal{Q}_c}^{\mathcal{C}_{\text{out}}\{enc\}}\rightarrow\mathcal{G}_c^{\mathcal{C}_c},
\end{equation}
if the outer code is also a graph code. As discussed in the main
result, such a general rule does exist and we call it ``generalized
local complementation."

In this section, we demonstrate the rule of generalized local
complementation for graph concatenation by a simple example, starting
from the encoding graph given by FIG.~\ref{fig:concatenate}B. Keep in
mind that we want
\begin{equation}
\label{eq:classiccon}
\mathcal{C}_{c}=\mathcal{C}_{\text{in}}\gconcat\mathcal{C}_{\text{out}}
  =\{00,11\}\gconcat\{000,111\}=\{00\,00\,00,11\,11\,11\}.
\end{equation}

\begin{remark}
To obtain the graph $\mathcal{G}_{c}$ (or
$\mathcal{G}_{c}^{\mathcal{C}_c}$) from FIG.~\ref{fig:concatenate}B, a
naive way is to calculate the stabilizer state $\ket{\psi}$ of the
output vertices after performing Pauli $X$ measurements on all the
input and the auxiliary vertices in the encoding circuit (given by the
right graph of FIG.~\ref{fig:EnConGen} or the right graph of
FIG.~\ref{fig:encodecon}), and then to represent it as a graph state
(or a graph code).  Notice that in general it might not be possible to
represent the very code as a graph $\mathcal{G}_{c}$ (or
$\mathcal{G}_{c}^{\mathcal{C}_c}$) does not necessarily exist.
Indeed, any stabilizer state is local Clifford equivalent to a graph
state (which is not necessarily unique), any any stabilizer code is
local Clifford equivalent to a graph code, and any CWS code is local
Clifford equivalent to a standard form given by a graph and a
classical code.  However, for a general CWS code, those local Clifford
operations transform both the graph and the classical code
\cite{CWS2}.  Therefore, it is not clear that such a graph
$\mathcal{G}_c$ exists such that the classical code is obtained by
concatenation, i.e., $\mathcal{C}_{c}=\mathcal{C}_{\text{in}}\gconcat\mathcal{C}_{\text{out}}$.
\end{remark}

Now we specify the rule of generalized local complementation: given a
graph $\mathcal{G}$, for any vertex $i$, denote the set of its
adjacent vertices by $N(i)$. Also let $S$ be a subset of vertices
disjoint from $N(i)$. A generalized local complementation on $i$ with
respect to $S$ is to replace the bipartite subgraph induced on
$N(i)\cup S$ with its complement.

For an example, the generalized local complementation of the graph
shown by FIG.~\ref{fig:localcomp}A on vertex $1$ with respect to $S=\{6,7\}$
results in the graph shown by FIG.~\ref{fig:localcomp}B. Here 
$N(1)=\{2,3,4\}$. The generalized local complementation replaces the 
bipartite subgraph of vertices $\{2,3,4,6,7\}$ and edges $\{(3,6),(4,7)\}$ with
its complement (i.e. another bipartite subgraph of 
vertices $\{2,3,4,6,7\}$ and edges $\{(2,6),(4,6),(2,7),(3,7)\}$).

\begin{figure}[hbt!]
\centering
\includegraphics[width=3.5in,angle=0]{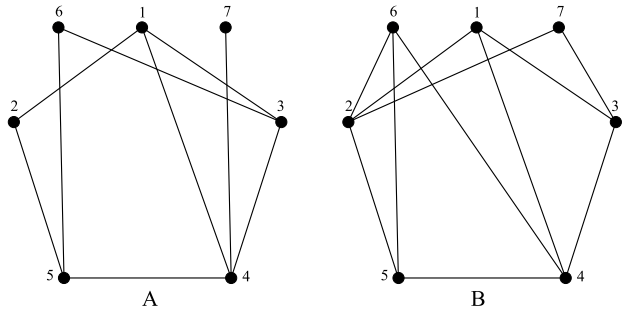}
\caption{Generalized local complementation}
\label{fig:localcomp}
\end{figure}

Now we are ready to specify the rule of graph concatenation
in terms of generalized local complementation. (Recall that our goal is
to obtain $\mathcal{G}_c$ from ${\mathcal{G}}_{\mathcal{Q}_c}^{\{enc\}}$,
or $\mathcal{G}_c^{\mathcal{C}_c}$ from
${\mathcal{G}}_{\mathcal{Q}_c}^{\mathcal{C}_{\text{out}}\{enc\}}$.)

\begin{procedure} (Graph Concatenation via Generalized Local Complementation)
\label{pro:main}

\begin{enumerate}
\item Given the graph ${\mathcal{G}}_{\mathcal{Q}_c}^{\{enc\}}$ (or
  ${\mathcal{G}}_{\mathcal{Q}_c}^{\mathcal{C}_{\text{out}}\{enc\}}$),
  for each auxiliary vertex $i$, define $S_i$ to be the set of all
  output vertices which are adjacent to $i$.
\item For each auxiliary vertex $i$, delete all the edges which connect $i$ to vertices in $S_i$.
\item For each auxiliary vertex $i$, perform generalized local
  complementation on $i$ with respect to $S_i$. Note that the order in
  which we apply those generalized local complementations does not
  matter since the whole procedure on all auxiliary vertices finally
  gives the same graph.
\item Remove all the auxiliary vertices.
\end{enumerate}
\end{procedure}

\begin{figure}[hbt!]
\centering
\includegraphics[width=4.9in,angle=0]{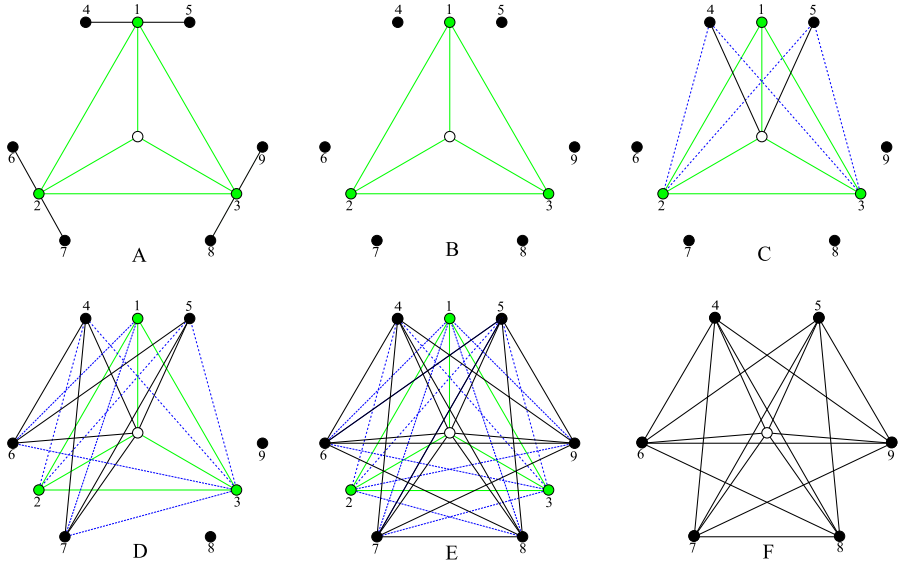}
\caption{Generalized local complementation for graph concatenation}
\label{fig:encloccomp}
\end{figure}

To demonstrate the above rules, let us start from the encoding graph
given in FIG.~\ref{fig:concatenate}B for the concatenated quantum code
with the outer code $[[3,1,1]]$ given in FIG.~\ref{fig:311}B and the
inner code $[[2,1,1]]$ given in FIG.~\ref{fig:concatenate}A. For
convenience we redraw FIG.~\ref{fig:concatenate}B in
FIG.~\ref{fig:encloccomp}A. Now from FIG.~\ref{fig:encloccomp}A we get
$S_1=\{4,5\}$, $S_2=\{6,7\}$, and $S_3=\{8,9\}$. Deleting all the
edges which connect each auxiliary vertex $i$ and vertices in $S_i$
(for $i=1,2,3$) results in FIG.~\ref{fig:encloccomp}B.  Performing
local complementation on auxiliary vertex $1$ with respect to
$S_1=\{4,5\}$ leads to FIG.~\ref{fig:encloccomp}C, where we use
dashed blue lines to show the edges that we add between output and
auxiliary vertices, and solid black lines to show other
edges. Performing local complementation in FIG.~\ref{fig:encloccomp}C
on auxiliary vertex $2$ with respect to $S_2=\{6,7\}$ leads to
FIG.~\ref{fig:encloccomp}D, and performing local complementation in
FIG.~\ref{fig:encloccomp}D on auxiliary vertex $3$ with respect to
$S_3=\{8,9\}$ leads to FIG.~\ref{fig:encloccomp}E. Removing all the
auxiliary vertices in FIG.~\ref{fig:encloccomp}E gives
FIG.~\ref{fig:encloccomp}F, which is the graph
$\mathcal{G}_c^{\mathcal{C}_c}$.

From FIG.~\ref{fig:encloccomp}F one can easily see that the rule for
concatenation of the classical codes given in
Eq.~(\ref{eq:classiccon}) holds. In general, the outer classical code
$\mathcal{C}_{\text{out}}$ is nonlinear, so we do not have the input
vertices in the encoding graph of the concatenated code. However, we
can still go through the whole procedure of the generalized local
complementations on auxiliary vertices to obtain the graph
$\mathcal{G}_{c}$. In our example this procedure is demonstrated by
subgraphs of FIG.~\ref{fig:encloccomp}A through
FIG.~\ref{fig:encloccomp}F without the middle white vertex.

\section{GRAPH STATES, CWS CODES, AND GRAPH CODES}
\label{sec:Sect_III}
In this section we review the stabilizer formalism to fix our notation
especially in the non-binary case; then we define CWS codes, graph
codes, and finally describe their encoding circuits.

\subsection{The generalized Pauli group}

Let $p$ be a prime number and $\mathbb{F}_p$ be the field of $p$
elements. A qupit is a $p$-level quantum system whose Hilbert space is
represented by the orthonormal basis $\{\vert r\rangle:\, r\in
\mathbb{F}_p\} = \{\vert 0\rangle, \vert 1\rangle, \dots , \vert
p-1\rangle \}$. Let $\omega = e^{2\pi i/p}$ be a $p$-th root of
unity. The (generalized) Pauli matrices $X$ and $Z$ are defined as
follows.

\begin{equation}
X\vert r\rangle = \vert r+1\bmod p\rangle,
\end{equation}
\begin{equation}
Z\vert r\rangle = \omega^{r}\vert r\rangle.
\end{equation}
It is clear that $X^p=Z^{p}=I$, and then we can consider the
operators $X^{a}$ and $Z^b$ where $a, b\in \mathbb{F}_p$. We have
$Z^b X^a=\omega^{ab}X^aZ^b$; therefore, $X^aZ^b$ and
$X^{a'}Z^{b'}$ commute iff $ab'-ba'=0$ (see e.g. \cite{GraphQudit} for
more details).

The group generated by the (generalized) Pauli matrices $X$ and $Z$ is
$ \{\omega^c X^{a}Z^{b}:\, a,b,c\in \mathbb{F}_p\}$ and is called the
(generalized) Pauli group. Notice that, in the binary case ($p=2$) the
Pauli group is generated by Pauli matrices $\sigma_x$ and $\sigma_z$
together with $iI$ ($i=\sqrt{-1}$).

Let $n$ be an arbitrary positive integer.  For vectors
$\mathbf{a}=(a_1, \dots, a_n)$ and $\mathbf{b}=(b_1, \dots, b_n)$ in
$\mathbb{F}_p^n$ define

\begin{equation}
X^{\mathbf{a}} = X^{a_1}\otimes \cdots \otimes X^{a_n}
\end{equation}
and
\begin{equation}
Z^{\mathbf{b}} = Z^{b_1}\otimes \cdots \otimes Z^{b_n}.
\end{equation}
Again, two Pauli matrices $X^{\mathbf{a}}Z^{\mathbf{b}}$ and
$X^{\mathbf{a'}}Z^{\mathbf{b'}}$ commute if and only if
$\mathbf{a}\mathbf{b'}-\mathbf{a'}\mathbf{b}=0$, where
$\mathbf{c}\mathbf{d}= c_1d_1+\cdots +c_nd_n$ is the usual inner
product on $\mathbb{F}_p^n$.

For simplicity, a Pauli operator $X^{\mathbf{a}}Z^{\mathbf{b}}$ is
denoted by the vector $(\mathbf{a}\, \vert\, \mathbf{b})$ of
length $2n$. Thus two Pauli operators
$X^{\mathbf{a}}Z^{\mathbf{b}}$ and
$X^{\mathbf{a'}}Z^{\mathbf{b'}}$ commute iff their corresponding
vectors are orthogonal with respect to the ``symplectic inner
product" defined by

\begin{equation}\label{eq:symp}
(\mathbf{a}\, \vert\, \mathbf{b}) * (\mathbf{a'}\, \vert\,
\mathbf{b'}) =  \mathbf{a}\mathbf{b'}-\mathbf{a'}\mathbf{b}.
\end{equation}

\subsection{Stabilizer states}

It is easy to see that for a Pauli matrix
$g=\omega^{c}X^{\mathbf{a}}Z^{\mathbf{b}}$, $g^{p}=I$. (In the case
$p=2$, the statement might only be true after replacing $G$ by $ig$ in
order to get $g^2=I$.  Having this in mind, there is no true
difference between the binary and non-binary case in the rest of the
paper.)  Therefore, the eigenvalues of $g$ are all $p$-th root of
unity. In fact, if $(\mathbf{a}\,\vert\, \mathbf{b})$ is non-zero,
then for any $i$, $\omega^i$ is an eigenvalue of $g$, and the
multiplicity of each of these $p$ eigenvalues is equal to $p^{n-1}$
\cite{GraphQudit}.

Now suppose $g_1 = \omega^{c_1}X^{\mathbf{a^1}}Z^{\mathbf{b^1}} ,
\dots ,g_k=\omega^{c_k}X^{\mathbf{a^k}}Z^{\mathbf{b^k}}$ are $k$ Pauli
matrices which pairwise commute and such that the subgroup generated
by any $k-1$ of them does not contain the other one.  Additionally, we
require that the group generated by $g_1,\dots,g_k$ does not contain a
non-trivial multiple of identity.  Since $g_1, \dots , g_k$ commute,
they can be diagonalized simultaneously.

\begin{lemma}\label{lem:eigen}
The common eigenspace of all $g_i$'s with eigenvalue $1$ is a
$p^{n-k}$-dimensional subspace.
\end{lemma}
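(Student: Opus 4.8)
The plan is to prove the lemma by a standard counting (dimension-tallying) argument, iterating over the generators one at a time. The key fact I would invoke is the eigenvalue structure of a single nontrivial Pauli operator quoted just above the lemma: if $g = \omega^c X^{\mathbf a} Z^{\mathbf b}$ with $(\mathbf a\,\vert\,\mathbf b)$ nonzero and $g^p = I$, then each of the $p$ eigenvalues $\omega^0,\dots,\omega^{p-1}$ of $g$ occurs with multiplicity exactly $p^{n-1}$. The strategy is then to apply this repeatedly, restricting at each stage to the eigenspace of eigenvalue $1$ carved out so far.

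First I would set up the induction. Let $V_0 = (\CC^p)^{\otimes n}$, which has dimension $p^n$, and for $1 \le j \le k$ let $V_j$ be the common $1$-eigenspace of $g_1,\dots,g_j$. The claim is $\dim V_j = p^{n-j}$ for all $j$, and $V_k$ is the subspace in the statement. For the base case $j=1$: since the group generated by $g_1,\dots,g_k$ contains no nontrivial multiple of the identity, in particular $g_1$ is not a scalar, so $(\mathbf a^1\,\vert\,\mathbf b^1) \ne 0$; by the quoted eigenvalue fact its $1$-eigenspace has dimension $p^{n-1}$. For the inductive step, suppose $\dim V_{j-1} = p^{n-j+1}$. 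Since the $g_i$ pairwise commute, $g_j$ maps $V_{j-1}$ to itself, so we may restrict $g_j$ to $V_{j-1}$; the $1$-eigenspace of this restriction is precisely $V_j$. Now I need that on $V_{j-1}$ the restricted operator $g_j$ still has all $p$ of its $p$-th roots of unity as eigenvalues, each with equal multiplicity $p^{n-j}$ — then $\dim V_j = \dim V_{j-1}/p = p^{n-j}$, completing the induction, and $j=k$ gives the lemma.

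The main obstacle is exactly that last point: showing the eigenvalues of $g_j\big|_{V_{j-1}}$ are equidistributed among all $p$ roots of unity rather than, say, all concentrated on one value (which would happen if $g_j$ acted as a scalar on $V_{j-1}$). This is where the hypothesis that no $k-1$ of the generators generate the remaining one, together with the no-nontrivial-scalar condition, must be used. The clean way to get equidistribution is to exhibit, for each $\ell \in \FF_p$, a unitary $U_\ell$ that preserves $V_{j-1}$ and conjugates $g_j$ to $\omega^{\ell} g_j$; such a $U_\ell$ permutes the eigenspaces of $g_j\big|_{V_{j-1}}$ cyclically and forces equal multiplicities. A natural candidate is a Pauli operator $U$ with $U g_j U^\dagger = \omega\, g_j$ (one exists because $g_j$ is non-scalar, so some $X^{\mathbf c}Z^{\mathbf d}$ has nonzero symplectic product with it) — but such a $U$ need not commute with $g_1,\dots,g_{j-1}$, hence need not preserve $V_{j-1}$. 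The fix is to choose $U$ to lie in the centralizer of $\langle g_1,\dots,g_{j-1}\rangle$ while still anticommuting (in the $\omega$-graded sense) with $g_j$; this is possible precisely because $g_j$ is independent of $g_1,\dots,g_{j-1}$ (and no scalars appear), which guarantees that the symplectic-orthogonal complement of $\mathrm{span}\{g_1,\dots,g_{j-1}\}$ is not contained in the symplectic-orthogonal complement of $g_j$. I would phrase this last step as a short linear-algebra lemma over $\FF_p$ with the symplectic form: if vectors $v_1,\dots,v_j$ are linearly independent and pairwise orthogonal, then there is a vector $w$ orthogonal to $v_1,\dots,v_{j-1}$ but not to $v_j$. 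Granting that, $U = X^{\mathbf c}Z^{\mathbf d}$ for the corresponding $w = (\mathbf c\,\vert\,\mathbf d)$ commutes with each $g_i$ for $i<j$ (so preserves $V_{j-1}$) and satisfies $U g_j U^\dagger = \omega^{s} g_j$ for some $s \ne 0$; since $p$ is prime, powers of $U$ realize all of $\FF_p$, giving the cyclic permutation of eigenspaces and hence $\dim V_j = p^{-1}\dim V_{j-1}$.

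A remark for the write-up: the induction also needs that the hypotheses are inherited at each stage — i.e. that $g_1,\dots,g_{j-1}$ together with the restricted picture still satisfy ``no generator is generated by the others'' and ``no nontrivial scalar'' — but this is immediate since any relation among a sub-collection would already violate the global hypothesis, so no real work is needed there; the whole content of the proof is the symplectic-complement lemma and the conjugation/eigenspace-permutation argument above.
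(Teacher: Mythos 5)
Your argument is correct, and it is worth noting that the paper itself does not prove this lemma at all: it simply cites \cite{nielsenchuang} for the binary case and \cite{GraphQudit} for the non-binary case. What you supply is essentially the standard self-contained argument behind those references: induct on the generators, restrict $g_j$ to the common $1$-eigenspace $V_{j-1}$ of $g_1,\dots,g_{j-1}$ (invariant by commutativity), and force equidistribution of the $p$ eigenvalues of $g_j\big|_{V_{j-1}}$ by conjugating with a Pauli $U=X^{\mathbf c}Z^{\mathbf d}$ whose symplectic vector is orthogonal to the vectors of $g_1,\dots,g_{j-1}$ but not to that of $g_j$. Your symplectic-complement lemma is exactly right and follows from nondegeneracy of the form (if the orthogonal complement of $\mathrm{span}\{v_1,\dots,v_{j-1}\}$ were contained in $v_j^{\perp}$, taking complements would put $v_j$ in that span). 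Two small points to make explicit in a write-up: (i) the linear independence of the symplectic vectors $v_1,\dots,v_k$ is where both group-theoretic hypotheses enter --- a dependence $v_j=\sum_{i\neq j}c_iv_i$ would make $g_j\prod_{i\neq j}g_i^{-c_i}$ a multiple of the identity, contradicting either the no-nontrivial-scalar condition (if the phase is nontrivial) or the condition that no generator lies in the subgroup generated by the others (if it is trivial); this is the same observation the paper uses later to assert that the matrix $M$ has rank $k$; (ii) your separate base case is unnecessary, since the $j=1$ step is just the same conjugation argument with $V_0$ the full space (and the fact that $g_1$ is not a scalar again uses both hypotheses, covering the case $g_1=I$ as well as $g_1=\omega^cI$). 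With those details spelled out, the proof is complete, and its advantage over the paper's treatment is precisely that it is self-contained and uniform in $p$.
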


This lemma is a well-known fact in the binary case
\cite{nielsenchuang}, and a proof for the non-binary case can be
found in \cite{GraphQudit}.

Representing the operators $g_1, \dots , g_k$ by the vectors of
length $2n$, we obtain the $k\times(2n)$ matrix
\begin{equation}
M= \begin{pmatrix}
   &  \mathbf{a^1}&  & \vline &  & \mathbf{b^1} &   \\
   &  \mathbf{a^2}&  & \vline &  & \mathbf{b^2} &   \\
   &  \vdots    &  & \vline &  & \vdots     &   \\
   &  \mathbf{a^k} & & \vline &  & \mathbf{b^k} &
\end{pmatrix},
\end{equation}
of rank $k$ (because $g_i$ is not in the subgroup generated by the
rest of $g_j$'s).  The rows of $M$ are mutually orthogonal with
respect to the symplectic inner product (see Eq.  (\ref{eq:symp})).

If we consider $n$ generators, or equivalently an $n\times (2n)$
full-rank self-orthogonal matrix $M$, Lemma~\ref{lem:eigen} implies
that the common eigenspace of all $g_i$'s with eigenvalue $1$ is a
one-dimensional subspace.  Hence there is a unique (up to a scaler)
non-zero vector $\vert \psi\rangle$ such that $g_i\vert \psi\rangle
=\vert \psi\rangle$. In fact, if we consider the group $\mathcal{S}$
generated by the $g_i$'s, for any $h\in \mathcal{S}$ we have $h\vert
\psi\rangle = \vert \psi\rangle$.  The group $\mathcal{S}$, which is a
maximal Abelian subgroup of the Pauli group modulo its center, is
called a {\it stabilizer group}, and the state $\vert \psi\rangle$ is
called a {\it stabilizer state}.

Notice that for a stabilizer state $\vert \psi\rangle$, its
stabilizer group $\mathcal{S}$ is unique; however, $\{g_1, \dots ,
g_n\}$ is just some set of generators of $\mathcal{S}$.  Suppose
$\{h_1, \dots ,h_n\}$ is another generating set of $\mathcal{S}$.
Then for any $i$ there is $u_{ij}\in \mathbb{F}_p$ such that $h_i =
g_1^{u_{i1}}\cdots g_n^{u_{\text{in}}}$.  As a result, the vector
corresponding to $h_i$ is equal to $(u_{i1}, \dots , u_{\text{in}} )
M$.

\begin{lemma}\label{lem:generator} 
Any set of generators of the stabilizer group $\mathcal{S}$ with $k$
generators can be represented by a matrix $UM$, where $U$ is an
invertible $k\times k$ matrix.
\end{lemma}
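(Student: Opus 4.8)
The plan is to reduce the statement to a change-of-basis fact in linear algebra, using that $\mathcal{S}$, being an abelian $p$-group of exponent $p$ with $p$ prime and meeting the center of the Pauli group trivially, is naturally an $\mathbb{F}_p$-vector space. Concretely, I would introduce the map $\varphi\colon\mathcal{S}\to\mathbb{F}_p^{2n}$ sending $g=\omega^{c}X^{\mathbf{a}}Z^{\mathbf{b}}$ to its Pauli vector $(\mathbf{a}\,\vert\,\mathbf{b})$. Two properties do the work. First, $\varphi$ is injective on $\mathcal{S}$: if $\varphi(g)=\varphi(g')$ then $g(g')^{-1}$ is a scalar multiple of the identity lying in $\mathcal{S}$, hence equals $I$ by hypothesis. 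Second, $\varphi$ is additive, since the product of two Pauli operators has Pauli vector equal to the sum of their Pauli vectors (reordering $Z$'s past $X$'s only changes the $\omega^{c}$ prefactor), and moreover $\varphi(g^{\lambda})=\lambda\,\varphi(g)$ for $\lambda\in\mathbb{F}_p$ with $g^{\lambda}\in\mathcal{S}$. Hence $\varphi$ identifies $\mathcal{S}$ with an $\mathbb{F}_p$-subspace $V:=\varphi(\mathcal{S})\subseteq\mathbb{F}_p^{2n}$.

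Next I would pin down $\dim V$ and then conclude. The rows of $M$ are by definition $\varphi(g_1),\dots,\varphi(g_k)$; since $g_1,\dots,g_k$ generate $\mathcal{S}$, these vectors span $V$, and since $M$ has rank $k$ they are linearly independent, so they form a basis of $V$ and $\dim V=k$. Now let $\{h_1,\dots,h_k\}$ be any generating set of $\mathcal{S}$ with $k$ elements. Writing $h_i=g_1^{u_{i1}}\cdots g_k^{u_{ik}}$ with $u_{ij}\in\mathbb{F}_p$ — exactly as in the paragraph preceding the lemma — gives $\varphi(h_i)=(u_{i1},\dots,u_{ik})M$, so the matrix with rows $\varphi(h_1),\dots,\varphi(h_k)$ is precisely $UM$ with $U=(u_{ij})$. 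But $\{h_1,\dots,h_k\}$ also generates $\mathcal{S}$, so $\varphi(h_1),\dots,\varphi(h_k)$ span $V$; being $k=\dim V$ vectors, they are a basis of $V$, hence the rank of $UM$ equals $k$. Since $U$ is $k\times k$ and the rank of $UM$ is at most the rank of $U$, which is at most $k$, we get that $U$ has rank $k$, i.e.\ $U$ is invertible. (Equivalently, $U$ is the change-of-basis matrix between the two bases $\{\varphi(g_j)\}$ and $\{\varphi(h_i)\}$ of $V$, hence invertible.)

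The step I would be most careful about — and the only real obstacle — is the phase bookkeeping: one must verify that $\varphi$ is genuinely well defined and additive on all of $\mathcal{S}$, and this is exactly where the hypothesis that $\mathcal{S}$ contains no non-trivial multiple of the identity is essential, since without it $\varphi$ fails to be injective and the matrix $U$ need not be unique. Once this is in place, the rest is routine rank counting, and the argument is uniform in the binary and non-binary cases provided one adopts, as the paper does, the normalization making $g^{p}=I$ for every $g\in\mathcal{S}$.
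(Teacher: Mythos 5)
Your proof is correct and follows essentially the same route as the paper: the paper's (very brief) justification is precisely your key step, namely writing $h_i=g_1^{u_{i1}}\cdots g_k^{u_{ik}}$ so that the vector of $h_i$ is $(u_{i1},\dots,u_{ik})M$, hence the new generating set is represented by $UM$. Your additional rank/dimension argument (via the injective homomorphism $g\mapsto(\mathbf{a}\,\vert\,\mathbf{b})$, which needs the no-nontrivial-multiple-of-identity hypothesis) simply makes explicit the invertibility of $U$ that the paper leaves implicit, and it is sound.
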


\subsection{Clifford group}

The Clifford group is the normalizer of the Pauli group.  In the
binary case, it is well-known that the Clifford group is generated by
the Hadamard gate, the phase gate, and the controlled-NOT gate
\cite{Clifford}. A characterization of the Clifford group in the
non-binary case can be found in \cite{CliffordQudit}.

Clifford operators are important in the stabilizer formalism because
they send any stabilizer state to a stabilizer state.  Suppose $\vert
\psi \rangle$ is a stabilizer state with the stabilizer group
$\mathcal{S}$. Also, let $L$ be a Clifford operator. For any $g\in
\mathcal{S}$ we have $(LgL^{\dagger})L\vert \psi\rangle = L\vert
\psi\rangle$. On the other hand, $LgL^{\dagger}$ is in
$L\mathcal{S}L^{\dagger}$ which is a subgroup of the Pauli group since
$L$ is a Clifford operator. In fact, $L\mathcal{S}L^{\dagger}$ is a
maximal Abelian subgroup of the Pauli group whose corresponding
stabilizer state is $L\vert \psi\rangle$.  Therefore, Clifford
operators send stabilizer states to stabilizer states.

Based on the characterization of the Clifford group
\cite{Clifford,CliffordQudit}, for any two stabilizer states $\vert
\psi\rangle$ and $\vert \psi'\rangle$ there is a Clifford operator $L$
such that $L\vert \psi\rangle = \vert \psi'\rangle$.  However, it does
not mean that all the stabilizer states are the same in the point of
view of quantum coding theory since the operator $L$ may completely
change the entanglement of a state.  But if we assume that
$L=L_1\otimes \cdots \otimes L_n$ is a local Clifford operator ($L$ is
the tensor product of $n$ one-qupit Clifford operators), then the
entanglement of $\vert \psi\rangle$ and $L\vert \psi\rangle$ are the
same. Based on this idea, two stabilizer states are called ``local
Clifford equivalent" if they are equivalent under the action of the
local Clifford group.

For the encoding circuits, we need only two Clifford operators which
we describe next.  Define the vector
\begin{equation}
\vert \widehat{r}\rangle = \frac{1}{\sqrt{p}}\sum_{s=0}^{p-1} \omega^{-rs} \vert s\rangle,
\end{equation}
for any $r \in \mathbb{F}_p$. $\vert \widehat{r}\rangle$ is an
eigenvector of $X$, i.e., $X\vert
\widehat{r}\rangle=\omega^{r}\vert \widehat{r}\rangle$, and $\{
\vert \widehat{0}\rangle, \dots, \vert \widehat{p-1}\rangle\}$ is
an orthonormal basis. Therefore, the operator
\begin{equation}
H\vert\widehat{r}\rangle = \vert r\rangle,
\end{equation}
which is called the
(generalized) Hadamard gate, is unitary. By definition
$HXH^{\dagger}=Z$. Also, it is easy to see that
$HZH^{\dagger}=X^{\dagger}$. Hence, $H$ is in the Clifford group.
Using the above relations the proof of the following lemma is
easy.

\begin{lemma}
\label{lem:hadamard}
Suppose $\vert \psi\rangle$ is a
stabilizer state whose stabilizer group is represented by the
$n\times (2n)$ matrix $M$. Thus, the matrix representation of the
stabilizer state $H_i\vert \psi\rangle$ (Hadamard gate is applied
on the $i$-th qupit) is obtained from $M$ by exchanging the
$i$-th and $(n+i)$-th columns and then multiplying the $i$-th
column by $-1$.
\end{lemma}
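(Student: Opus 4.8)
The plan is to combine the Hadamard conjugation rules recorded just above the lemma ($HXH^{\dagger}=Z$ and $HZH^{\dagger}=X^{\dagger}$) with the fact, already established in the discussion of the Clifford group, that a Clifford operator acts on a stabilizer state by conjugating the stabilizer group. Concretely, if $\mathcal{S}$ is the stabilizer group of $|\psi\rangle$, generated by $g_1,\dots,g_n$ with $g_j=\omega^{c_j}X^{\mathbf{a}^j}Z^{\mathbf{b}^j}$, then the stabilizer group of $H_i|\psi\rangle$ is $H_i\mathcal{S}H_i^{\dagger}$, which is generated by $H_i g_1 H_i^{\dagger},\dots,H_i g_n H_i^{\dagger}$. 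So it suffices to compute the vector representation of $H_i g_j H_i^{\dagger}$ for a single $j$ and read off the effect on the rows of $M$.

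Since $H_i$ acts only on the $i$-th tensor factor, every qupit $\ell\neq i$ of $g_j$ is left untouched, and on qupit $i$ one has $H\,X^{a^j_i}Z^{b^j_i}\,H^{\dagger}=(HXH^{\dagger})^{a^j_i}(HZH^{\dagger})^{b^j_i}=Z^{a^j_i}X^{-b^j_i}$. Re-ordering this into the canonical form $\omega^{(\cdot)}X^{(\cdot)}Z^{(\cdot)}$ via the commutation relation $Z^{b}X^{a}=\omega^{ab}X^{a}Z^{b}$ gives $Z^{a^j_i}X^{-b^j_i}=\omega^{-a^j_i b^j_i}X^{-b^j_i}Z^{a^j_i}$, so $H_i g_j H_i^{\dagger}=\omega^{c_j-a^j_i b^j_i}X^{\tilde{\mathbf{a}}^j}Z^{\tilde{\mathbf{b}}^j}$, where $\tilde{\mathbf{a}}^j$ and $\tilde{\mathbf{b}}^j$ coincide with $\mathbf{a}^j$ and $\mathbf{b}^j$ outside coordinate $i$, while $\tilde a^j_i=-b^j_i$ and $\tilde b^j_i=a^j_i$. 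Applying this across all $j$ at the level of the matrix $M$: the new $(n+i)$-th column equals the old $i$-th column, the new $i$-th column equals $-1$ times the old $(n+i)$-th column, and every other column is unchanged — exactly "exchange columns $i$ and $n+i$, then negate column $i$." The phase shift $c_j\mapsto c_j-a^j_i b^j_i$ is immaterial because the vector/matrix representation discards overall phases. Finally, because $H_i$ is Clifford, the transformed generators automatically inherit the properties needed to be stabilizer generators of a stabilizer state (mutual symplectic orthogonality, full rank, and the group containing no nontrivial multiple of the identity), so the resulting matrix is a genuine matrix representation of $H_i|\psi\rangle$.

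This argument is short, and the only point requiring care is the bookkeeping: the direction of the $\omega$-commutation relation and the placement of the minus sign. As a consistency check, in the binary case $-1\equiv 1$, so the negation step is vacuous and the rule degenerates to the familiar "swap the $X$- and $Z$-parts on qubit $i$."
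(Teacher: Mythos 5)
Your proof is correct and follows exactly the route the paper intends: the paper omits the argument as ``easy,'' relying on the conjugation relations $HXH^{\dagger}=Z$, $HZH^{\dagger}=X^{\dagger}$ and the fact that Clifford conjugation transforms stabilizer generators, which is precisely what you carry out, with the sign bookkeeping $Z^{a}X^{-b}=\omega^{-ab}X^{-b}Z^{a}$ done correctly. Your remarks that the phase is discarded by the vector representation and that the binary case makes the negation vacuous are accurate.
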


The next operator is a two-qupit gate which is called
controlled-$Z$ and is defined by
\begin{equation}
C_z\vert r\rangle \vert s\rangle = \vert r\rangle Z^r\vert s\rangle =
Z^{s}\vert r\rangle \vert s\rangle= \omega^{rs}\vert r\rangle
\vert s\rangle.
\end{equation}
We have
\begin{eqnarray}
C_zX\otimes I C_z^{\dagger} &=& X\otimes Z,\nonumber\\
C_zI\otimes X C_z^{\dagger} &=& Z\otimes X,\nonumber\\
C_zZ\otimes I C_z^{\dagger} &=& Z\otimes I,\nonumber\\
C_zI\otimes Z C_z^{\dagger} &=& I\otimes Z,
\end{eqnarray}
and thus by definition $C_z$ is in the Clifford group.

\begin{lemma}
\label{lem:c-z}
Suppose $\vert \psi\rangle$ is a
stabilizer state whose stabilizer group is represented by the
$n\times (2n)$ matrix $M$. Thus, the matrix representation of the
stabilizer state $C_{z}^{ij}\vert \psi\rangle$ (the controlled-$Z$
gate is applied on the $i$-th and $j$-th qupits) is obtained from
$M$ by adding column $i$ to column $n+j$, and column $j$ to
column $n+i$.
\end{lemma}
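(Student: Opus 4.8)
The plan is to track how the symplectic matrix $M$ representing the stabilizer group transforms under the conjugation action $C_z^{ij}(\cdot)(C_z^{ij})^\dagger$, using the same strategy as in Lemma~\ref{lem:hadamard}: it suffices to see what happens to each of the $n$ generators individually, since if the stabilizer group of $|\psi\rangle$ is generated by $g_1,\dots,g_n$ then the stabilizer group of $C_z^{ij}|\psi\rangle$ is generated by $C_z^{ij}g_\ell (C_z^{ij})^\dagger$ for $\ell=1,\dots,n$, and by Lemma~\ref{lem:generator} any two generating sets differ by left multiplication by an invertible matrix, so the row span (which is all that matters) is unchanged. Thus I only need to compute the vector representing $C_z^{ij}X^{\mathbf a}Z^{\mathbf b}(C_z^{ij})^\dagger$ in terms of the vector $(\mathbf a\mid \mathbf b)$.

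First I would write $X^{\mathbf a}Z^{\mathbf b}$ as a tensor product and isolate the two tensor factors on qupits $i$ and $j$, namely $X^{a_i}Z^{b_i}\otimes X^{a_j}Z^{b_j}$, since $C_z^{ij}$ acts as identity on all other factors and hence commutes with them. On those two factors I would expand using the four conjugation rules listed just above the lemma statement, $C_z X\otimes I\, C_z^\dagger = X\otimes Z$, $C_z I\otimes X\, C_z^\dagger = Z\otimes X$, and $C_z$ fixing $Z\otimes I$ and $I\otimes Z$; writing $X^{a_i}Z^{b_i}\otimes X^{a_j}Z^{b_j}$ as a product $(X^{a_i}\otimes I)(I\otimes X^{a_j})(Z^{b_i}\otimes I)(I\otimes Z^{b_j})$ and conjugating factor by factor gives $(X^{a_i}\otimes Z^{a_i})(Z^{a_j}\otimes X^{a_j})(Z^{b_i}\otimes I)(I\otimes Z^{b_j})$. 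Collecting the $X$ and $Z$ exponents on each of the two qupits (the $X$-part on qupit $i$ stays $a_i$, the $X$-part on qupit $j$ stays $a_j$, the $Z$-part on qupit $i$ becomes $b_i + a_j$, the $Z$-part on qupit $j$ becomes $b_j + a_i$, up to an overall root-of-unity phase which is irrelevant for the stabilizer matrix) shows that the effect on the vector $(\mathbf a\mid \mathbf b)$ is: leave the $\mathbf a$-block untouched, replace $b_i$ by $b_i + a_j$ and $b_j$ by $b_j + a_i$. In matrix-column language, with columns $1,\dots,n$ holding $\mathbf a$ and columns $n+1,\dots,2n$ holding $\mathbf b$, this is exactly ``add column $j$ to column $n+i$ and add column $i$ to column $n+j$,'' which is the claimed statement.

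The only subtlety worth flagging is bookkeeping of the overall $\omega$-phase that accumulates when reordering the $X$'s and $Z$'s on the two affected qupits: one must check it is a scalar multiple of the identity on those factors so that it drops out of the stabilizer matrix (which records $(\mathbf a\mid\mathbf b)$ and not the phase $c$), exactly as in the $p=2$ discussion preceding Lemma~\ref{lem:eigen}. This is routine but is the one place a sign or exponent could be mishandled. Apart from that, the proof is a direct computation, so I expect the main ``obstacle'' to be purely notational: carefully matching the exponent rearrangement to the column operations on $M$, and invoking Lemma~\ref{lem:generator} to argue that it is enough to transform a single generating set rather than the abstract group.
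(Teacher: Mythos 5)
Your computation is correct and is exactly the argument the paper intends: the lemma is stated immediately after the four conjugation relations for $C_z$ and left without an explicit proof, the point being precisely the generator-by-generator conjugation and exponent bookkeeping you carry out (the phase $\omega^{a_ia_j}$ indeed drops out of the matrix representation, and invoking Lemma~\ref{lem:generator} is harmless though not strictly needed, since conjugating a generating set directly yields a generating set of the new stabilizer group).
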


\subsection{Graph states}
In the following we consider graphs whose edges are labeled by
non-zero elements of $\mathbb{F}_p$. Considering the adjacency matrix
of a graph $\mathcal{G}$, we can represent it by a symmetric matrix
over $\mathbb{F}_p$ with zero diagonal. Suppose $G$ is such a matrix
of size $n\times n$. Then $M = ( I_n\, \vert\, G )$ is a full-rank
$n\times(2n)$ matrix, and all of its rows are mutually orthogonal with
respect to the symplectic inner product; therefore, $M$ represents a
stabilizer group which corresponds to a stabilizer state. Such a
stabilizer state is called a {\it graph state}, which we denote by
$\ket{\psi}_{\mathcal{G}}$.  It is well-known that any stabilizer
state is local Clifford equivalent to a graph state \cite{GraphQudit},
so to study the properties of stabilizer states it is sufficient to
restrict ourselves to graph states.

Graph states can be generated easily using only Hadamard and
controlled-$Z$ gates,
\begin{lemma}
\label{lem:graph-circuit}
The graph state corresponding to the graph with adjacency matrix
$G=(g_{ij})$ on $n$ vertices can be generated by the following
circuit. Prepare $n$ qupits in the state $\vert 0\rangle$, apply
$H^{\dagger}$ to every one of them, and then for any $i,j$ apply
$C_z^{g_{ij}}$ on qupits $i$ and $j$.
\end{lemma}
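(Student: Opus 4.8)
The plan is to track how the stabilizer matrix evolves under the circuit, using Lemmas~\ref{lem:hadamard} and~\ref{lem:c-z}, and to show it arrives at the canonical form $M=(I_n\,\vert\,G)$ that (by the definition given in the ``Graph states'' subsection) represents $\ket{\psi}_{\mathcal{G}}$. First I would note that the initial product state $\vert 0\rangle^{\otimes n}$ is the stabilizer state whose stabilizer group is generated by $Z_1,\dots,Z_n$: indeed each $Z_i\vert 0\rangle=\vert 0\rangle$, these operators commute, are independent, and generate no nontrivial multiple of the identity, so by Lemma~\ref{lem:eigen} (with $k=n$) their joint $+1$-eigenspace is one-dimensional and equals $\CC\vert 0\rangle^{\otimes n}$. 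In vector form this stabilizer group is represented by the matrix $M_0=(0\,\vert\,I_n)$.

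Next I would apply $H^\dagger$ to each qupit. Since $H^\dagger$ is a Clifford operator, it sends this stabilizer state to another stabilizer state, and I claim the effect of $H_i^\dagger$ on the matrix is the inverse of the column operation in Lemma~\ref{lem:hadamard}: first multiply column $i$ by $-1$, then exchange columns $i$ and $n+i$ (one checks this directly from $H^\dagger X H=Z^\dagger$, $H^\dagger Z H=X$, or simply by inverting the stated operation for $H$). Applying this to all $n$ qupits turns $M_0=(0\,\vert\,I_n)$ into $M_1=(I_n\,\vert\,0)$ — up to the sign flips, which only rescale rows and hence do not change the stabilizer group by Lemma~\ref{lem:generator}. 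Concretely $M_1=(I_n\,\vert\,0)$ is the stabilizer of $\vert\widehat 0\rangle^{\otimes n}$, i.e.\ $\vert +\rangle^{\otimes n}$ in the binary language, which is consistent since $H^\dagger\vert 0\rangle=\vert\widehat 0\rangle$.

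Finally I would apply the controlled-$Z$ gates. By Lemma~\ref{lem:c-z}, applying $C_z^{g_{ij}}$ on qupits $i,j$ adds $g_{ij}$ times column $i$ to column $n+j$ and $g_{ij}$ times column $j$ to column $n+i$ (the power $g_{ij}$ of $C_z$ just scales the elementary operation, as follows from $C_z^{g_{ij}} X\otimes I\, C_z^{-g_{ij}} = X\otimes Z^{g_{ij}}$ and the analogue on the second factor). Starting from $M_1=(I_n\,\vert\,0)$, the $X$-block is $I_n$, so adding $g_{ij}$ times column $i$ of the $X$-block to column $n+j$ of the $Z$-block simply puts the entry $g_{ij}$ in position $(i,n+j)$; ranging over all unordered pairs $\{i,j\}$ fills in both $(i,n+j)$ and $(j,n+i)$, producing exactly $M=(I_n\,\vert\,G)$ with $G=(g_{ij})$ the (symmetric, zero-diagonal) adjacency matrix. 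The only subtlety to address is that the $C_z$ gates on different pairs commute and their matrix actions are independent additive updates that do not interfere — each pair contributes to a distinct pair of off-diagonal positions and the $X$-block stays $I_n$ throughout — so the order is immaterial and no cross terms appear. The main (minor) obstacle is bookkeeping the sign from the Hadamard step and confirming it is immaterial; everything else is a direct application of the three preceding lemmas.
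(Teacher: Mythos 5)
Your proposal is correct and follows essentially the same route as the paper's proof: start from $M_0=(0\,\vert\,I_n)$ for $\ket{0}^{\otimes n}$, use Lemma~\ref{lem:hadamard} to get $M_1=(I_n\,\vert\,0)$, and then use Lemma~\ref{lem:c-z} to fill in the entries $g_{ij}$ of the $Z$-block, arriving at $(I_n\,\vert\,G)$. The extra care you take with the $H$ versus $H^{\dagger}$ sign (immaterial by Lemma~\ref{lem:generator}) and with the non-interference of the $C_z$ updates is fine and only makes explicit what the paper leaves implicit.
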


\begin{proof}
The initial state of the $n$ qupits is $\vert 0\rangle \cdots \vert
0\rangle$, which is a stabilizer state with the stabilizer group
$\{Z^{\mathbf{a}}:\, \mathbf{a}\in \mathbb{F}_p^n\}$.  This stabilizer
group corresponds to the matrix $M_0=( 0_n \,\vert\, I_n )$. According
to Lemma~\ref{lem:hadamard}, after applying $H^{\dagger}$ gates the
matrix $M_0$ will be changed to $M_1=(I_n \,\vert \, 0)$.  Also, by
Lemma~\ref{lem:c-z}, applying $C_z^{g_{ij}}$ on qupits $i$ and $j$
corresponds to adding columns $i$ and $j$ multiplied by $g_{ij}$ to
columns $n+j$ and $n+i$, respectively.  Since the first block of $M_1$
is identity, this operation is the same as to add $g_{ij}$ to the
entries $ij$ and $ji$ of the second block. Therefore, at the end we
obtain the matrix $M_2 = ( I_n\, \vert\, G )$.
\end{proof}

\subsection{Measurement on graph states}
\label{sec:measurement}

Suppose we have a graph state $\vert \psi\rangle_{\mathcal{G}}$ which
corresponds to the graph $\mathcal{G}$ with adjacency matrix $G$, and
we measure its (say) last qupit in the standard basis and get $\vert
0\rangle$.  We claim that the state after the measurement (without the
measured qubit) is also a graph state whose corresponding graph is
obtained from $\mathcal{G}$ by removing the last vertex. To see this
fact precisely notice that since $( I_n\, \vert\, G )$ represents the
stabilizer group of $\vert \psi\rangle_{\mathcal{G}}$, for any $i$, we
have $X^{\mathbf{e}_i} Z^{\mathbf{g}_i}\vert \psi\rangle_{\mathcal{G}}
= \vert \psi\rangle_{\mathcal{G}}$, where all coordinates of
$\mathbf{e}_i$ are $0$ except the $i$-th one which is $1$, and
$\mathbf{g}_i$ is the $i$-th row of $G$. Let
\begin{equation}
\vert \psi\rangle_{\mathcal{G}} 
= \sum_{r=0}^{p-1} \alpha_r\vert\phi_r\rangle\vert r\rangle,
\end{equation}
and for $1\leq i\leq n-1$ let $\mathbf{g}'_{i}$ and $\mathbf{e}'_i$ be
the vectors of length $n-1$ obtained from $\mathbf{g}_i$ and
$\mathbf{e}'_i$, respectively, by deleting the last coordinate. Thus
we have
\begin{equation}
\vert \psi\rangle_{\mathcal{G}} 
= X^{\mathbf{e}_i} Z^{\mathbf{g}_i} \vert \psi\rangle_{\mathcal{G}} 
= \sum_{r=0}^{p-1} \alpha_r Z^{{g}_{in}}X^{\mathbf{e}'_i}Z^{\mathbf{g}'_{i}} \vert \phi_r\rangle\vert r \rangle  
= \sum_{r=0}^{p-1} \alpha_r \left( \omega^{r {g}_{in}}  X^{\mathbf{e}'_i}Z^{\mathbf{g}'_{i}} \vert \phi_r\rangle\right)\vert r \rangle.
\end{equation}
As a result $X^{\mathbf{e}'_i}Z^{\mathbf{g}'_{i}} \vert \phi_0\rangle
=\vert \phi_0\rangle$, which means that $\vert \phi_0\rangle$ is a
stabilizer state with the stabilizer group generated by
$X^{\mathbf{e}'_i}Z^{\mathbf{g}'_{i}}$, $1\leq i\leq n-1$, and the
matrix representation of these generators is $( I_{n-1}\, \vert\, G')$
where $G'$ is the adjacency matrix of the graph obtained from
$\mathcal{G}$ by removing its last vertex.

\subsection{CWS codes and graph codes}
\label{sec:graph-code}
A CWS code $((n,K,d))_p$ is described by a graph $\mathcal{G}$ with
$n$ vertices and edges labeled by $\mathbb{F}_p$, together with a
classical code $\mathcal{C}$ which consists of $K$ vectors in
$\mathbb{F}_p^n$.  Such a code is denoted by
$\mathcal{Q}=(\mathcal{G}, \mathcal{C})$ \cite{CWS1,CWS2,CWS3}.

If the classical code $\mathcal{C}$ is linear, then $\mathcal{Q}$ is a
graph (stabilizer) code \cite{CWS1,CWS2,CWS3}. The parameters of such
a graph code $\mathcal{Q}=(\mathcal{G}, \mathcal{C})$ are
$[[n,k,d]]_p$, where the classical code $\mathcal{C}$ consists of
$K=p^k$ vectors in $\mathbb{F}_p^n$ that are indexed by the elements of
$\mathbb{F}_p^k$.  This $[[n,k,d]]_p$ graph code encodes $k$ qupits
into $n$ qupits in the following way. Suppose $\vert
\psi\rangle_{\mathcal{G}}$ is the graph state corresponding to
$\mathcal{G}$. To encode a state of the form $H^{\dagger}\otimes
\cdots \otimes H^{\dagger} \vert r_1 \dots r_k \rangle $ we first find
the classical codeword $\boldsymbol{\alpha}\in \mathcal{C}$ which is
indexed by $r_1\dots r_k$, and then encode $H^{\dagger}\otimes \cdots
\otimes H^{\dagger} \vert r_1 \dots r_k \rangle $ into
$Z^{\boldsymbol{\alpha}}\vert \psi\rangle_{\mathcal{G}}$
 \footnote{$Z^{\boldsymbol{\alpha}}\vert
\psi\rangle$ is usually considered as the encoded state of $\vert
r_1 \dots r_k \rangle$; however, these two codes are the same
under a change of basis, and since this change of basis is applied
locally, they have the same properties.}. Since $\mathcal{C}$ is a
linear code, it is a linear subspace of $\mathbb{F}_p^n$. We can then represent
$\mathcal{C}$ by $k$ basis vectors $\boldsymbol{\alpha}_1, \dots ,\boldsymbol{\alpha}_k$. In this case,
the state $H^{\dagger}\otimes \cdots \otimes H^{\dagger}
\vert r_1 \dots r_k \rangle $ is encoded into
$Z^{r_1\boldsymbol{\alpha}_1+\cdots + r_k\boldsymbol{\alpha}_k}\vert \psi\rangle_{\mathcal{G}}$.

The encoding circuit of a $[[n,k,d]]$ graph code is simple, as shown in the following procedure.
\begin{procedure} (Encoding circuit for a graph code)
\label{pro:enc}
\begin{enumerate}
\item First generate the graph state
$\vert \psi\rangle_{\mathcal{G}}$ using the circuit described in Lemma
\ref{lem:graph-circuit}.
\item For any $1\leq i\leq k$ apply $H^{\dagger}$ on $q_i$, where
  $q_1, \dots, q_k$ are the qupits that we want to encode.
\item For any $1\leq j\leq n$ apply
$C_z^{\alpha_{ij}}$ on $q_i$ and the $j$-th qupit of
$\vert\psi\rangle_{\mathcal{G}} $, where $\alpha_{ij}$ is the $j$-th coordinate
of $\boldsymbol{\alpha}_i$.
\item Apply $H$ to $q_1,\dots,q_k$.
\item Measure $q_1,\ldots,q_k$ in the computational basis.
\end{enumerate}
\end{procedure}

For example, the encoding circuit of the graph code with a triangle
graph and the classical code $\{000, 111\}$ can be found in the left
circuit of FIG.~\ref{fig:encodecon}. (Notice that in the binary case
$H^{\dagger}=H$.)

In general, for a graph code $\mathcal{Q}=(\mathcal{G},\mathcal{C})$
the encoding circuit can be represented graphically, and the corresponding graph is denoted by
$\mathcal{G}^{\mathcal{C}}$: consider the graph $\mathcal{G}$, for
any $1 \leq i \leq k$ add a vertex (input vertices), attach it to the vertices of
$\mathcal{G}$ (called the output vertices), and label the edge between this vertex and the
$j$-th vertex of $\mathcal{G}$ by $\alpha_{ij}$. For example,
FIG.~\ref{fig:311}B gives the graph $\mathcal{G}^{\mathcal{C}}$, where
$\mathcal{G}$ is a triangle and $\mathcal{C}=\{000, 111\}$.

\begin{remark}
The encoding circuit corresponding to the graph code with
graphical representation $\mathcal{G}^{\mathcal{C}}$ is related to
the circuit that generates the graph state $\ket{\psi}_{\mathcal{G}^{\mathcal{C}}}$
corresponding to the graph $\mathcal{G}^{\mathcal{C}}$. To see this, notice that
the steps 1,2, 3 in Procedure~\ref{pro:enc} indeed give such a graph encoder.
\end{remark}

To find the logical $X$ and $Z$ operators of an additive graph code we
first describe the stabilizer group of the logical $\vert
0\dots0\rangle_L$ state. Notice that 
\begin{equation}
\vert 0\dots 0\rangle =\frac{1}{\sqrt{p^{k}}} \sum_{r_1,\dots,r_k}  H^{\dagger}\otimes \cdots \otimes H^{\dagger}
\vert r_1 \dots r_k \rangle,
\end{equation}
and then
\begin{equation}
\vert 0\dots 0\rangle_L = \frac{1}{\sqrt{p^{k}}} \sum_{r_1,\dots,r_k}
Z^{r_1\boldsymbol{\alpha}_1+\cdots + r_k\boldsymbol{\alpha}_k}\vert \psi\rangle_{\mathcal{G}}.
\end{equation}
Therefore, all operators $ Z^{\boldsymbol{\alpha}_i}$ are in the stabilizer
group of $\vert 0\dots 0\rangle_L$, and the logical $Z$ operators
are described by the rows of the matrix
\begin{equation}
\begin{pmatrix}
     & \textbf{0} &  & \vline &  & \boldsymbol{\alpha}_1 &  \\
     & \vdots &  & \vline &  & \vdots &  \\
     & \textbf{0} &  & \vline &  & \boldsymbol{\alpha}_k &
\end{pmatrix}.
\end{equation}
Since the vectors $\boldsymbol{\alpha}_i$ are linearly independent, without
loss of generality (by a change of basis for the classical code
and reordering the qupits), we may assume that the first block of
the second part of this matrix is $I_k$. So we assume that the
matrix $(I_k \,\,\,  A)$, where $A$ is of size $k\times (n-k)$
describes a basis for $\mathcal{C}$, and the logical $Z$
operators are
\begin{equation}
\begin{pmatrix}
  & 0 & & \vline & I_k &  A
\end{pmatrix}.
\end{equation}

Assume that
\begin{equation}
G =\begin{pmatrix}
  G_1 & B \\
  B^T & G_2
\end{pmatrix},
\end{equation}
where $G_1$, $G_2$, and $B$ are of size $k\times k$,
$(n-k)\times(n-k)$, and $k\times (n-k)$, respectively. Then the
stabilizer group of the state $\vert \psi\rangle_{\mathcal{G}}$ is represented
by
\begin{equation}
\begin{pmatrix}
  I_k & 0     & \vline & G_1 & B \\
  0 & I_{n-k} & \vline & B^T & G_2
\end{pmatrix}.
\end{equation}
Now note that for any $1\leq i,j\leq k$,
$(Z^{\boldsymbol{\alpha}_i})(X^{\textbf{e}_j}Z^{\textbf{g}_j})=
\omega^{\delta_{ij}}(X^{\textbf{e}_j}Z^{\textbf{g}_j})(Z^{\boldsymbol{\alpha}_i})$, where
$\delta_{ij}$ is the Kronecker delta function. On the other hand,
the code space is invariant under $X^{\textbf{e}_j}Z^{\textbf{g}_j}$. Therefore,
the logical $X$ operators can be described by the matrix
\begin{equation}
\begin{pmatrix}
  I_k & 0     & \vline & G_1 & B
\end{pmatrix}.
\end{equation}
Also, it is not hard to see that the Pauli matrices corresponding
to the rows of
\begin{equation}
\begin{pmatrix}
  -A^T & I_{n-k} & \vline & -A^TG_1 +B^T & -A^TB+G_2
\end{pmatrix},
\end{equation}
commute with both logical $X$ and logical $Z$ operators.
Therefore, the additive graph code $\mathcal{Q}$ is described by
the stabilizer group

\begin{equation}
\label{eq:st}
\mathcal{S}=\begin{pmatrix}
  0 & 0 & \vline &I_k & A \\
  -A^T & I_{n-k} & \vline & -A^TG_1 +B^T & -A^TB+G_2
\end{pmatrix},\end{equation}
logical $Z$ operators
\begin{equation}\label{eq:l-z}\mathcal{Z}=\begin{pmatrix}
  0 & 0 & \vline & I_k & A
\end{pmatrix},
\end{equation}
and logical $X$ operators
\begin{equation}\label{eq:l-x}\mathcal{X}=\begin{pmatrix}
  I_k & 0 & \vline & G_1 & B
\end{pmatrix}.\end{equation}

\subsection{Summary of notations}

Before going into the detailed proof of the main result,
we summarize our notation.
Let $\mathcal{Q}=(\mathcal{G},\mathcal{C})$ be a CWS code. If $\mathcal{C}$
is linear, then $\mathcal{Q}$ is a graph code, where the code has a graphical
representation denoted by $\mathcal{G}^{\mathcal{C}}$. The concatenation of two CWS quantum codes
$\mathcal{Q}_{\text{in}}=(\mathcal{G}_{\text{in}},\mathcal{C}_{\text{in}})$ and
$\mathcal{Q}_{\text{out}}=(\mathcal{G}_{\text{out}},\mathcal{C}_{\text{out}})$
is denoted by $\mathcal{Q}_c=\mathcal{Q}_{\text{in}}\gconcat\mathcal{Q}_{\text{out}}$.

See Table I for the rest of notations.

\begin{table}[h]
\def\arraystretch{1.5}
\begin{tabular}{|c|l|}
  \hline
  $\mathcal{C}$ & the classical code\\ \hline
  ${C}$ & the generator matrix of the classical code $\mathcal{C}$, if $\mathcal{C}$ is linear \\ \hline
  ${C^{\{enc\}}}$ & the encoder of the classical code $\mathcal{C}$ \\ \hline
  $\mathcal{G}$ & the graph corresponding to the graph state $\ket{\psi}_{\mathcal{G}}$\\ \hline
  $G$ & the adjacency matrix of the graph $\mathcal{G}$ ($G=(g_{ij})$)\\ \hline
  $\mathcal{G}^{\mathcal{C}}$ & the graph representing the graph code $\mathcal{Q}$, if $\mathcal{C}$ is linear \\ \hline
  $G^{\{enc\}}$ & the encoding circuit of the graph $\mathcal{G}$ \\ \hline
  $\mathcal{G}_{\mathcal{Q}_c}^{\{enc\}}$ & the encoding graph of the concatenated code $\mathcal{Q}_c$\\ \hline
  $\mathcal{G}_{\mathcal{Q}_c}^{\mathcal{C}_{\text{out}}\{enc\}}$ & the encoding graph of the concatenated code $\mathcal{Q}_c$, if $\mathcal{C}_{\text{out}}$ is linear \\ \hline
\end{tabular}
\caption{Notations}
\end{table}

Most of the notations have already been given in
Sec.~\ref{sec:Sect_III}, except for
$\mathcal{G}_{\mathcal{Q}_c}^{\{enc\}}$ and
$\mathcal{G}_{\mathcal{Q}_c}^{\mathcal{C}_{\text{out}}\{enc\}}$, which
are discussed in Sec.~\ref{sec:encodinggraph} and will be explained in more details in
Sec.~\ref{sec:generalized_LC}.

\section{Concatenation of graph codes}\label{sec:Sect_IV}

In this section, we prove our main result in a simple case, where the
inner code encodes only a single qupit and the outer code is a graph
code. In this situation, we can algebraically obtain the graph and the
classical code of the concatenated code using the stabilizer
formalism. Although we will prove our main result in the general case
in Sec.~\ref{sec:generalized_LC}, we believe that the proof given in
this section is easily accessible to those who are familiar with the
stabilizer formalism.

Suppose the inner code $\mathcal{Q}_{\text{in}}=(\mathcal{G}_{\text{in}}, \mathcal{C}_{\text{in}})$
encodes only a single qupit, i.e., $\mathcal{Q}_{\text{in}}$ is an $[[n,1,d]]_p$ code.
Then from the discussion in Sec.~\ref{sec:graph-code}, it follows that
\begin{equation}\label{eq:g}{G}_{\text{in}}= \begin{pmatrix}
  0 & \mathbf{y} \\
  \mathbf{y}^T & H'
\end{pmatrix},
\end{equation}
 and
\begin{equation}
\mathcal{C}_{\text{in}}= \{ 0, (1 \,\,\,  \mathbf{b}) \},
\end{equation}
i.e.,
\begin{equation}\label{eq:b}
{C}_{\text{in}}= (1 \,\,\,  \mathbf{b}),
\end{equation}
where both $\mathbf{y}$ and $\mathbf{b}$ are vectors of length $n-1$. (Notice that
$\mathcal{Q}_{\text{in}}$ encodes one qupit; thus, ${C}_{\text{in}}$ is one-dimensional.)

Since $\mathcal{Q}_{\text{in}}$ encodes one qupit, the corresponding
outer code
$\mathcal{Q}_{\text{out}}=(\mathcal{G}_{\text{out}},\mathcal{C}_{\text{out}})$
is an $((n',K',d'))_p$ code.  In this section we assume that
$\mathcal{C}_{\text{out}}$ is linear, so $\mathcal{Q}_{\text{out}}$ is
a graph code with parameters $[[n',k',d']]_p$, where $K'=p^{k'}$. Then
from the discussion in Sec.~\ref{sec:graph-code}, we have
\begin{equation}
{G}_{\text{out}} =\begin{pmatrix}
  G_1 & B \\
  B^T & G_2
\end{pmatrix},\end{equation}
and the rows of
\begin{equation}
\label{eq:b_out}
C_{\text{out}}=(I_{k'} \,\,\,  A)
\end{equation}
form a basis for $\mathcal{C}_{\text{out}}$.

Thus by Eqs.~(\ref{eq:st})--(\ref{eq:l-x}) the
stabilizer group of $\mathcal{Q}_{\text{in}}$ is

\begin{equation} \label{eq:st-h}
\mathcal{S}_{\text{in}}=\begin{pmatrix}
  0 & {0} & \vline & 1 & \mathbf{b} \\
  -\mathbf{b}^T & I_{n-1} & \vline & \mathbf{y}^T & -\mathbf{b}^T\mathbf{y}+H'
\end{pmatrix},
\end{equation}
its logical operator $Z$ is given by
\begin{equation}\label{eq:z-h}\mathcal{Z}_{\text{in}}=\begin{pmatrix}
  0 & {0}& \vline & 1 & \mathbf{b}
\end{pmatrix},\end{equation}
and its logical operator $X$ is given by
\begin{equation}\label{eq:x-h}\mathcal{X}_{\text{in}}=\begin{pmatrix}
  1 & {0} & \vert & 0 & \mathbf{y}
\end{pmatrix}.
\end{equation}

The concatenated code
$\mathcal{Q}_c=\mathcal{Q}_{\text{in}}\gconcat\mathcal{Q}_{\text{out}}$ is a
quantum code which encodes $k'$ qupits into $nn'$ qupits as
follows; it first encodes $k'$ qupits into $n'$ qupits using
$\mathcal{Q}_{\text{out}}$, and then encodes any of the $n'$ qupits into
$n$ qupits based on $\mathcal{Q}_{\text{in}}$.

The main result of this section is given by the following theorem, which states that
the concatenated code
$\mathcal{Q}_c=\mathcal{Q}_{\text{in}}\gconcat\mathcal{Q}_{\text{out}}$ is also a graph code.
The corresponding adjacency matrix of the graph and
the generator matrix of the classical code can be 
computed directly from the adjacency matrices and the generator
matrices of the inner and outer codes.

\begin{theorem}
\label{thm:main-1}
Suppose $\mathcal{Q}_{\text{out}}=(\mathcal{G}_{\text{out}},
\mathcal{C}_{\text{out}})$ and
$\mathcal{Q}_{\text{in}}=(\mathcal{G}_{\text{in}},
\mathcal{C}_{\text{in}})$ are $[[n', k', d']]_p$ and $[[n, k, d]]_p$
graph codes, respectively, (where $k=1$) as described by
Eqs.~(\ref{eq:g})--(\ref{eq:b_out}). Then the concatenated code
$\mathcal{Q}_c=\mathcal{Q}_{\text{in}}\gconcat\mathcal{Q}_{\text{out}}=(\mathcal{G}_c,\mathcal{C}_c)$
is a graph code described by the graph $\mathcal{G}_c$ with adjacency
matrix
\begin{equation}
\label{eq:con-g}
G_c=G_{\text{in}}\otimes I_{n'} + \begin{pmatrix}
  1 \\
  \mathbf{b}^T
\end{pmatrix}(1 \,\,\,  \mathbf{b}) \otimes
G_{\text{out}},
\end{equation}
and the classical code with generator matrix
\begin{equation}
\label{eq:b-a}
{C}_c=(1 \,\,\, \mathbf{b})  \otimes (I_{k'} \,\,\,
A),
\end{equation}
i.e., the classical code is obtained by concatenation as well:
\begin{equation}
\label{eq:classcon}
\mathcal{C}_c=\mathcal{C}_{\text{in}}\gconcat\mathcal{C}_{\text{out}}.
\end{equation}
\end{theorem}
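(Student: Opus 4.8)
The plan is to compute the stabilizer group, logical $X$, and logical $Z$ operators of the concatenated code directly from the stabilizer formalism, and then verify that these agree with what the graph $\mathcal{G}_c$ and the classical code $\mathcal{C}_c$ of Eqs.~(\ref{eq:con-g})--(\ref{eq:b-a}) would produce via Eqs.~(\ref{eq:st})--(\ref{eq:l-x}). Concretely, recall that the concatenation first encodes $k'$ qupits into $n'$ qupits via $\mathcal{Q}_{\text{out}}$, then encodes each of those $n'$ qupits into $n$ qupits via $\mathcal{Q}_{\text{in}}$; so the $nn'$ physical qupits are naturally indexed by pairs, and a tensor-product labelling (inner qupit index $\otimes$ outer qupit index) is the right bookkeeping device. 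The stabilizer group $\mathcal{S}_c$ of $\mathcal{Q}_c$ is generated by two families: (i) the inner stabilizers $\mathcal{S}_{\text{in}}$ (from Eq.~(\ref{eq:st-h})) acting on each of the $n'$ blocks separately — i.e.\ $\mathcal{S}_{\text{in}}\otimes(\text{basis vector picking one block})$; and (ii) the ``promoted'' outer stabilizers and outer logical operators: each outer stabilizer generator is a product of single-qupit Pauli operators $X^{a}Z^{b}$ on the $n'$ logical qupits, and each such $X^aZ^b$ must be replaced by its inner-encoded version, namely $a\,\mathcal{X}_{\text{in}} + b\,\mathcal{Z}_{\text{in}}$ in symplectic notation (using Eqs.~(\ref{eq:z-h})--(\ref{eq:x-h})). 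I would write all of these as rows of a $2nn'$-column symplectic matrix, using the $\otimes$ structure to keep the expression compact.

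The key computation is then to simplify this generating set by row operations (allowed by Lemma~\ref{lem:generator}) into the canonical form of Eq.~(\ref{eq:st}) for the graph $\mathcal{G}_c$ and classical code $\mathcal{C}_c$. First I would read off the logical operators: a logical $Z$ of $\mathcal{Q}_c$ is an outer logical $Z$ (a row of $(0\mid I_{k'}\ A)$, i.e.\ a product of $Z^{b}$'s on the logical qupits) with each $Z^{b}$ replaced by $b\,\mathcal{Z}_{\text{in}} = b\cdot(0\mid 1\ \mathbf{b})$; this gives exactly $\mathcal{Z}_c = (0\mid (1\ \mathbf{b})\otimes(I_{k'}\ A))$, matching Eq.~(\ref{eq:l-z}) applied to $\mathcal{C}_c = \mathcal{C}_{\text{in}}\gconcat\mathcal{C}_{\text{out}}$ as claimed in Eq.~(\ref{eq:b-a}). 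Similarly the logical $X$ comes from promoting an outer logical $X$, a row of $(I_{k'}\mid G_1\ B)$: each $X^a$ on a logical qupit becomes $a\,\mathcal{X}_{\text{in}} = a\cdot(1\ \mathbf{0}\mid 0\ \mathbf{y})$ and each $Z^b$ becomes $b\,\mathcal{Z}_{\text{in}}$, and after collecting terms one should land on the form $(I_{k'}\mid (G_c)_1\ (G_c)_{12})$ predicted by Eq.~(\ref{eq:l-x}) for the graph $G_c$ of Eq.~(\ref{eq:con-g}). The main work — and the step I expect to be the genuine obstacle — is verifying the stabilizer block: one must show that the group generated by the inner stabilizers on each block together with the promoted outer stabilizers, after Gaussian elimination, is precisely the row space of the canonical stabilizer matrix associated to $(\mathcal{G}_c,\mathcal{C}_c)$. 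This requires carefully tracking how the $(1\ \mathbf{b})^T(1\ \mathbf{b})\otimes G_{\text{out}}$ term in $G_c$ arises: it should come from using the block-$i$ inner stabilizer rows to ``clean up'' the $Z$-part contributions that the promoted outer generators leave on the non-first qupits of each block, and the cross terms between $\mathbf{b}$ (from $\mathcal{C}_{\text{in}}$ / the inner encoder) and $G_{\text{out}}$ are exactly what the outer-qupit entanglement contributes once expressed on the inner-encoded qupits.

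A cleaner alternative, which I would use to organize the argument and avoid an unwieldy matrix manipulation, is to argue at the level of the encoding circuit and the encoding graph: by the discussion in Sec.~\ref{sec:encodinggraph}, the concatenated encoding graph $\mathcal{G}_{\mathcal{Q}_c}^{\mathcal{C}_{\text{out}}\{enc\}}$ consists of $n'$ copies of $\mathcal{G}_{\text{in}}^{\mathcal{C}_{\text{in}}}$ together with the outer graph $\mathcal{G}_{\text{out}}$ on the auxiliary vertices (the logical/output vertices of the inner copies being identified with the vertices of $\mathcal{G}_{\text{out}}$ in the obvious way), plus the input vertices describing $\mathcal{C}_{\text{out}}$. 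One then applies Lemma~\ref{lem:graph-circuit} to read off the stabilizer matrix of this big graph state, uses the measurement rule of Sec.~\ref{sec:measurement} (suitably generalized from $Z$-basis to $X$-basis measurement via Lemma~\ref{lem:hadamard}) to eliminate the auxiliary vertices, and checks the result equals the graph/code pair of Eqs.~(\ref{eq:con-g})--(\ref{eq:b-a}). Either way, the crux is the same bookkeeping identity showing that eliminating the auxiliary qupit of block $i$ feeds the term $\binom{1}{\mathbf{b}^T}(1\ \mathbf{b})\otimes G_{\text{out}}$ into the adjacency matrix; once that identity is in hand, Eq.~(\ref{eq:classcon}) is immediate from Eq.~(\ref{eq:b-a}), since $(1\ \mathbf{b})\otimes(I_{k'}\ A)$ is by definition the generator matrix of the classical concatenation $\mathcal{C}_{\text{in}}\gconcat\mathcal{C}_{\text{out}}$.
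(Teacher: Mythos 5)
Your primary route is essentially the paper's own proof of Theorem~\ref{thm:main-1}: promote the outer stabilizer and logical generators by substituting the inner logical $X$ and $Z$ of Eqs.~(\ref{eq:z-h})--(\ref{eq:x-h}), adjoin the per-block inner stabilizers, and then use row operations (Lemma~\ref{lem:generator}) to identify the result with the canonical data Eqs.~(\ref{eq:st})--(\ref{eq:l-x}) computed from $(\mathcal{G}_c,\mathcal{C}_c)$; your ``cleaner alternative'' via the encoding circuit is the paper's Sec.~\ref{sec:Sect_V} argument. One concrete correction to your setup: the per-block family must consist only of rows $2,\dots,n$ of Eq.~(\ref{eq:st-h}), i.e.\ the stabilizer of the inner code space, not all of $\mathcal{S}_{\text{in}}$ ``acting on each block''; the first row of Eq.~(\ref{eq:st-h}) is the inner logical $Z$, and including it on every block generates a group strictly larger than the one associated with $\mathcal{Q}_c$ (it would fix each block to an encoded $Z$ eigenstate rather than preserve the code space). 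With that fix, the step you defer as ``the genuine obstacle''---showing the generated group matches the canonical stabilizer of the graph of Eq.~(\ref{eq:con-g})---is exactly what the paper does by writing out $M$ and $N$ in Eqs.~(\ref{eq:cal-m})--(\ref{eq:cal-n}) and exhibiting the explicit invertible change-of-generators matrix producing the cross term $(1\ \mathbf{b})^{T}(1\ \mathbf{b})\otimes G_{\text{out}}$, so your plan is sound but that computation still has to be carried out.
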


\begin{proof}
Let us first show that a basis of $\mathcal{C}_c$ is described by
Eq.~(\ref{eq:b-a}). To find $\mathcal{Z}_{c}$, the logical $Z$
operators of $\mathcal{Q}_c$, we should first consider the logical $Z$
operators of $\mathcal{Q}_{\text{out}}$, and then replace any Pauli
matrix $X$ and $Z$ of those operators with the logical $X$ and $Z$
operators of $\mathcal{Q}_{\text{in}}$. For example, if the logical
$Z$ operator acting on the first encoded qupit in
$\mathcal{Q}_{\text{out}}$ is $Z^{(1, 1, 0,\dots , 0)}$, the logical
$Z$ operator acting on the first qupit in $\mathcal{Q}_c$ is
$Z^{((1,\mathbf{b}), (1, \mathbf{b}), 0\dots 0)}$ because by
Eq.~(\ref{eq:z-h}) the logical $Z$ operator of
$\mathcal{Q}_{\text{in}}$ is $Z^{(1, \mathbf{b})}$.  Therefore, by
changing the order of qubits we can represent this Pauli matrix by the
vector $\begin{pmatrix} 0 & \vline& (1 \,\,\, \mathbf{b})\otimes
  (1,1,0,\dots ,0)\end{pmatrix}$, where the zero before the vertical
line is actually a zero vector. Now since the logical $Z$ operators of
$\mathcal{Q}_{\text{out}}$ are represented by rows of
Eq.~(\ref{eq:l-z}), we have
\begin{equation}
\label{eq:cal-z-c}
\mathcal{Z}_c =
\begin{pmatrix} 0 & \vline& (1 \,\,\, \mathbf{b})\otimes (I_{k'} \,\,\, A)
\end{pmatrix}.
\end{equation}
Equivalently, $(1 \,\,\, \mathbf{b})\otimes (I_{k'} \,\,\,
A)$ is a basis for the linear code $\mathcal{C}_c$.

Analogously, we compute for the logical $X$ operators of $\mathcal{Q}_c$:
\begin{eqnarray}
\label{eq:cal-x-c} \mathcal{X}_c & = & \begin{pmatrix} (1 \,\,\, 0)\otimes
(I_{k'} \,\,\, 0) & \vline& (0 \,\,\, \mathbf{y})\otimes (I_{k'} \,\,\, 0) + (1
\,\,\, \mathbf{b})\otimes (G_1 \,\,\, B)
\end{pmatrix}\\ & = &\begin{pmatrix}
  I_k & 0 & \vline & G_1\,\,\,  & B\,\,\,  & \mathbf{b}\otimes ( G_1 \, B) +\mathbf{y}\otimes (I_{k'} \, 0)
\end{pmatrix}.
\end{eqnarray}

It remains to compute the stabilizer group. The first $n'$ rows of
$\mathcal{S}_c$ are obtained from rows of $\mathcal{S}_{\text{out}}$
(Eq.~(\ref{eq:st})) by replacing any $X$ and $Z$ with the logical $X$
and $Z$ of the inner code. For the next $(n-1)n'$ rows note that,
$g_2, \dots, g_n$ which are the Pauli matrices corresponding to the
rows $2,\dots , n$ of $\mathcal{S}_{\text{in}}$ commute with the
logical $X$ and $Z$ of the inner code. In fact, they are in the
stabilizer group of the code space (spanned by the states $\vert
0\rangle,\dots,\vert p-1\rangle$ states).  Now since we replace any of
the $n'$ qupits of $\mathcal{Q}_{\text{out}}$ with a state in the code
space of $\mathcal{Q}_{\text{in}}$, each block of $n$ qupits in
$\mathcal{Q}_c$ should be stabilized by $g_2, \dots ,g_n$. As a
result, $\mathcal{S}_c = (\, M \,\vline\, N \, )$, where

\begin{eqnarray}\label{eq:cal-m} M =  \begin{pmatrix}
  (1\,\,\, 0) \otimes \begin{pmatrix}
    0 & 0 \\
    -A^T & I_{n'-k'} \
  \end{pmatrix} \\
  (-\mathbf{b}^T \,\,\, I_{n-1}) \otimes I_{n'}
\end{pmatrix} =  \begin{pmatrix}
  0    & 0       & 0 \\
  -A^T & I_{n'-k'} & 0 \\
  -\mathbf{b}^T\otimes \begin{pmatrix}
    I_{k'} \\
    0 \
  \end{pmatrix} & -\mathbf{b}^T\otimes\begin{pmatrix}
    0 \\
    I_{n'-k'} \
  \end{pmatrix} & I_{(n-1)n'}
\end{pmatrix},
\end{eqnarray}
and
\begin{equation}\label{eq:cal-n}
N  =  \begin{pmatrix}
  (0 \,\,\, \mathbf{y})\otimes \begin{pmatrix}
    0 & 0 \\
    -A^T & I_{n'-k'} \
  \end{pmatrix} + (1 \,\,\, \mathbf{b})\otimes \begin{pmatrix}
    I_{k'} & A \\
    -A^TG_1+B^T & -A^TB+G_2 \
  \end{pmatrix} \\
  (\mathbf{y}^T \,\,\, -\mathbf{b}^T\mathbf{y}+H')\otimes I_{n'}
\end{pmatrix}.
\end{equation}

Now to complete the proof of Theorem~\ref{thm:main-1} it is sufficient
to show that the stabilizer group, and the logical $X$ and $Z$
operators of the graph code described by Eqs.~(\ref{eq:con-g}) and
(\ref{eq:b-a}) are given by
Eqs.~(\ref{eq:cal-z-c})--(\ref{eq:cal-n}). We compute these matrices
using the construction given by Eqs.~(\ref{eq:st})--(\ref{eq:l-x}).

First of all, the classical part of the code is given by
$(I_{k'}\,\,\, B)$, where $B= (A \,\,\,  \mathbf{b} \otimes (I_{k'} \,\,\, A))$;
therefore, the logical $Z$ operators of the code are the same as
Eq.~(\ref{eq:cal-z-c}).

The block from of matrix $G_c$ of Eq.~(\ref{eq:con-g}) is
given by
\begin{equation}
G_c = \begin{pmatrix}
  K_1 & W \\
  W^T & K_{2}
\end{pmatrix},
\end{equation}
where $K_1=G_1$, $W=( B \,\,\, \mathbf{y}\otimes (I_{k'} \,\,\, 0 ) +\mathbf{b}\otimes
(G_1\,\,\, B) )$ and
\begin{equation}
K_2 = \begin{pmatrix}
   G_2 & \mathbf{y}\otimes (0 \,\,\, I_{n'-k'} ) +\mathbf{b}\otimes (B^T\,\,\, G_2) \\
   \mathbf{y}^T\otimes \begin{pmatrix}
    0 \\
    I_{n'-k'} \
  \end{pmatrix} +\mathbf{b}^T\otimes \begin{pmatrix}
    B \\
    G_2 \
  \end{pmatrix} & H'\otimes I_{n'}+\mathbf{b}^T\mathbf{b}\otimes G_2
\end{pmatrix}.
\end{equation}
Hence, the logical $X$ operator of the graph code is
\begin{equation}
\begin{pmatrix}
  I_{k'} & 0 & \vert & K_1 & W
\end{pmatrix} = \begin{pmatrix}
  I_{k'} & 0 & \vert & K_1\,\,\,  & B\,\,\,  & \mathbf{y}\otimes (I_{k'} \,\,\, 0 ) +\mathbf{b}\otimes
(G_1\,\,\, B)
\end{pmatrix},
\end{equation}
which is the same as Eq.~(\ref{eq:cal-x-c}).

The stabilizer group of the graph code is given by
\begin{equation}
\begin{pmatrix}
  0 & 0 & \vline &I_{k'} & B \\
  -B^T & I_{nn'-k'} & \vline & -B^TK_1 +W^T & -B^TW+K_2
\end{pmatrix}.
\end{equation}
By Lemma~\ref{lem:generator} this matrix describes the same
group as $\mathcal{S}_c = (\, M \,\vert\, N\,)$ because we have
\begin{equation}
\mathcal{S}_c = \begin{pmatrix}
  I_{k'} & 0 & 0 \\
  0 & I_{n'-k'} & 0 \\
  0 & -\mathbf{b}^T\otimes \begin{pmatrix}
    0 \\
    I_{n'-k'} \
  \end{pmatrix} & I_{(n-1)n'}
\end{pmatrix} \begin{pmatrix}
  0 & 0 & \vline &I_{k'} & B \\
  -B^T & I_{nn'-k'} & \vline & -B^TK_1 +W^T & -B^TW+K_2
\end{pmatrix}.
\end{equation}
\end{proof}

Notice that from Eq.~(\ref{eq:con-g}), the adjacency matrix $G_c$ does
not depend on the classical code of the outer code
($\mathcal{C}_{\text{out}}$), which indicates that Theorem
\ref{thm:main-1} could also be true even if $\mathcal{C}_{\text{out}}$
is nonlinear (in this case Eq.~(\ref{eq:b-a}) does no longer apply,
but Eq.~(\ref{eq:classcon}) may still hold).  As the stabilizer
formalism can no longer be used to handle this case, we need an
alternative proof technique which will be presented in the next
section.  This new technique is based on analyzing the encoding
circuit of the concatenated code. It can easily be extended to more
general cases, such as nonlinear outer codes, $k>1$, and even to
generalized concatenated quantum codes.

\section{Graph concatenation by generalized local complementation}
\label{sec:generalized_LC}\label{sec:Sect_V}

In this section we prove our main result based on analyzing the
encoding circuits of the concatenated code.  We start with an
alternative proof for Theorem~\ref{thm:main-1} in
Sec.~\ref{sec:AlternativeProof} for the simple case that the inner code
encodes only a single qupits, and the outer code is a graph code. This
proof is based on the rule of ``generalized local complementation."
Then in Sec.~\ref{sec:generalized_LC_B}, we show that the rule of
``generalized local complementation" can be directly applied to the
case that the outer code is a general CWS code, which is beyond the
result of Theorem~\ref{thm:main-1}.  In
Sec.~\ref{sec:generalized_LC_C}, we discuss the case where the inner
code encodes more than one qupit (i.e., $k>1$); we show that the rule
of ``generalized local complementation" given in
Sec.~\ref{sec:AlternativeProof} also applies directly to this case, and
hence completes the proof of the main result.

\subsection{Alterantive proof for Theorem 1}
\label{sec:AlternativeProof}

Recall our main goal: suppose we have two graph codes
$\mathcal{Q}_{\text{out}}=(\mathcal{G}_{\text{out}},
\mathcal{C}_{\text{out}})$ and
$\mathcal{Q}_{\text{in}}=(\mathcal{Q}_{\text{in}},
\mathcal{C}_{\text{in}})$ given by
Eqs.~(\ref{eq:g})--(\ref{eq:b_out}), where $\mathcal{Q}_{\text{in}}$
encodes a single qupit. Let $\mathcal{Q}_c
=\mathcal{Q}_{\text{in}}\gconcat\mathcal{Q}_{\text{out}}$ denote the
concatenation of the inner code $\mathcal{Q}_{\text{in}}$ and the
outer code $\mathcal{Q}_{\text{out}}$.  We would like to show that
$\mathcal{Q}_c = (\mathcal{Q}_c, \mathcal{C}_c)$, where
$\mathcal{Q}_c$ and $\mathcal{C}_c$ are given in Eqs.~(\ref{eq:con-g})
and (\ref{eq:b-a}), respectively.

In Sec.~\ref{sec:AlternativeProof_1}, we first specify the encoding
circuit of the concatenated code $\mathcal{Q}_c$, then we give the
graphical interpretation of this circuit and define the encoding graph
$\mathcal{G}_{\mathcal{Q}_c}^{\mathcal{C}_{\text{out}}\{enc\}}$ of the
concatenated code $\mathcal{Q}_c$.  Then in
Sec.~\ref{sec:AlternativeProof_2} we define the rule of ``generalized
local complementation" on a graph; we show how the encoding circuit of
the concatenated code $\mathcal{Q}_c$ can be interpreted as
generalized local complementation on the encoding graph, and how we
can obtain the graph code $\mathcal{G}_c^{\mathcal{C}_c}$ from the
encoding graph
$\mathcal{G}_{\mathcal{Q}_c}^{\mathcal{C}_{\text{out}}\{enc\}}$;
finally we show that $\mathcal{Q}_c$ and $\mathcal{C}_c$ are exactly
those given in Eqs.~(\ref{eq:con-g}) and (\ref{eq:b-a}), thereby
completing the proof.

\subsubsection{Encoding circuit and encoding graph for the concatenated code}\label{sec:AlternativeProof_1}

We have already discussed the encoding circuit of a concatenated
code in Sec.~\ref{sec:encodinggraph}. Here we state it more formally.

\begin{procedure}\label{pro:enccon} (Encoding circuit for $\mathcal{Q}_c$
with a graph outer code and an inner code encoding a single
qupit)

\begin{enumerate}
\item Apply the encoding circuit of $\mathcal{Q}_{\text{out}}$ that encodes
$k'$ qupits into $n'$ qupits which we call $q_1, \dots ,q_{n'}$,
as given by Procedure~\ref{pro:enc}.
\item Apply $n'$ copies of the circuit that gives the graph state
corresponding to $\mathcal{G}_{\text{in}}$.
\item Apply $H^\dagger$ on all qupits $q_1, \dots ,q_{n'}$.
\item Apply the corresponding controlled-$Z$ operators between these
  qupits and the graph states of $\mathcal{G}_{\text{in}}$.
\item Apply $H$ on $q_1, \dots ,q_{n'}$.
\item Measure $q_1, \dots ,q_{n'}$ in the computational basis.
\end{enumerate}
\end{procedure}

For an example, see the right circuit of FIG.~\ref{fig:encodecon}.

As discussed in Sec.~\ref{sec:encodinggraph}, Procedure
\ref{pro:enccon} can be represented by a graph which is denoted by
$\mathcal{G}_{\mathcal{Q}_c}^{\mathcal{C}_{\text{out}}\{enc\}}$. This
graph is constructed as follows: Step 1 corresponds to the graph
$\mathcal{G}_{\text{out}}^{\mathcal{C}_{\text{out}}}$; Step 2
corresponds to adding a copy of the graph $\mathcal{G}_{\text{in}}$
for each vertex $q_i$ of the graph $\mathcal{G}_{\text{out}}$, then we
have a graph on $n'+k'+nn'$ vertices; Steps 3, 4, 5 encode the $n'$
qupits of the outer code into $n'$ copies of the inner code, so we
just add edges and labels according to controlled-$Z$ gates that are
applied between these $n'$ qupits and the graph states of
$\mathcal{G}_{\text{in}}$.

For an example of the encoding graph
$\mathcal{G}_{\mathcal{Q}_c}^{\mathcal{C}_{\text{out}}\{enc\}}$, see
FIG.~\ref{fig:concatenate}B. (The corresponding encoding circuit is
given by the right circuit of FIG.~\ref{fig:encodecon}.)

\subsubsection{Graph concatenation via Generalized Local Complementation}\label{sec:AlternativeProof_2}

We now give a graphical interpretation of Steps 3, 4, 5 given in
Procedure~\ref{pro:enccon}. Notice that for $1\leq i\leq n'$ we
apply $H^{\dagger}$ to qupit $q_i$, then the corresponding
controlled-$Z$ operations between $q_i$ and the $i$-th copy of the
graph state $\mathcal{G}_{\text{in}}$, and finally $H$ on $q_i$. We
show that each of these $n'$ steps is equivalent to a generalized
local complementation on the graph.

\begin{definition} (Generalized Local Complementation)
Suppose $F=(f_{ij})$ is the adjacency matrix of a graph $\mathcal{F}$,
$i$ is a vertex of $\mathcal{F}$, and $\mathbf{f}_i$ is the $i$-th row
of $F$. Also, let $\mathbf{v}$ be a vector whose coordinates are
indexed by the vertices of $\mathcal{F}$ such that $\mathbf{v}$ is
zero on $i$ and its neighbors, i.e., $v_j=0$ if $j=i$ or $f_{ij}\neq
0$. Then the generalized local complementation at $(i,\mathbf{v})$ is
the operation which sends $F$ to
$F+\mathbf{v}^T\mathbf{f}_i+\mathbf{f}_i^T\mathbf{v}$.

Notice that, since $\mathbf{v}$ is zero on the neighbors of $i$, for any
$j$ and $k$ either $(\mathbf{v}^T\mathbf{f}_i)_{jk}$ or $(\mathbf{f}_i^T\mathbf{v})_{jk}$ is equal to
zero.
\end{definition}

To get an idea on why we call this operation the generalized local
complementation, let us consider the binary case. In this special case
$\mathbf{v}$ corresponds to a subset of vertices ($j$ belongs to this
set iff $\mathbf{v}_j= 1$). Then this operation is the same as to
replace the bipartite graph induced on the neighbors of $i$ and the
vertices in $\mathbf{v}$ with its complement. (For an example, see
FIG.~\ref{fig:localcomp}.)

\begin{theorem}
(Encoding circuit interpreted as generalized local complementation)
\label{th:glc} Consider a circuit which corresponds to a
graph $\mathcal{F}$ with the adjacency matrix $F$. Let $i$ be a vertex of $\mathcal{F}$ (or equivalently a qupit in
the circuit), and let $\mathbf{v}$ be a vector which is zero on $i$ and its
neighbors. Suppose we change the circuit by applying $H^{\dagger}$
on the $i$-th qupit, $C_z^{v_j}$ ($v_j$ is the $j$-th coordinate of $\mathbf{v}$) on the qupits $i$ and $j$, for any
$j$, and then $H$ on the $i$-th qupit. Then the resulting circuit
is equivalent to the graph $\mathcal{F}$ after the generalized local
complementation at $(i,\mathbf{v})$.
\end{theorem}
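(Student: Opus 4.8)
The plan is a direct bookkeeping computation on the stabilizer matrix, in the same spirit as the proofs of Lemmas~\ref{lem:graph-circuit}--\ref{lem:c-z}. A circuit that ``corresponds to'' $\mathcal{F}$ produces the graph state $\ket{\psi}_{\mathcal{F}}$, whose stabilizer group is represented by $M_0=(I_n\,\vert\,F)$; the three modification steps act only on the $n$ qupits carrying this state, so it suffices to follow $M_0$ through them using Lemmas~\ref{lem:hadamard} and \ref{lem:c-z}, and then to bring the result back to standard form $(I_n\,\vert\,F')$ via Lemma~\ref{lem:generator}. The target is to show $F'=F+\mathbf{v}^T\mathbf{f}_i+\mathbf{f}_i^T\mathbf{v}$.

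First I would record the column operations. Conjugation by $H^{\dagger}$ sends a generator $X^aZ^b$ to a phase times $X^bZ^{-a}$, so applying $H^{\dagger}$ on qupit $i$ exchanges columns $i$ and $n+i$ of $M_0$ and negates the new column $n+i$: the $X$-block becomes $I_n$ with column $i$ replaced by $\mathbf{f}_i^T$ (the $i$-th column of $F$), and the $Z$-block becomes $F$ with column $i$ replaced by $-\mathbf{e}_i^T$. Next, by Lemma~\ref{lem:c-z}, applying $C_z^{v_j}$ on qupits $i$ and $j$ adds $v_j$ times $X$-column $i$ to $Z$-column $j$ and $v_j$ times $X$-column $j$ to $Z$-column $i$; summing over all $j$, and using $v_i=0$ together with the fact that now $X$-column $i=\mathbf{f}_i^T$ while $X$-column $j=\mathbf{e}_j^T$ for $j\neq i$, the $X$-block is unchanged and the $Z$-block becomes $F+\mathbf{f}_i^T\mathbf{v}$ except that its $i$-th column is $-\mathbf{e}_i^T+\mathbf{v}^T$. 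Finally, $H$ on qupit $i$ (Lemma~\ref{lem:hadamard}) exchanges columns $i$ and $n+i$ and negates the new column $i$, leaving the matrix $(A\,\vert\,B)$ with $A=I_n-\mathbf{v}^T\mathbf{e}_i$ and $B=F+\mathbf{f}_i^T\mathbf{v}$.

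The remaining step is to re-standardize, and this is where the hypothesis that $\mathbf{v}$ vanishes on $i$ and its neighbours enters. Because $\mathbf{e}_i\mathbf{v}^T=v_i=0$, the matrix $A=I_n-\mathbf{v}^T\mathbf{e}_i$ is invertible with $A^{-1}=I_n+\mathbf{v}^T\mathbf{e}_i$, so by Lemma~\ref{lem:generator} the modified circuit produces the stabilizer state with standard-form matrix $(I_n\,\vert\,A^{-1}B)$. Expanding, $A^{-1}B=F+\mathbf{f}_i^T\mathbf{v}+\mathbf{v}^T(\mathbf{e}_iF)+\mathbf{v}^T(\mathbf{e}_i\mathbf{f}_i^T)\mathbf{v}$; since $\mathbf{e}_iF=\mathbf{f}_i$ is the $i$-th row of $F$ and $\mathbf{e}_i\mathbf{f}_i^T=f_{ii}=0$, this collapses to $F+\mathbf{f}_i^T\mathbf{v}+\mathbf{v}^T\mathbf{f}_i$, which is precisely the adjacency matrix of $\mathcal{F}$ after the generalized local complementation at $(i,\mathbf{v})$. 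One should also note that $\mathbf{f}_i^T\mathbf{v}+\mathbf{v}^T\mathbf{f}_i$ is symmetric and has zero diagonal (its $(j,j)$ entry is $2v_jf_{ij}$, which vanishes since $v_j=0$ whenever $f_{ij}\neq0$), so the output really is a graph adjacency matrix; hence the modified circuit is equivalent to a circuit for that graph.

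I do not expect a genuine obstacle: the computation is mechanical over $\mathbb{F}_p$, the only care needed being the signs that distinguish $H$ from $H^{\dagger}$ (harmless when $p=2$). The one conceptually load-bearing point is that after the three steps the $X$-block is no longer the identity; it is exactly the condition $v_i=0$ (part of the standing hypothesis on $\mathbf{v}$) that makes it invertible and so lets Lemma~\ref{lem:generator} recover the standard graph form, and keeping track of that inversion cleanly is the crux.
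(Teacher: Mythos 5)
Your proposal is correct and follows essentially the same route as the paper's proof: track the matrix $(I_n\,\vert\,F)$ through $H^{\dagger}$, the $C_z^{v_j}$ gates, and $H$ via Lemmas~\ref{lem:hadamard} and \ref{lem:c-z}, then restore the standard form $(I_n\,\vert\,\cdot)$ by a change of generators (Lemma~\ref{lem:generator}), which is exactly the paper's left-multiplication by $\bigl(\begin{smallmatrix}1 & 0\\ \mathbf{v}'^T & 1\end{smallmatrix}\bigr)$, i.e.\ your $A^{-1}=I_n+\mathbf{v}^T\mathbf{e}_i$. The only differences are cosmetic (you keep a general index $i$ and outer-product notation where the paper takes $i=1$ in block form, and you add the harmless check that the result is a valid zero-diagonal adjacency matrix).
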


\begin{proof} 
For simplicity assume $i=1$, and let $\mathbf{f}_1 = (0 \,\,\,
\mathbf{s})$, where $\mathbf{f}_1$ is the first row of $F$. Also, let
$\mathbf{v}=(0\,\,\, \mathbf{v}')$, and $\mathcal{F'}$ be the graph
obtaining from $\mathcal{F}$ by removing its first vertex (and $F'$
its adjacency matrix). Then the stabilizer group corresponding to the
circuit is represented by
\begin{eqnarray}
(\, I \, \vert \, F\,) = \begin{pmatrix}
  1 & 0 & \vline & 0 & \mathbf{s} \\
  0 & I & \vline & \mathbf{s}^T & F'
\end{pmatrix}.
\end{eqnarray}
Then based on the translation of the action of the Hadamard gate
and controlled-$Z$ gate on the stabilizer group (Lemmas
\ref{lem:hadamard} and \ref{lem:c-z}), we can compute that
stabilizer group after applying those gates as follows:
\begin{eqnarray}
\begin{pmatrix}
  1 & 0 & \vline & 0 & \mathbf{s} \\
  0 & I & \vline & \mathbf{s}^T & F'
\end{pmatrix}  \stackrel{H^{\dagger}}{\longrightarrow}  \begin{pmatrix}  0 & 0 & \vline & -1 & \mathbf{s} \\
  \mathbf{s}^T & I & \vline & 0 & F'
\end{pmatrix}  \stackrel{C_z^{1, \mathbf{v}'}}{\longrightarrow}      \begin{pmatrix}  0 & 0 & \vline & -1 & \mathbf{s} \\
  \mathbf{s}^T & I & \vline & \mathbf{v}'^T & F'+\mathbf{s}^T\mathbf{v}'
\end{pmatrix} \\   \stackrel{H}{\longrightarrow}  \begin{pmatrix}  1 & 0 & \vline & 0 & \mathbf{s} \\
  -\mathbf{v}'^T & I & \vline & \mathbf{s}^T & F'+\mathbf{s}^T\mathbf{v}'
\end{pmatrix}
\end{eqnarray}
Now to relate this stabilizer group to a graph, we change the
set of generators by multiplying the above matrix by
\begin{eqnarray}
\begin{pmatrix}
  1 & 0 \\
  \mathbf{v}'^T & 1
\end{pmatrix},
\end{eqnarray}
which gives
\begin{eqnarray}
\begin{pmatrix}  1 & 0 & \vline & 0 & \mathbf{s} \\
  0 & I & \vline & \mathbf{s}^T & F'+\mathbf{s}^T\mathbf{v}'+\mathbf{v}'^T\mathbf{s}
\end{pmatrix}.
\end{eqnarray}
Hence, the adjacency matrix $F$ of the graph is changed to
$F+\mathbf{v}^T\mathbf{f}_i+\mathbf{f}_i^T\mathbf{v}$.
\end{proof}

A direct corollary of Theorem 2 is the following:
\begin{corollary}
\label{cor:mainsimple}
$\mathcal{Q}_c=(\mathcal{G}_c,\mathcal{C}_c)$,
and the graph $\mathcal{G}_c^{\mathcal{C}_c}$ can
be obtained from the encoding graph
$\mathcal{G}_{\mathcal{Q}_c}^{\mathcal{C}_{\text{out}}\{enc\}}$
via Procedure~\ref{pro:main}.
\end{corollary}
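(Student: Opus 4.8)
The plan is to show that Procedure~\ref{pro:main}, applied to the encoding graph $\mathcal{G}_{\mathcal{Q}_c}^{\mathcal{C}_{\text{out}}\{enc\}}$, reproduces exactly the graph $\mathcal{G}_c^{\mathcal{C}_c}$ whose adjacency matrix is $G_c$ from Eq.~(\ref{eq:con-g}) together with the classical code $\mathcal{C}_c$ of Eq.~(\ref{eq:b-a}), thereby combining Theorem~\ref{thm:main-1} (which identifies $\mathcal{G}_c$ algebraically) with Theorem~\ref{th:glc} (which interprets each ``Hadamard--$C_z$'s--Hadamard'' block of the encoding circuit as a single generalized local complementation). First I would match the steps of Procedure~\ref{pro:main} to the steps of Procedure~\ref{pro:enccon}: Steps~3,4,5 of Procedure~\ref{pro:enccon} consist, for each $i=1,\dots,n'$, of applying $H^\dagger$ on $q_i$, then the controlled-$Z$ gates between $q_i$ and the $i$-th copy of $\mathcal{G}_{\text{in}}$, then $H$ on $q_i$. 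In the encoding graph, $q_i$ is precisely an auxiliary vertex, its neighbors within the graph produced by Steps~1,2 are input vertices and other auxiliary vertices (never output vertices), and the controlled-$Z$'s of Step~4 are exactly the edges from $q_i$ to the output vertices in its inner block. So the vector $\mathbf{v}$ of Theorem~\ref{th:glc} for qupit $q_i$ is supported precisely on the set $S_i$ of output vertices adjacent to $i$, which is disjoint from $N(i)$ — matching the hypothesis of Theorem~\ref{th:glc}. Hence Step~4 is the edge-addition that makes $\mathbf{v}$ nonzero on $S_i$, and applying $H^\dagger, C_z^{v_j}, H$ realizes the generalized local complementation at $(i,\mathbf{v})$.

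Next I would reconcile the bookkeeping. In Procedure~\ref{pro:main} we first \emph{delete} the edges from $i$ to $S_i$ (Step~2), then perform the generalized local complementation with respect to $S_i$ (Step~3), whereas in the circuit picture the $C_z$ edges to $S_i$ are \emph{added} and then the local complementation is performed with $\mathbf{v}$ supported on $S_i$. The point is that after $H^\dagger C_z^{1,\mathbf v'} H$ the final matrix (see the proof of Theorem~\ref{th:glc}) has $\mathbf{s}$ — not $\mathbf{s}+\mathbf{v}'$ — in the off-diagonal block for vertex $i$: the $C_z$ edges to $S_i$ do not survive as edges incident to $i$ after the sandwiching Hadamards, so ``add the $C_z$ edges, then sandwich'' produces the same neighborhood structure as ``delete any spurious edges to $S_i$, then do the generalized local complementation.'' I would make this precise by noting that the only edges incident to an auxiliary vertex $i$ in $\mathcal{G}_{\mathcal{Q}_c}^{\mathcal{C}_{\text{out}}\{enc\}}$ after Steps~1--2 go to input/auxiliary vertices, so Step~2 of Procedure~\ref{pro:main} (delete edges $i$–$S_i$) is vacuous for the initial encoding graph and is only there to keep the statement uniform; the substantive content is exactly Theorem~\ref{th:glc}. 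Then, since the $n'$ local complementations act on disjoint output blocks and their auxiliary vertices only interact through the $\mathcal{G}_{\text{out}}$ subgraph, the order-independence claim in Step~3 follows because $\mathbf{v}^{(i)}$ for different $i$ have disjoint supports within the output vertices and the update $F\mapsto F+\mathbf{v}^T\mathbf{f}_i+\mathbf{f}_i^T\mathbf{v}$ at vertex $i$ does not change $\mathbf{f}_j$ for another auxiliary vertex $j$ (those rows are only modified in output coordinates, where $\mathbf{v}^{(j)}$ is supported but $\mathbf{f}_i$'s contribution there is controlled).

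Finally, removing the auxiliary vertices (Step~4) corresponds to the measurement of $q_1,\dots,q_{n'}$ in the computational basis in Step~6 of Procedure~\ref{pro:enccon}; by the measurement rule of Sec.~\ref{sec:measurement}, deleting a vertex from the graph state gives the graph state on the remaining vertices (after the standard Pauli correction, which we ignore by the convention stated in Sec.~\ref{sec:Sect_II}). The input vertices, if $\mathcal{C}_{\text{out}}$ is linear, survive as the input vertices of $\mathcal{G}_c^{\mathcal{C}_c}$, encoding the classical code $\mathcal{C}_c$; one then reads off that the surviving adjacency matrix on output-plus-input vertices is exactly $G_c$ of Eq.~(\ref{eq:con-g}) with classical part $\mathcal{C}_{\text{in}}\gconcat\mathcal{C}_{\text{out}}$, which is what Theorem~\ref{thm:main-1} asserts. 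The main obstacle, I expect, is the careful verification that the generalized local complementations on distinct auxiliary vertices genuinely commute and that no edges among the auxiliary or input vertices get spuriously altered in a way that would survive the final deletion — in other words, checking that the ``local'' in generalized local complementation really does leave the rest of the encoding graph untouched in the relevant coordinates. This is a finite, structured computation that reduces to the block form already exhibited in Eqs.~(\ref{eq:cal-m})--(\ref{eq:cal-n}), so it is bookkeeping rather than a conceptual difficulty, but it is where the proof must be written with care.
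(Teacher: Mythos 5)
Your overall route is the same as the paper's: identify each $H^\dagger$--$C_z$--$H$ block of Procedure~\ref{pro:enccon} with a generalized local complementation via Theorem~\ref{th:glc}, and the final computational-basis measurements with deletion of the auxiliary vertices as in Sec.~\ref{sec:measurement}; the paper indeed presents the corollary as a direct consequence of Theorem~\ref{th:glc} on exactly these grounds. However, your treatment of Step~2 of Procedure~\ref{pro:main} contains a genuine error. You assert that deleting the edges between an auxiliary vertex $i$ and $S_i$ ``is vacuous for the initial encoding graph.'' It is not: by construction (Sec.~\ref{sec:encodinggraph} and Sec.~\ref{sec:AlternativeProof_1}; see FIG.~\ref{fig:concatenate}B), the encoding graph $\mathcal{G}_{\mathcal{Q}_c}^{\mathcal{C}_{\text{out}}\{enc\}}$ \emph{does} contain edges joining each auxiliary vertex $i$ to the output vertices of its inner block --- these record the controlled-$Z$ gates of Step~4 of Procedure~\ref{pro:enccon}, and they are precisely how $S_i$ is read off in Step~1 of Procedure~\ref{pro:main}. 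If those edges were absent, then $S_i$ would be empty and the procedure would return disjoint copies of $\mathcal{G}_{\text{in}}$; so the vacuity claim is also inconsistent with your own identification $\supp(\mathbf{v})=S_i$. The correct reconciliation is the one you sketch just before this claim and then abandon: those edges are bookkeeping for the gate pattern (the vector $\mathbf{v}$ of Theorem~\ref{th:glc}), not stabilizer-graph edges at the moment $H^\dagger$ is applied; Step~2 removes them so that what remains is the graph $\mathcal{F}$ of Theorem~\ref{th:glc}, with $S_i$ disjoint from the actual neighborhood $N(i)$ as the theorem requires, and Step~3 is then exactly the complementation the theorem produces.

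A smaller flaw: your justification of order-independence --- that the update at $i$ ``does not change $\mathbf{f}_j$ for another auxiliary vertex $j$'' --- is false. If $j\in N(i)$, the row of $j$ acquires the $S_i$ entries; this is exactly how the inter-block edges of the concatenated graph arise (cf.\ FIG.~\ref{fig:encloccomp}C--E and property~3 of Fact~\ref{fact:1}). Order-independence nevertheless holds: the circuit blocks for distinct $i$ act on disjoint sets of qupits and hence commute, and a direct computation shows the accumulated update of the adjacency matrix is symmetric in $i$ and $j$; in any case the corollary does not need it, since one may perform the complementations in the order dictated by the circuit. With Step~2 treated correctly and this repair, your argument coincides with the paper's.
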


Notice that Corollary~\ref{cor:mainsimple} proves our main result in
the case that the inner code encodes a single qupit and the outer
code is linear.

Also, note that the resulting graph of Corollary~\ref{cor:mainsimple}
is consistent with the one given by Theorem~\ref{thm:main-1} since
they both compute the same graph. In other words, the adjacency matrix
of the graph $\mathcal{G}_c^{\mathcal{C}_{c}}$ constructed via
Corollary~\ref{cor:mainsimple} is given by Theorem~\ref{thm:main-1}.

\begin{theorem}
\label{thm:main-2}
The graph $\mathcal{G}_c^{\mathcal{C}_c}$ given by Corollary
\ref{cor:mainsimple}
is equal to the graph given by Eqs.~(\ref{eq:con-g}) and (\ref{eq:b-a}).

\end{theorem}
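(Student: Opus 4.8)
The plan is to verify that the two descriptions of the concatenated code coincide by explicitly running the graph-concatenation procedure (Procedure~\ref{pro:main}) on the encoding graph $\mathcal{G}_{\mathcal{Q}_c}^{\mathcal{C}_{\text{out}}\{enc\}}$ and checking, block by block, that the resulting adjacency matrix equals the matrix $G_c$ of Eq.~(\ref{eq:con-g}) and that the induced classical code is $\mathcal{C}_c$ of Eq.~(\ref{eq:b-a}). First I would write down the adjacency matrix of $\mathcal{G}_{\mathcal{Q}_c}^{\mathcal{C}_{\text{out}}\{enc\}}$ in block form, with vertex blocks ordered as (input vertices for $\mathcal{C}_{\text{out}}$, the $n'$ auxiliary vertices coming from $\mathcal{G}_{\text{out}}$, the $nn'$ output vertices coming from the $n'$ copies of $\mathcal{G}_{\text{in}}$): the auxiliary-vertex block carries $G_{\text{out}}$, the input--auxiliary block carries $C_{\text{out}}=(I_{k'}\; A)$, each output block carries a copy of $H'$ (the lower-right block of $G_{\text{in}}$ from Eq.~(\ref{eq:g})), and each auxiliary vertex $i$ is joined to its copy of output vertices by the vector $\mathbf{y}$ (the off-diagonal block of $G_{\text{in}}$) — this last datum forms $S_i$.

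Next I would carry out the three operations of Procedure~\ref{pro:main} on this matrix symbolically. Step 2 deletes the $\mathbf{y}$-edges from each auxiliary vertex $i$ to $S_i$; after this deletion the neighbourhood $N(i)$ of auxiliary vertex $i$ consists exactly of the input vertices (with incidence given by row $i$ of $C_{\text{out}}$, i.e.\ the vector $(\mathbf{e}_i\;\mathbf{a}_i)$ where $\mathbf{a}_i$ is the $i$-th row of $A$) together with the other auxiliary vertices (with incidence given by row $i$ of $G_{\text{out}}$). Step 3 performs, for each auxiliary $i$, the generalized local complementation at $(i,\mathbf{v}_i)$ where $\mathbf{v}_i$ is supported on $S_i$ with the value $\mathbf{y}$; by the definition of generalized local complementation this adds $\mathbf{v}_i^{T}\mathbf{f}_i+\mathbf{f}_i^{T}\mathbf{v}_i$ to the adjacency matrix, where $\mathbf{f}_i$ is the (post-deletion) $i$-th row. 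Because $\mathbf{v}_i$ is supported only on the output copy $S_i$ and $\mathbf{f}_i$ is supported only on the input and auxiliary blocks, this rank-$\le 2$ update touches precisely the output--input, output--auxiliary and (after summing over all $i$, since distinct $\mathbf{f}_i$ overlap on the auxiliary and input blocks) output--output entries. Finally Step 4 deletes the auxiliary vertices, i.e.\ restricts to the input and output blocks. I would then read off: the input--output block becomes $\mathbf{b}\otimes(I_{k'}\;A)$ shifted appropriately, which is exactly $\mathcal{C}_c=(1\;\mathbf{b})\otimes(I_{k'}\;A)$ after including the trivial input--input part (matching Eq.~(\ref{eq:b-a})); the output--output block becomes $I_{n'}\otimes H' + (\mathbf{b}^{T}\mathbf{b})\otimes G_{\text{out}}$ together with the diagonal $G_{\text{in}}$-part, which is precisely the $(2,2)$ block $K_2$ appearing in the proof of Theorem~\ref{thm:main-1}, and all blocks assemble into $G_{\text{in}}\otimes I_{n'}+\binom{1}{\mathbf{b}^{T}}(1\;\mathbf{b})\otimes G_{\text{out}}$, i.e.\ Eq.~(\ref{eq:con-g}).

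The main bookkeeping obstacle is keeping the two relabellings consistent: the encoding graph orders vertices as "all inputs, all auxiliaries, all copies of $\mathcal{G}_{\text{in}}$", whereas the matrices $G_c$, $\mathcal{S}_c$, $\mathcal{X}_c$, $\mathcal{Z}_c$ in Theorem~\ref{thm:main-1} are written with the qupits of each inner block grouped together and with the symplectic $X\,|\,Z$ split. So the real work is to exhibit the permutation matrix intertwining the two orderings and to check that, under it, the single rank-$\le 2$ generalized-local-complementation update per auxiliary vertex reproduces exactly the cross terms $\mathbf{y}\otimes(I_{k'}\;0)$, $\mathbf{b}\otimes(G_1\;B)$, $\mathbf{b}^T\mathbf{b}\otimes G_2$, etc., that appear in $W$ and $K_2$. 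Since $C_{\text{out}}$ (hence $A$) only enters the update through $\mathbf{f}_i$, which is independent of the copy index, summing the updates over $i=1,\dots,n'$ produces the outer-product structure $(1\;\mathbf{b})^T(1\;\mathbf{b})\otimes G_{\text{out}}$ automatically; verifying this telescoping is the crux, and once it is done the equality with Eqs.~(\ref{eq:con-g}) and~(\ref{eq:b-a}) is immediate, so the theorem follows.
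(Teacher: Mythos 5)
Your overall strategy---running Procedure~\ref{pro:main} symbolically on the block adjacency matrix of $\mathcal{G}_{\mathcal{Q}_c}^{\mathcal{C}_{\text{out}}\{enc\}}$ and comparing with Eqs.~(\ref{eq:con-g}) and (\ref{eq:b-a})---is a legitimate alternative to the paper's argument (the paper instead characterizes the target graph by three structural properties and shows by induction that the procedure's output satisfies them). However, your setup of the encoding graph is wrong, and the error is not cosmetic. Each auxiliary vertex $i$ is joined to its copy $V_i$ of the inner graph by the generator of the \emph{classical} inner code, i.e.\ by $(1\;\mathbf{b})$ from Eq.~(\ref{eq:b}), because those edges come from the controlled-$Z$ gates of the classical encoder $C_{\text{in}}^{\{enc\}}$; the vector $\mathbf{y}$ of Eq.~(\ref{eq:g}) is the adjacency of the first vertex of $\mathcal{G}_{\text{in}}$ \emph{inside} the copy and never leaves it. Likewise each output block carries a full copy of $G_{\text{in}}$ (all $n$ vertices, including the $\mathbf{y}$-edges), not just $H'$. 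Consequently $S_i$ is the support of $(1\;\mathbf{b})$ in $V_i$ and the generalized local complementation at $i$ must be taken at $(i,\mathbf{v}_i)$ with $\mathbf{v}_i$ carrying the values $(1\;\mathbf{b})$. With your choice $\mathbf{v}_i=\mathbf{y}$ the computation would produce an input--output block $\mathbf{y}\otimes(I_{k'}\;A)$ and cross terms of the form $\mathbf{y}^T\mathbf{y}\otimes G_{\text{out}}$, which do not match Eqs.~(\ref{eq:con-g})--(\ref{eq:b-a}); the $\mathbf{b}^T\mathbf{b}$ terms you quote at the end are inconsistent with your own setup.

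There is a second, independent gap at what you yourself call the crux. You assert that each update $\mathbf{v}_i^T\mathbf{f}_i+\mathbf{f}_i^T\mathbf{v}_i$ has $\mathbf{f}_i$ supported only on the input and auxiliary blocks, and that ``summing over $i$'' then yields the output--output entries. A sum of matrices, each of which vanishes on the output--output block, vanishes on that block; so this bookkeeping can never generate the cross-copy edges $\binom{1}{\mathbf{b}^T}(1\;\mathbf{b})\otimes G_{\text{out}}$. Those edges arise only because the complementations are applied \emph{sequentially}: a complementation at an auxiliary vertex $j$ adjacent to $i$ in $\mathcal{G}_{\text{out}}$ first adds edges from $i$ into $S_j$, so that by the time one complements at $i$ the row $\mathbf{f}_i$ already has support on the output block. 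Handling this composition correctly (and showing the result is order-independent) is exactly what the paper's Fact~\ref{fact:3} does by induction, and it is precisely the step your proposal defers with an incorrect heuristic. A corrected block computation---with $\mathbf{v}_i$ given by $(1\;\mathbf{b})$ and with the rows $\mathbf{f}_i$ updated after each complementation (or an equivalent closed-form account of the composed updates)---would indeed recover Theorem~\ref{thm:main-1} and prove the statement, but as written the proposal does not.
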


\begin{proof} Here we briefly describe a proof only for the binary case, and for
the validity of Eq.~(\ref{eq:con-g}). This proof can simply be captured for the
more general setting.

Based on Procedure~\ref{pro:main}, the graph on which we apply the
generalized local complementation operators has the following
subgraphs: $\mathcal{G}_{\text{out}}$ with auxiliary vertices $\{1,
\dots, n'\}$; and a copy of $\mathcal{G}_{\text{in}}$ with vertex set
$V_i$ for each auxiliary vertex $1\leq i\leq n'$.  Then for each $1\leq
i\leq n'$ we apply the generalized local complementation on $i$ with
respect to $S_i\subseteq V_i$ which is defined based on the classical
inner code.

\begin{fact}\label{fact:1} Eq.~(\ref{eq:con-g}) describes the unique graph on
the vertex set $\bigcup_i V_i$ with the following structure:
\begin{enumerate}
\item The induced subgraph on $V_i$, for every $i$, is isomorphic to
$\mathcal{G}_{\text{in}}$.
\item For every $i\neq j$, there is no edge between vertices in $V_i$ and
$V_j\setminus S_j$.
\item For every $i\neq j$, there is an edge between vertices $v\in S_i$ and
$w\in S_j$ iff $i$ and $j$ are connected in $\mathcal{G}_{\text{out}}$.
\end{enumerate}
\end{fact}

Clearly the graph with these properties is unique. Also, it is clear
that Eq.~(\ref{eq:con-g}) represents this unique graph.

Based on this fact, we will show that the graph resulting from
Procedure~\ref{pro:main} has the above structure.

\begin{fact}\label{fact:2}
During Procedure~\ref{pro:main} the changes on the subgraph induced on
$\bigcup V_i$ happen only among vertices $v\in S_i$ and $w\in S_j$ for
$i\neq j$. As a result, the final graph of Procedure~\ref{pro:main}
satisfies properties 1 and 2 of Fact~\ref{fact:1}.
\end{fact}

This is simply because in the generalized local complementations we
never touch vertices in $V_i\setminus S_i$. Furthermore,in each step,
$S_i$ is disjoint from $N(i)$ (the neighbors of vertex $i$), so there
is no change in the subgraph induced on the vertex set $S_i$.

\begin{fact}\label{fact:3} In Procedure~\ref{pro:main}, suppose we have applied
the generalized local complementation on vertices $1, 2, \dots, l$, for some
$1\leq l\leq n'$. Then for any choice of $v_i\in S_i$, for $1\leq i\leq l$, the
induced subgraph on vertices $\{v_1, \dots, v_l\} \cup \{l+1, \dots, n'\}$ is
isomorphic to $\mathcal{G}_{\text{out}}$.
\end{fact}

This fact can be proved by a simple induction on $l$.

Now we can prove the theorem. The resulting graph of
Procedure~\ref{pro:main} is a graph on the vertex set $\bigcup_i
V_i$. According to Fact~\ref{fact:2}, this graph satisfies properties
1 and 2 of Fact~\ref{fact:1}. Property 3 of Fact~\ref{fact:1} also
holds based on Fact~\ref{fact:3}. Therefore, by the uniqueness of the
graph described in Fact~\ref{fact:1}, we are done.
\end{proof}

We illustrate the graph obtained by generalized local complementation
for the code $[[25,1,9]]$ which can be obtained by self-concatenation
of the code $[[5,1,3]]$.  As a graph code, the code $[[5,1,3]]$ can be
described by a pentagon corresponding to the output nodes and a
central input node that is connected to all output nodes.  Using
auxiliary nodes, the concatenated code $[[25,1,9]]$ is show as the
left graph in FIG.~\ref{fig:pentagon}.
\begin{figure}[hbt]
\centerline{
\includegraphics[width=2in]{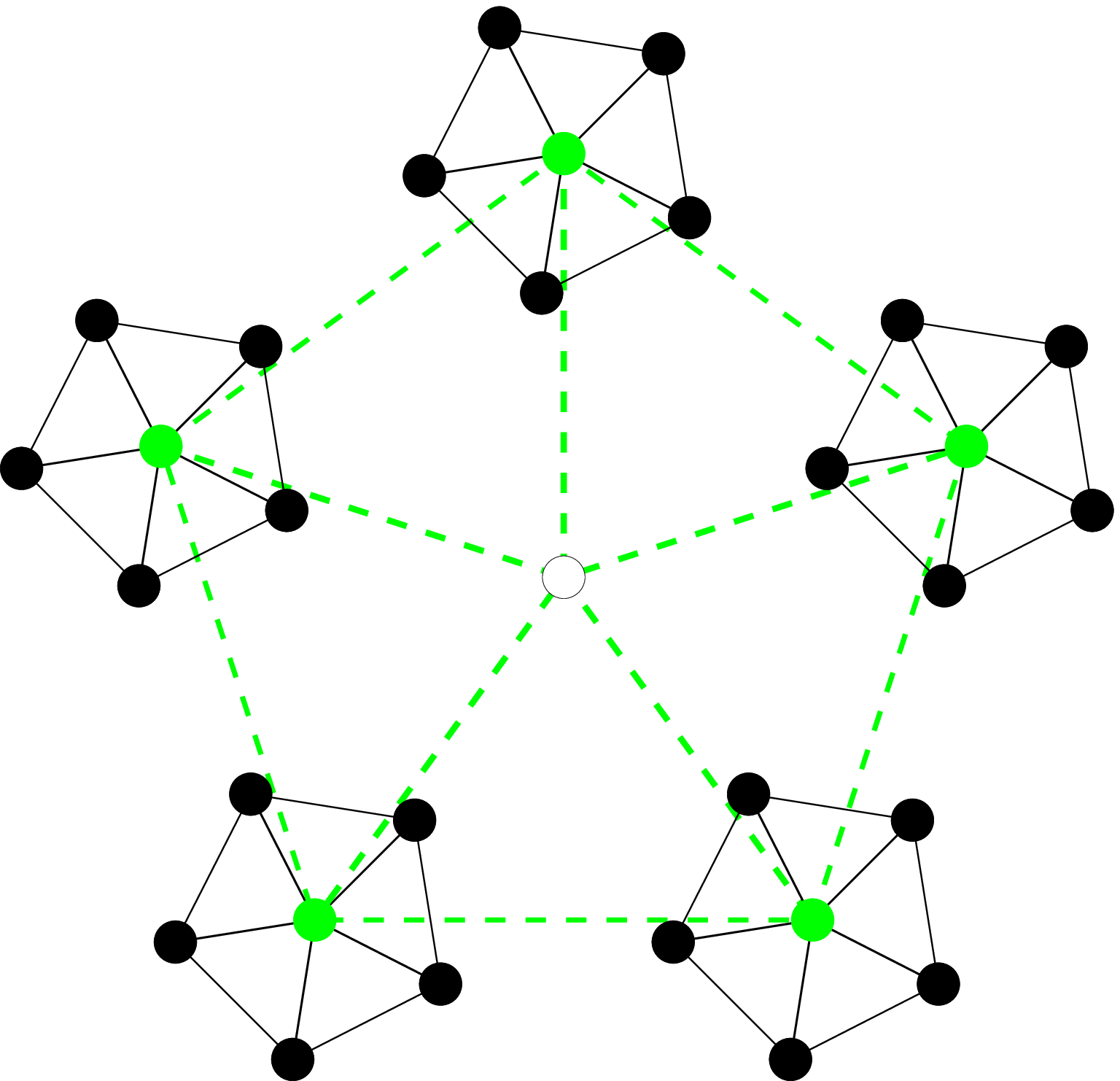}
\kern1in
\includegraphics[width=2in]{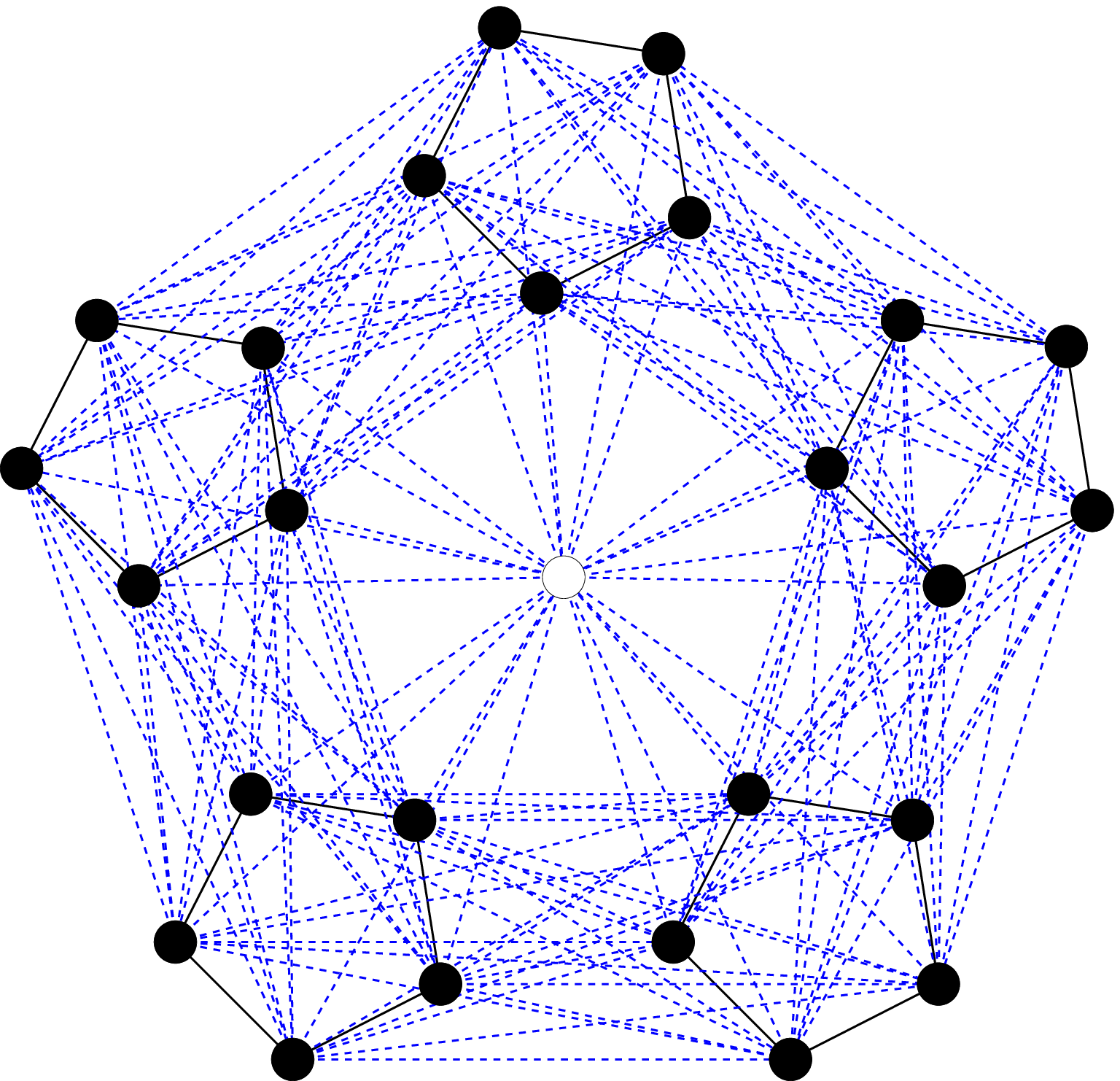}
}
\caption{Self-concatenation of the code $[[5,1,3]]$ yielding a code $[[25,1,9]]$.\label{fig:pentagon}}
\end{figure}

The outer code is given by the large pentagon with green/light dots
and dashed lines.  The five copies of the inner code correspond to the
small pentagons with black dots and solid lines. The final graph is
shown on the right of FIG.~\ref{fig:pentagon}.  The five solid black
pentagons remain, and any vertex in a small pentagon is connected with
blue/dashed lines to any vertex of the neighboring pentagons as well
as the central input node.

The situation for the self-concatenation of Steane's code $[[7,1,3]]$,
which can be realized as a cube, is shown in FIG.~\ref{fig:cube}.
\begin{figure}[hbt]
\centerline{
\includegraphics[width=2in]{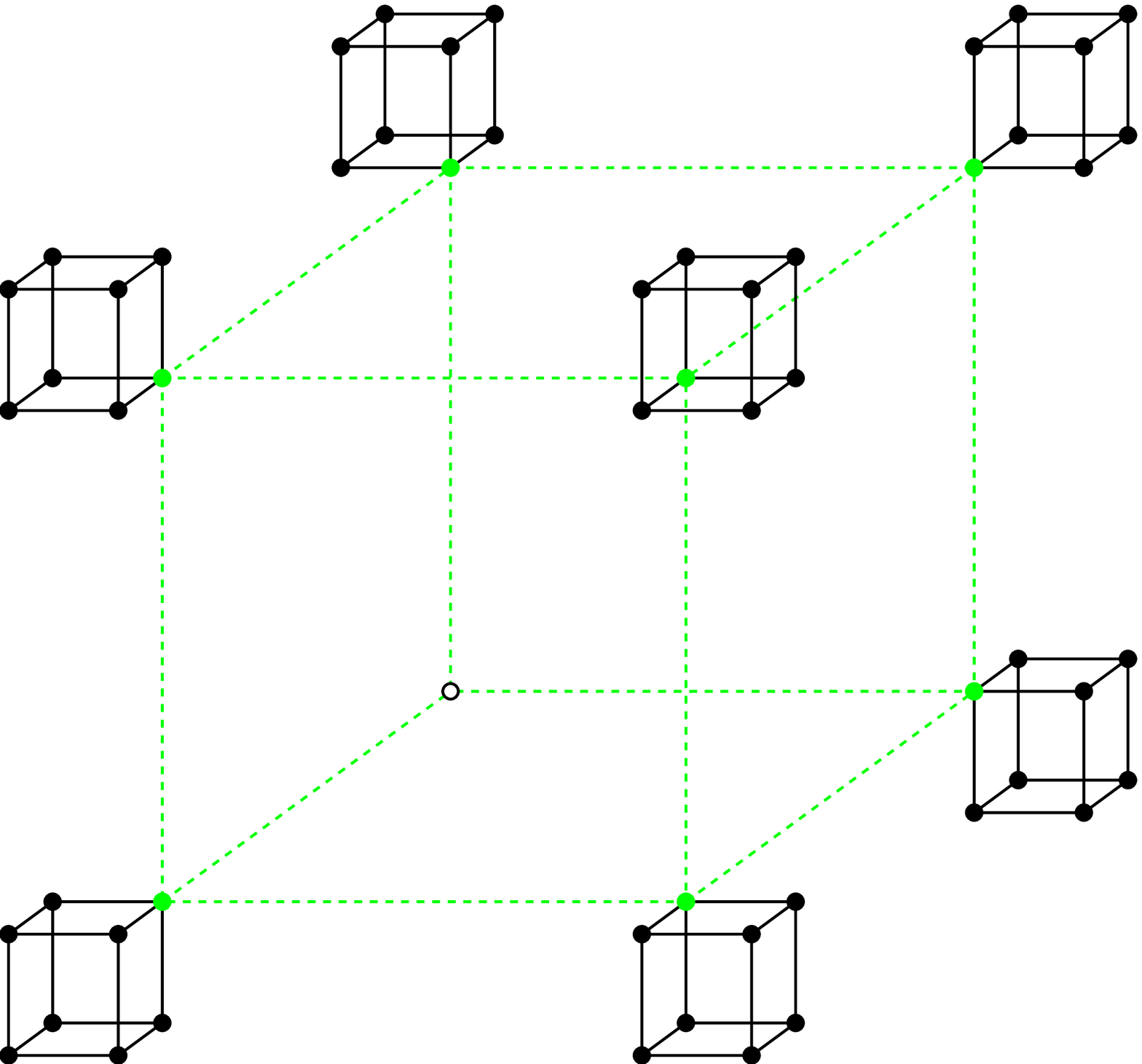}
\kern1in
\includegraphics[width=2in]{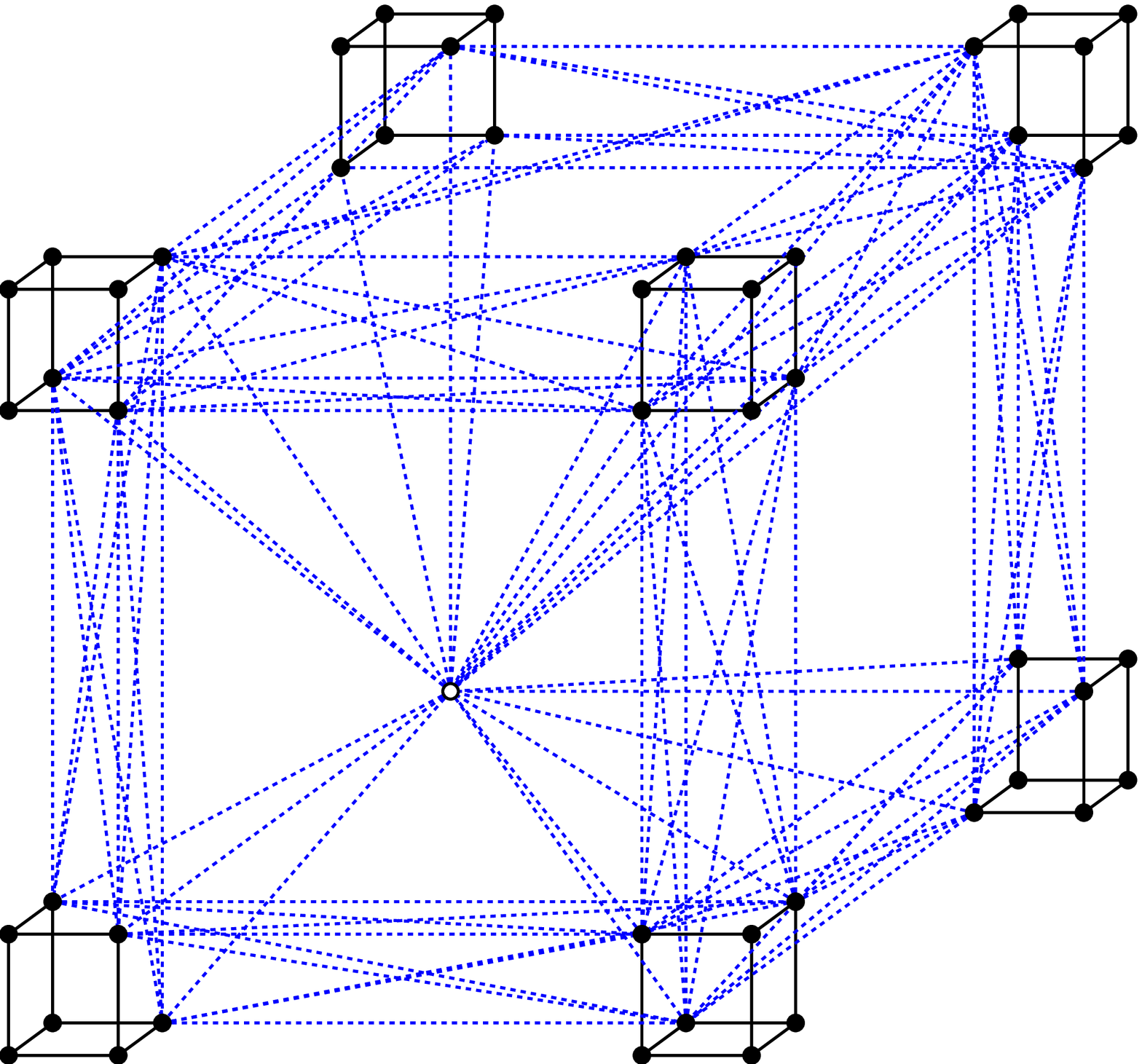}
}
\caption{Self-concatenation of Steane's code $[[7,1,3]]$ yielding a code $[[49,1,9]]$.\label{fig:cube}}
\end{figure}

\subsection{A general outer code}\label{sec:generalized_LC_B}
In this section we consider the case when the outer code is nonadditive.
The advantage of Theorem~\ref{th:glc} is that it  directly applies to this case as well.
\begin{procedure} \label{pro:genout}(Encoding circuit for $\mathcal{Q}_c$
with a general outer code and an inner code encoding a single
qupit)
\begin{enumerate}
\item Apply the encoding circuit of $\mathcal{Q}_{\text{out}}$ that
  encodes $K'$ states into $n'$ qupits which we call $q_1, \dots
  ,q_{n'}$.
\item Apply $n'$ copies of the circuit that gives the graph state
corresponding to $\mathcal{G}_{\text{in}}$.
\item Apply $H^\dagger$ on all qupits $q_1, \dots ,q_{n'}$.
\item Apply the corresponding controlled-$Z$ operators between these qupits and
the graph states of $\mathcal{G}_{\text{in}}$.
\item Apply $H$ on $q_1, \dots ,q_{n'}$.
\item Measure $q_1, \dots ,q_{n'}$ in the computational basis.
\end{enumerate}
\end{procedure}

For an example, see the right circuit of FIG.~\ref{fig:EnConGen}.

Notice that Theorem~\ref{th:glc} deals with Steps 3, 4, 5 in
Procedure~\ref{pro:genout}, which are exactly the same as Steps 3, 4,
5 as in Procedure~\ref{pro:enccon}. Therefore, whether
$\mathcal{C}_{\text{out}}$ is linear or not does not actually
matter. Consequently, Corollary~\ref{cor:mainsimple}, and thus the
main result hold even for nonlinear outer codes.

\subsection{The case $k>1$}\label{sec:generalized_LC_C}

Theorem~\ref{th:glc} can also be directly applied to the case when the inner
code encodes more than one qupit. Again, to see this we only need to specify
the encoding circuit of $\mathcal{Q}_c$.

\begin{procedure} (Encoding circuit for $\mathcal{Q}_c$
with a general outer code and an inner code encoding $k$ qupits)
\begin{enumerate}
\item Apply the encoding circuit of $\mathcal{Q}_{\text{out}}$ that encodes
$K'$ states (or $kk'$ qupits if $\mathcal{C}_{\text{out}}$ is linear)
into $kn'$ qupits which we call $q_1, \dots ,q_{kn'}$.
\item Apply $n'$ copies of the circuit that gives the graph state
corresponding to $\mathcal{G}_{\text{in}}$.
\item Apply $H^\dagger$ on all qupits $q_1, \dots ,q_{kn'}$.
\item Apply the corresponding controlled-$Z$ operators between these qupits and
the graph states of $\mathcal{G}_{\text{in}}$.
\item Apply $H$ on $q_1, \dots ,q_{kn'}$.
\item Measure $q_1, \dots ,q_{kn'}$ in the computational basis.
\end{enumerate}
\end{procedure}

Note that Steps 3, 4, 5 remain the same as those given in Procedure
\ref{pro:enccon}.  Consequently, Corollary~\ref{cor:mainsimple}, and
hence our main result hold for the case of $k>1$.

For an example, the left graph of FIG.~\ref{fig:C422} is the encoding
graph $\mathcal{G}_{\mathcal{Q}_c}^{\mathcal{C}_{\text{out}}\{enc\}}$
of the concatenated code $\mathcal{Q}_c$ with a $[[4,2,2]]_2$ inner
code, and a $[[4,2,2]]_{2^2}$ outer code.  Note that we decompose the
outer code into two copies of a qubit code $[[4,2,2]]_2$.  Hence there
are $kn'=2\times 4=8$ auxiliary vertices (green/light vertices) in
$\mathcal{G}_{\mathcal{Q}_c}^{\mathcal{C}_{\text{out}}\{enc\}}$.  The
corresponding encoding circuit is given by the middle circuit in
FIG.~\ref{fig:C422}, where ``$\slash$" on each line indicates that
there is a set of qubits, not just one.  For instance, the line
corresponding to $\ket{q_0}$ represents the $4$ input qubits ($4$
white vertices in the left graph of FIG.~\ref{fig:C422} ), the line
corresponding to $\ket{q_1}$ represents the $8$ auxiliary qubits, and
the line corresponding to $\ket{q_2}$ represents the $4$ output qubits
of a single inner code $\mathcal{Q}_{\text{in}}$. The graph
$\mathcal{G}_c^{\mathcal{C}_c}$ of the concatenated code
$\mathcal{Q}_c$ can be obtained from the encoding graph
$\mathcal{G}_{\mathcal{Q}_c}^{\mathcal{C}_{\text{out}}\{enc\}}$ by
applying Corollary~\ref{cor:mainsimple}. The result is shown as the
right graph in FIG.~\ref{fig:C422}.  The blue/dashed lines are the
edges obtained by generalized local complementation. 

\begin{figure}[htp]
%old \centerline{
%old \includegraphics[width=1.5in]{C422.eps}
%old \quad
%old \includegraphics[width=2.5in]{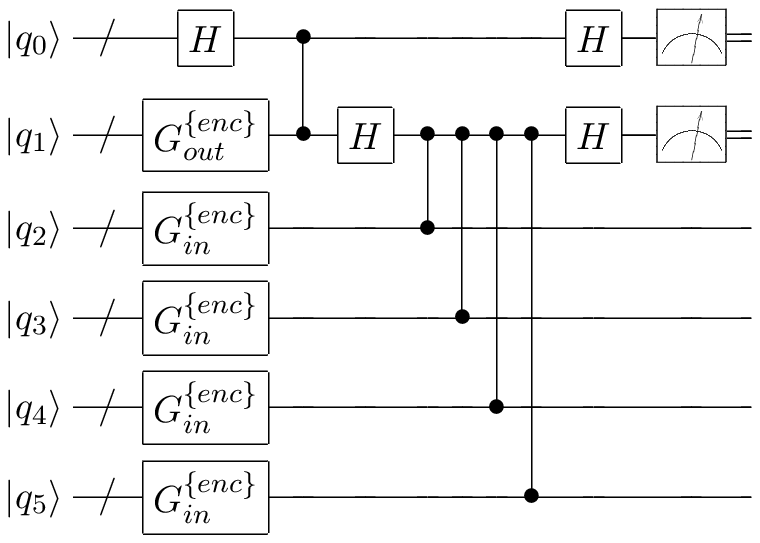}
%old \quad
%old \includegraphics[width=1.5in]{C422re.eps}
%old }
\centerline{
\includegraphics[width=1.5in]{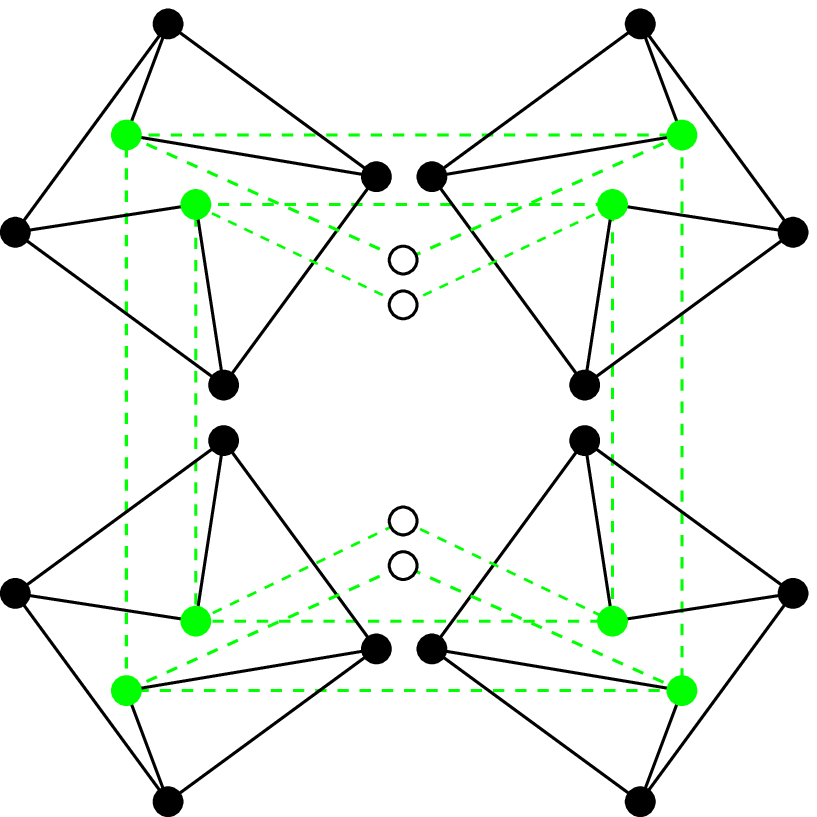}
\quad
\includegraphics[width=2.5in]{EnC422.eps}
\quad
\includegraphics[width=1.5in]{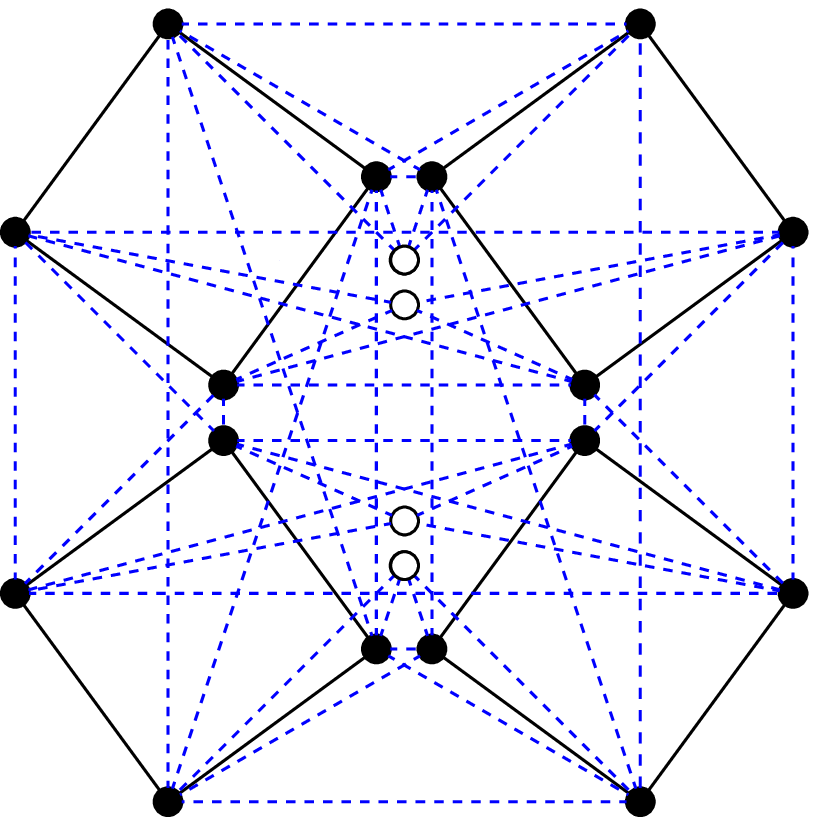}
}
\caption{Graphs and encoding circuit for the concatenated
  $[[16,4,4]]_2$ code obtained by concatenating an inner code
  $[[4,2,2]]_2$ with an outer code $[[4,2,2]]_{2^2}$.}
\label{fig:C422}
\end{figure}

\section{Generalized concatenated codes}\label{sec:Sect_VI}

In this section, we discuss the application of our main result to the
case of generalized concatenated quantum codes (GCQCs). The
construction of GCQCs has been recently introduced in
\cite{GSSSZ,GSZ}.  It resulted in many new QECCS, both stabilizer
codes and nonadditive codes.

A GCQC is derived from an inner quantum code
$\mathcal{Q}_{\text{in}}^{(0)}=((n,q_1q_2\cdots q_r,d_1))_p$, which is
first partitioned into $q_1$ mutually orthogonal subcodes
$\mathcal{Q}_{\text{in}\{{i_1}\}}^{(1)}$ ($0\leq i_1\leq q_1-1$), where each
$\mathcal{Q}_{\text{in}\{{i_1}\}}^{(1)}$ is an $((n,q_2\cdots q_r,d_2))_p$
code. Then each $\mathcal{Q}_{\text{in}\{{i_1}\}}^{(1)}$ is partitioned into
$q_2$ mutually orthogonal subcodes
$\mathcal{Q}_{\text{in}\{{i_1i_2}\}}^{(2)}$ ($0\leq i_2\leq q_2-1$), where
each $\mathcal{Q}_{\text{in}\{{i_1i_2}\}}^{(2)}$ has parameters
$((n,q_3\cdots q_r,d_3))_p$, and so on.  Finally, each
$\mathcal{Q}_{\text{in}\{{i_1i_2\ldots i_{r-2}}\}}^{(r-2)}$ is partitioned
into $q_{r-1}$ mutually orthogonal subcodes
$\mathcal{Q}_{\text{in}\{i_1i_2\ldots i_{r-1}\}}^{(r-1)}=((n,q_r,d_r))_p$ for
$0\leq i_{r-1} \leq q_{r-1}-1$.  Thus
\begin{alignat}{7}
\mathcal{Q}_{\text{in}}^{(0)}=\bigoplus_{i_1=0}^{q_1-1}\mathcal{Q}_{\text{in}\{i_1\}}^{(1)},&\quad &
\mathcal{Q}_{\text{in}\{i_1\}}^{(1)}=\bigoplus_{i_2=0}^{q_2-1}\mathcal{Q}_{\text{in}\{i_1i_2\}}^{(2)},&\quad &
\ldots,\label{eq:inner_dec}
\end{alignat}
and $d_1\le d_2\le\ldots\le d_r$.  In addition, we take as outer codes
a collection of $r$ quantum codes
$\mathcal{Q}_{\text{out}}^{(1)},\ldots,\mathcal{Q}_{\text{out}}^{(r)}$,
where $\mathcal{Q}_{\text{out}}^{(j)}$ is an $((n',K'_j,d'_j))_{q_j}$
code over the Hilbert space $\mathcal{H}_{q_j}^{\otimes n'}$.

The generalized concatenated code $\mathcal{Q}_{gc}$ is a quantum code
in the Hilbert space $\mathcal{H}_{q}^{\otimes nn'}$ of dimension
$K'=K'_1K'_2\cdots K'_r$. The detailed construction of
$\mathcal{Q}_{gc}$ can be found in \cite{GSZ}. Here we only emphasize
that the essence of the ``generalization", which is different from the
usual concatenated quantum codes, is that the outer code is actually a
product of $r$ outer codes, and the inner code is nest-decomposed to
specify how those product of outer codes are encoded into each inner
code. Therefore, similar to a concatenated code $\mathcal{Q}_c$, a
GCQC $\mathcal{Q}_{gc}$ with a graph inner code
\begin{equation}
\mathcal{Q}_{\text{in}}^{(0)}=(\mathcal{G}_{\text{in}}^{(0)},\mathcal{C}_{\text{in}}^{(0)})
\end{equation}
and $r$ CWS outer codes
\begin{equation}
\mathcal{Q}_{\text{out}}^{(j)}=(\mathcal{G}_{\text{out}}^{(j)},\mathcal{C}_{\text{out}}^{(j)})
\end{equation}
naturally has an encoding graph,
denoted by $\mathcal{G}_{\mathcal{Q}_{gc}}^{\{enc\}}$, and the corresponding encoding circuit
is given by the following procedure.

\begin{procedure} (Encoding circuit for generalized
concatenated code $\mathcal{Q}_{gc}$)
\begin{enumerate}
\item Apply the encoding circuits of $\mathcal{Q}_{\text{out}}^{(j)}$
  that encodes $K'_j$ states (or $k'_j\log_p{q_j}$ qupits if
  $\mathcal{C}_{\text{out}}^{(j)}$ is linear) into $n'\log_p{q_j}$
  qupits which we call $q_1, \dots ,q_{n'\log_p{q_j}}$.
\item Apply $n'$ copies of the circuit that gives the graph state
corresponding to $\mathcal{G}_{\text{in}}$.
\item For each $j=1,\ldots,r$, apply $H^\dagger$ on all qupits $q_1, \dots ,q_{n'\log_p{q_j}}$.
\item Apply the corresponding controlled-$Z$ operators between these qupits and
the graph states of $\mathcal{G}_{\text{in}}$.
\item For each $j=1,\ldots,r$, apply $H$ on $q_1, \dots ,q_{n'\log_p{q_j}}$.
\item For each $j=1,\ldots,r$, measure $q_1, \dots ,q_{n'\log_p{q_j}}$
  in the computational basis.
\end{enumerate}
\end{procedure}

Notice that Steps 3, 4, 5 remain the same as those given in Procedure
\ref{pro:enccon}.  Consequently, a similar result as in Corollary
\ref{cor:mainsimple} holds for constructing GCQCs as well.
\begin{corollary}
\label{cor:gc}
$\mathcal{Q}_{gc}=(\mathcal{G}_{gc},\mathcal{C}_{gc})$,
where $\mathcal{G}_{gc}$ can be obtained for the encoding graph
$\mathcal{G}_{\mathcal{Q}_{gc}}^{\{enc\}}$ via Procedure~\ref{pro:main}
and $\mathcal{C}_{gc}$ is the classical generalized concatenated code
with inner code $\mathcal{C}_{\text{in}}^{(0)}$ (with corresponding decomposition
given by the decomposition of $\mathcal{Q}_{\text{in}}^{(0)}$,
see \cite{GSZ} for details) and the outer codes $\mathcal{C}_{\text{out}}^{(j)}$
($j=1,\ldots,r$).
\end{corollary}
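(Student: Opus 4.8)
The plan is to derive this corollary from Theorem~\ref{th:glc} in exactly the same way that Corollary~\ref{cor:mainsimple} was derived, exploiting the fact that the encoding circuit for $\mathcal{Q}_{gc}$ differs from the one in Procedure~\ref{pro:enccon} only in Steps~1, 2 and~6 — that is, only in how the outer data is prepared and how the outer qupits are measured out — whereas Steps~3, 4 and~5, the ``quantum gluing'' of the outer qupits onto the $n'$ copies of the inner graph state $\mathcal{G}_{\text{in}}^{(0)}$, are literally the same operations. Since all the genuinely quantum content of the construction lives in those three steps, and Theorem~\ref{th:glc} already tells us how to read them graphically, the passage to GCQCs will be essentially a matter of describing the appropriate encoding graph and then chasing definitions on the classical side.

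First I would make precise the encoding graph $\mathcal{G}_{\mathcal{Q}_{gc}}^{\{enc\}}$: it is built by placing the encoding graphs of the $r$ outer codes $\mathcal{Q}_{\text{out}}^{(j)}$ side by side, adjoining $n'$ disjoint copies of $\mathcal{G}_{\text{in}}^{(0)}$, and then inserting, for each outer qupit $q_i$, the edges recorded by the controlled-$Z$ gates of Step~4 — these labelled edges encode how the nest-decomposition $\mathcal{Q}_{\text{in}}^{(0)}=\bigoplus_{i_1}\mathcal{Q}_{\text{in}\{i_1\}}^{(1)}=\cdots$ specifies the embedding of the product of outer codewords into a given inner block. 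The auxiliary vertices of $\mathcal{G}_{\mathcal{Q}_{gc}}^{\{enc\}}$ are precisely these outer qupits. Now I would invoke Theorem~\ref{th:glc} once for each auxiliary vertex $i$: the composite ``$H^\dagger$ on $q_i$, then $C_z^{(\mathbf{v}_i)_j}$ between $q_i$ and inner-copy qupit $j$, then $H$ on $q_i$'' from Steps~3--5 is exactly a generalized local complementation at $(i,\mathbf{v}_i)$, where $\mathbf{v}_i$ is the incidence vector of the set $S_i$ of inner-copy qupits joined to $q_i$. The hypothesis of Theorem~\ref{th:glc} — that $\mathbf{v}_i$ vanishes on $i$ and on its current neighbours — holds because at the moment Step~4 acts an outer qupit $q_i$ is adjacent only to vertices inside the outer-code encoding graphs, while the Step~4 gates join it to fresh vertices of the inner copies, so $S_i\cap N(i)=\emptyset$. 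As Procedure~\ref{pro:main} records, the result is independent of the order of the complementations, so performing them all and deleting the auxiliary vertices is well defined and produces a graph $\mathcal{G}_{gc}$ with $\mathcal{Q}_{gc}=(\mathcal{G}_{gc},\mathcal{C}_{gc})$ and $\mathcal{G}_{gc}$ obtained from $\mathcal{G}_{\mathcal{Q}_{gc}}^{\{enc\}}$ via Procedure~\ref{pro:main}.

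Next I would identify $\mathcal{C}_{gc}$. Since Steps~3--5 are purely graph-encoder operations (Hadamards and controlled-$Z$'s acting on the graph-state side), they leave untouched the classical data carried by the $Z$-type rows; hence the classical content of $\mathcal{Q}_{gc}$ is inherited verbatim from Steps~1--2, i.e.\ from feeding the $r$ outer codes $\mathcal{C}_{\text{out}}^{(j)}$ through the classical encoders attached to the nest-decomposition of $\mathcal{C}_{\text{in}}^{(0)}$. By the definition of the classical generalized concatenated code in \cite{GSZ}, this is exactly $\mathcal{C}_{gc}$. In the fully linear case one can make this explicit by computing the $Z$-rows via Eqs.~(\ref{eq:st})--(\ref{eq:l-z}) and recognizing the layered tensor/coset pattern of Theorem~\ref{thm:main-1}, now with $r$ nested layers instead of one; the non-binary case goes through verbatim as in the earlier sections.

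The step I expect to be the main obstacle is purely bookkeeping rather than conceptual: making the correspondence between the combinatorial data of the classical GCQC (the $r$ outer codes together with the nest-decomposition specifying the cosets of $\mathcal{C}_{\text{in}}^{(0)}$ picked out by $\mathcal{Q}_{\text{in}\{i_1\ldots i_{r-1}\}}^{(r-1)}$) and the vectors $\mathbf{v}_i$ driving the generalized local complementations completely explicit, so that one can read off that the classical code emerging from Procedure~\ref{pro:main} is precisely the classical generalized concatenation and not merely ``some'' classical code. Once the encoding graph is described with enough care that every controlled-$Z$ of Step~4 carries the label dictated by the inner decomposition, this matching is forced, and the corollary follows from Theorem~\ref{th:glc} exactly as Corollary~\ref{cor:mainsimple} did.
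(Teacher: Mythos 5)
Your proposal is correct and follows essentially the same route as the paper: the paper likewise observes that Steps 3, 4, 5 of the GCQC encoding circuit coincide with those of Procedure~\ref{pro:enccon}, so Theorem~\ref{th:glc} applies vertex-by-vertex and the conclusion follows exactly as in Corollary~\ref{cor:mainsimple}, with the classical part identified as the classical generalized concatenated code via the decomposition data of \cite{GSZ}. Your additional bookkeeping on the classical side is a more explicit version of what the paper delegates to that reference, not a different argument.
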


For an example, the left graph of FIG.~\ref{fig:GC422} is the encoding
graph
$\mathcal{G}_{\mathcal{Q}_{gc}}^{\mathcal{C}_{\text{out}}^{(0)}\{enc\}}$
of the GCQC $\mathcal{Q}_{gc}$ with a $[[4,2,2]]_2$ inner code code
that is decomposed into two copies of a code $[[4,1,2]]_2$.  There are
two different outer codes $[[4,4,1]]_2$ and $[[4,2,2]]_2$.  Note that
there are $4+2=6$ input vertices (white vertices) and $8$ auxiliary
vertices (green/light vertices).  The corresponding encoding circuit
is given by the middle circuit in FIG.~\ref{fig:GC422}, where
``$\slash$" on each line means that the line actually represents a set
of qubits.  For instance, the line corresponding to $\ket{q_{00}}$
represents the $4$ input qubits of the $[[4,4,1]]_2$ outer code, the
line corresponding to $\ket{q_{01}}$ represents the $2$ input qubits
of the $[[4,2,2]]_2$ outer code, the lines corresponding to
$\ket{q_{10}}$ and $\ket{q_{11}}$ represents the $4$ auxiliary qubits
of the $[[4,4,1]]_2$ and the $[[4,2,2]]_2$ outer codes, respectively,
and the line corresponding to $\ket{q_2}$ represents the $4$ output
vertices in a single $\mathcal{Q}_{\text{in}}$. To obtain the graph
$\mathcal{G}_{gc}^{\mathcal{C}_{gc}}$ of the concatenated code
$\mathcal{Q}_{gc}$ from the encoding graph
$\mathcal{G}_{\mathcal{Q}_{gc}}^{\mathcal{C}_{\text{out}}^{(0)}\{enc\}}$
apply Corollary~\ref{cor:gc}. The result is shown as the right graph
in FIG.~\ref{fig:C422}.

\begin{figure}[htp]
%old \centerline{
%old \includegraphics[width=1.5in]{GC422.eps}
%old \quad
%old \includegraphics[width=2.5in]{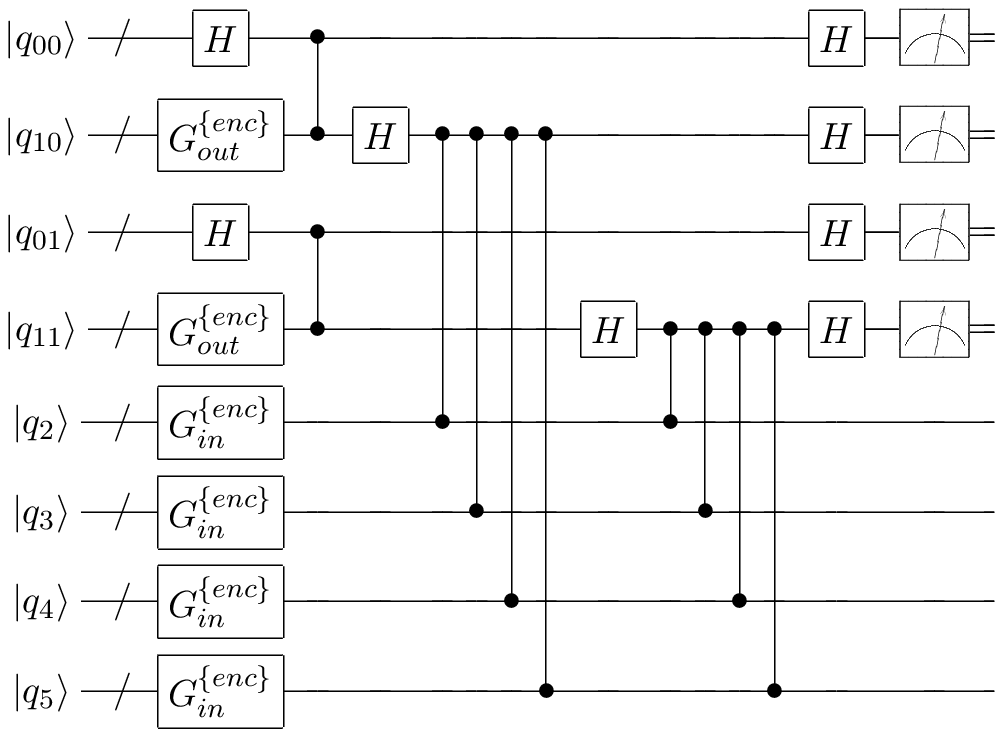}
%old \quad
%old \includegraphics[width=1.5in]{GC422re.eps}
%old }
\centerline{
\includegraphics[width=1.5in]{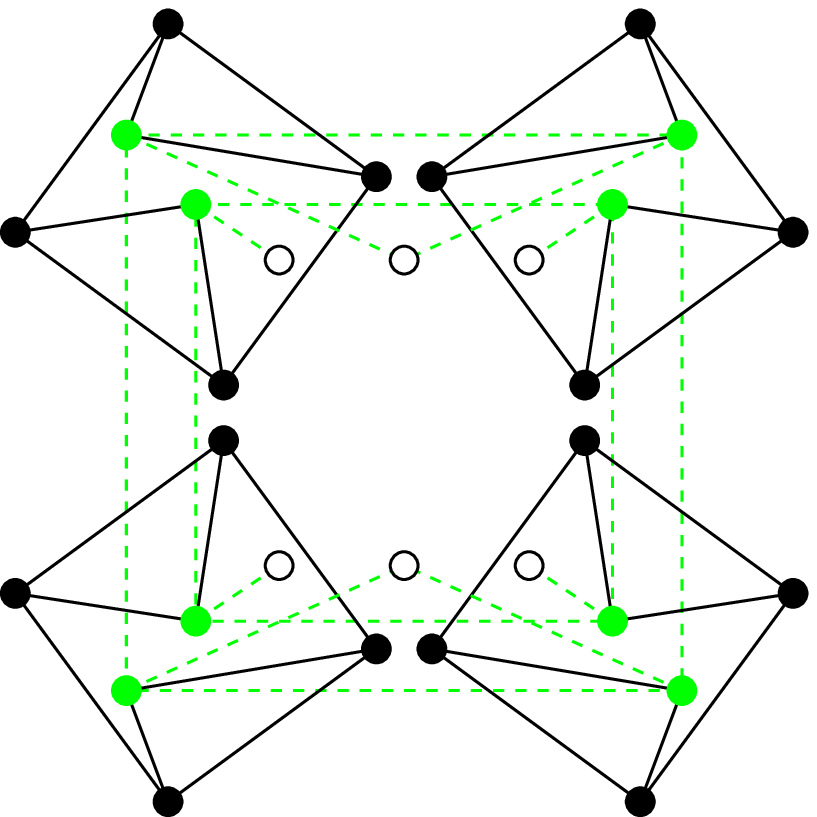}
\quad
\includegraphics[width=2.5in]{EnGC422.eps}
\quad
\includegraphics[width=1.5in]{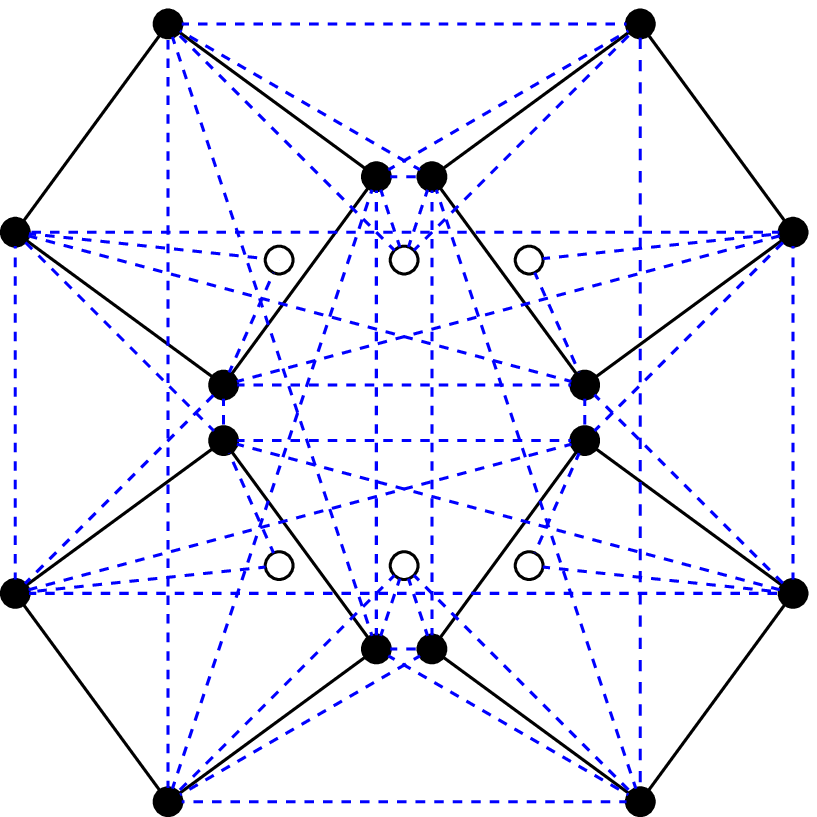}
}
\caption{Graphs and encoding circuit for the generalized concatenated
  $[[16,6,2]]_2$ code, derived from an inner code $[[4,2,2]]_2$ and
  outer codes $[[4,2,2]]_2$ and $[[4,4,1]]_2$.}
\label{fig:GC422}
\end{figure}

\section{Conclusion and discussion}\label{sec:Sect_VII}

In this paper we develop a systematic method for constructing
concatenated quantum codes based on ``graph concatenation", where
graphs representing the inner and outer codes are concatenated via a
simple graph operation called ``generalized local complementation."
The outer code is chosen from a large class of quantum codes, called
CWS codes, which includes all the stabilizer codes as well as many
good nonadditive codes. The inner code is chosen to be a stabilizer
code. Despite the restriction that the inner code must be a stabilizer
code, our result applies to very general situations---both binary and
nonbinary concatenated quantum codes, and their generalizations.

Our results indicate that graphs indeed capture the ``quantum part" of
the QECCs. Once the graph part is taken care of, the construction of
quantum code is reduced to a pure classical problem.  This was
essentially the idea of the CWS framework (i.e., the problem of
constructing a CWS quantum code is reduced to the problem of finding a
classical code with error patterns induced by a given graph).  Here we
have demonstrated that this idea extends to the construction of
(generalized) concatenated quantum codes as well (i.e., to construct
(generalized) concatenated quantum codes, given the rule of graph
concatenation, one only needs to construct the (generalized) classical
concatenated codes).  We believe that our results shed light on the
further understanding of the role that graphs play in the field of
quantum error correction and other related areas in quantum
information theory.

\textit{Acknowledgment} 
We thank Runyao Duan for helpful discussions. BZ is supported by NSERC
and QuantumWorks.  Centre for Quantum Technologies is a Research
Centre of Excellence funded by Ministry of Education and National
Research Foundation of Singapore.

%%%%%%%%%%%%%%%%%%%%%%%%%%%%%%%%%%%%%%%%%%%%%%%%%%%%%%%%%%%%%%%%%%%%%%%%%%%%%

\end{document}